\newcommand{\greencheck}{\ding{51}}
\newcommand{\redx}{$\times$}   %
\newcommand{\citet}{\textcite}
\newcommand{\indicator}[1]{\mathbf{1}\{#1\}}
\newcommand\independent{\protect\mathpalette{\protect\independenT}{\perp}}
    \def\independenT#1#2{\mathrel{\setbox0\hbox{$#1#2$}%
    \copy0\kern-\wd0\mkern4mu\box0}}
\newcommand{\E}{\mathbb{E}}
\renewcommand{\P}{\mathrm{P}}
\newcommand{\F}{\mathrm{F}}
\newcommand{\Q}{\mathrm{Q}}
\newcommand{\N}{\mathcal{N}}
\newcommand{\C}{\mathcal{C}^\ast}
\newcommand{\G}{\mathcal{G}^{\star}}
\newcommand{\Var}{\text{Var}}
\newcommand{\T}{T}
\newcommand{\ATT}{\text{ATT}}
\newcommand{\K}{\text{K}}
\newcommand{\oracle}[1]{\overset{\circ}{#1}} %
\renewcommand{\d}{\mathtt{d}} %
\definecolor{derekcolor}{rgb}{0.05, 0.35, 0.05} %
\newcommand{\point}[1]{}
\newcommand{\hide}[1]{}
\theoremstyle{definition}
\newtheorem{theorem}{Theorem}
\newtheorem{lemma}{Lemma}
\newtheorem{assumption}{Assumption}
\newtheorem{proposition}{Proposition}
\newtheorem{example}{Example}[section]%
\newtheorem{remark}{Remark}[section]%
\newenvironment{namedassumption}[1]
  {\inneruassumption}
  {\endinneruassumption}
\crefname{figure}{Figure}{Figures}
\crefname{assumption}{Assumption}{Assumptions}
\crefname{inneruassumption}{Assumption}{Assumptions}
\crefname{appendix}{appendix}{appendices}
\Crefname{appendix}{Appendix}{Appendices}
\title{\textbf{Robust Panel Data Causal Inference by Comparison Group Selection}}
\title{\textbf{Identification-Strategy-Robust Panel Data Causal Inference}}
\title{\textbf{Beyond Parallel Trends: An Identification-Strategy-Robust Approach to Causal Inference with Panel Data}}
\author{Brantly Callaway\footnote{John Munro Godfrey, Sr.~Department of Economics, University of Georgia. \texttt{brantly.callaway@uga.edu}} \and Derek Dyal\footnote{John Munro Godfrey, Sr.~Department of Economics, University of Georgia. \texttt{ddyal@uga.edu}} \and Pedro H.C.~Sant'Anna\footnote{Department of Economics, Emory University. \texttt{pedro.santanna@emory.edu}} \and Emmanuel S.~Tsyawo\footnote{Department of Economics, Finance and Legal Studies, University of Alabama. \texttt{estsyawo@gmail.com}}}
\begin{document}

\maketitle

\abstract{In this paper, we propose a new approach to causal inference with panel data.  Instead of using panel data to adjust for differences in the distribution of unobserved heterogeneity between the treated and comparison groups, we instead use panel data to search for ``close comparison groups''---groups that are similar to the treated group in terms of pre-treatment outcomes.  Then, we compare the outcomes of the treated group to the outcomes of these close comparison groups in post-treatment periods.  We show that this approach is often \textit{identification-strategy-robust} in the sense that our approach recovers the $\ATT$ under many different non-nested panel data identification strategies, including difference-in-differences, change-in-changes, or lagged outcome unconfoundedness, among several others.  We provide related, though non-nested, results under ``time homogeneity'', where outcomes do not systematically change over time for any comparison group.  Our strategy asks more out of the research design---namely that there exist close comparison groups or time homogeneity (neither of which is required for most existing panel data approaches to causal inference)---but, when available, leads to more credible inferences.
}

\bigskip

\bigskip

\noindent {\bfseries {JEL Codes:}} C14, C21, C23

\bigskip

\noindent {\bfseries {Keywords:}} Robust Causal Inference, Panel Data, Comparison Group Selection, Time Homogeneity, Interactive Fixed Effects, Latent Unconfoundedness

\bigskip

\bigskip

\normalsize

\onehalfspacing
\section{Introduction}
Panel data approaches to causal inference, such as difference-in-differences, are often categorized with instrumental variables (IV) and regression discontinuity (RD) as being the main quasi-experimental approaches to causal inference.  Unlike natural experiment-based approaches to causal inference, such as IV and RD, where the full identification strategy essentially holds as a byproduct of the research design, panel data approaches to causal inference typically require auxiliary assumptions that are not implied by the research design.  A leading example is the parallel trends assumption in the context of difference-in-differences.  This distinction often results in panel data approaches being referred to as \textit{model-based} in contrast to \textit{design-based}, with the implication being that they are less credible, in the sense of relying on auxiliary assumptions not implied by the research design. For example, it is often \textit{ex ante} unclear whether the parallel trends assumption is more plausible than the assumptions required for change-in-changes or lagged outcome unconfoundedness identification strategies in any given application.

In the current paper, we retain the conventional pre/post panel data causal inference research design that aims to exploit variation in treatment timing across units and access to untreated comparison groups.  However, our aim in the current paper is to increase the credibility of causal inferences in pre/post designs by substantially relaxing the auxiliary assumptions required to go from research design to identification strategy.  We mainly consider settings where there are multiple possible comparison groups and where the researcher observes microdata on units within each group.  A leading example is the case where a researcher is interested in learning about the causal effect of a state-level policy, has access to data from multiple states, and observes underlying units such as counties, individuals, or firms.  Instead of imposing particular auxiliary assumptions to move from research design to identification strategy, we instead look for a subset of the available comparison groups that are ``very similar'' to the treated group in terms of their pre-treatment outcomes---we refer to such groups as \textit{close comparison groups}.  We show that, in many cases, close comparison groups that satisfy certain relatively simple criteria in pre-treatment periods provide valid comparison groups under many different auxiliary assumptions coming from different identification strategies.  To give a leading example, we show that, if we can find an untreated comparison group with the same distribution of untreated potential outcomes as the treated group in the period immediately before treatment, then comparing post-treatment outcomes for the treated group to the post-treatment outcomes for this group recovers the $\ATT$  under a difference-in-differences identification strategy, a change-in-changes identification strategy, or a lagged outcome unconfoundedness strategy.  Similarly, we discuss conditions on comparison groups that rationalize using them under other identification strategies, such as unit-specific linear trends, interactive fixed effects models for untreated potential outcomes, and dynamic panel models for untreated potential outcomes.\footnote{Another important motivation for panel data approaches to causal inference is that they present the opportunity for the researcher to adjust for systematic latent differences between the treated group and comparison group.  Many panel data approaches to causal inference can be thought of as trying to use panel data to adjust for these latent differences across groups.  The exact way to make these between-group adjustments is the key distinguishing feature between different panel data identification strategies and is the main source of different estimates arising from different identification strategies. Oftentimes, there is no good way for researchers to rationalize one type of adjustment over another.  Moreover, adjusting for latent differences across groups is obviously more challenging the ``more different'' the groups are.  Instead of trying to use panel data to adjust for latent differences, our approach amounts to using panel data to find groups that have the same distribution of latent characteristics.  If these groups exist, then our strategy fully obviates the need to make adjustments to make the groups more comparable.}

Our approach is conceptually much different from typical approaches to causal inference in pre/post designs, where a researcher chooses an identification strategy and uses pre-treatment periods to validate (typically via a placebo test) the identification strategy.  Instead, we propose to choose a set of comparison groups that are valid for many different auxiliary assumptions required to move from design to identification strategy.  A side-effect of choosing comparison groups that are robust to multiple identification strategies is that the number of applications where our approach can be implemented is smaller than would be the case if a researcher knew that the particular auxiliary assumptions rationalizing some identification strategy held.  For example, one criteria that we use below is that the mean outcome for a candidate comparison group is the same as the mean outcome for the treated group in the pre-treatment period.  We show that, if parallel trends holds, then a level-comparison of mean outcomes between this comparison group and the treated group in the post-treatment period is equal to the $\ATT$.  However, it is possible that there are no comparison groups that satisfy this criteria---a DiD identification strategy would still work in this case, but our approach would not deliver any estimate of the $\ATT$.
 We view this as a feature, rather than a bug, of our approach---if such a comparison group exists, results using our approach will be more robust to different identification strategies (for example, our approach would be robust to certain violations of parallel trends),
 and effectively less dependent on functional form assumptions.  This suggests that inferences using our approach, when it is feasible, are likely to be substantially more credible in a pre/post research design than inferences that rely on auxiliary assumptions such as parallel trends.

We provide several extensions of these results.  First, we provide related results for settings with time homogeneity ---where the outcomes do not systematically change over time for any of the comparison groups.  We show that time homogeneity often provides robustness to multiple identification strategies as well, though the exact kind of robustness differs from using close comparison groups.  Thus, we also propose a way to combine both close comparison groups and time homogeneity to further increase robustness.  Second, we discuss how to extend our results to settings where close comparison groups are formed conditional on covariates.  Third, we discuss how to extend our results to settings with more than one treated group and variation in treatment timing across units.

\subsubsection*{Related Work}

Our approach is related to synthetic control-type methods (\citet{abadie-diamond-hainmueller-2010,arkhangelsky-etal-2021}), which construct weights on untreated units in order to match the treated group's pre-treatment characteristics.  A key difference between our approach and the synthetic control method is that we consider a setting with multiple possible comparison groups where all units within the same comparison group ``count equally''.  This explains why our results apply in settings with few time periods, while synthetic control methods typically require many time periods to estimate unit-specific weights.  The other key difference is that we focus on finding a set of close comparison groups, any of which could be used as a valid comparison group.  This is a more stringent requirement on the research design than finding a synthetic control---i.e., there may be applications where the set of close comparison groups is empty, but one could construct a synthetic control.  However, when close comparison groups exist, our approach leads to more robust inferences as we show that these groups can be used under many different identification strategies.  %
\citet{kellogg-mogstad-pouliot-torgovitsky-2021} discuss the trade-off between interpolation bias from synthetic control and extrapolation bias from matching.   %
\citet{gunsilius-2023,gunsilius-2025} use micro-data and construct group-level comparison groups based on applying weights to either the quantiles or cdfs; this is related to our idea of forming close comparison groups, but we employ a tighter definition of comparison groups.%

Our paper is more closely related to recent work that has explored using multiple comparison groups in panel data settings.  Part of this literature has used multiple comparison groups to increase efficiency (\citet{marcus-santanna-2021,chen-santanna-xie-2025}).  Another part of this literature has used multiple comparison groups to relax the parallel trends assumption (\citet{callaway-tsyawo-2023,sun-xie-zhang-2025,rincon-song-2025,liu-2025}), often with an interactive fixed effects model for untreated potential outcomes as motivation.

Our paper is related to a growing literature on robust causal inference with panel data (\citet{manski-pepper-2018,rambachan-roth-2023,arkhangelsky-imbens-2022,arkhangelsky-imbens-lei-luo-2024,leavitt-hatfield-2025,athey-imbens-qu-viviano-2025}). Finally, our results on time homogeneity are related to the idea of using ``time as an instrument'' (\citet{chernozhukov-val-hahn-newey-2013}) and regression discontinuity in time (\citet{hausman-rapson-2018,cattaneo-diaz-titiunik-2024}).
\section{Setup}

\subsection{Notation and Assumptions}

We consider a setting where a researcher observes $\T$ periods of panel data.  %
Let $D_{it}$ denote unit $i$'s treatment status in time period $t$, and let $D_i = (D_{i1},D_{i2},\ldots, D_{i\T})'$ denote unit $i$'s vector of treatments across all periods.  Let $G_i$ denote a unit's group, and let $\mathcal{G}$ denote the support of $G_i$.
We consider the case where all units in the same group experience the same treatment regime.  This means that, for $g \in \mathcal{G}$, we can write $\mathbf{d}(g)$ to indicate the treatment regime experienced by group $g$.  Next, let $Y_{it}$ denote unit $i$'s observed outcome in period $t$.  Next, we define potential outcomes.  Let $Y_{it}(\mathbf{d})$ denote the outcome unit $i$ would experience in time period $t$ if it experienced treatment regime $\mathbf{d} = (d_1,d_2,\ldots,d_\T)'$.  We use the notation $\mathbf{0} = (0, 0, \ldots, 0)'$ to denote the treatment regime of being untreated in all periods, and refer to $Y_{it}(\mathbf{0})$ as unit $i$'s untreated potential outcome in time period $t$.  And, slightly abusing notation, we often use the shorthand notation $Y_{it}(g)$ for $Y_{it}(\mathbf{d}(g))$, which is the potential outcome for unit $i$ in time period $t$ if it experienced the treatment regime implied by being in group $g$.  We maintain the following assumptions throughout the paper:

\begin{assumption}[Sampling] \label{ass:sampling} The observed data consists of $\big\{Y_{i1}, Y_{i2},\ldots,Y_{i\T}, D_{i1}, D_{i2}, \ldots, D_{i\T}, G_i\big\}_{i=1}^n$, where $n$ denotes the sample size, which are independent and identically distributed.  The outcomes are continuously distributed.
\end{assumption}

\begin{assumption}[Treatment and Groups] \label{ass:groups} All units in the same group experience the same treatment regime.  That is, $G_i$ fully determines $(D_{i1},D_{i2},\ldots,D_{i\T})$.  No units are treated in the first period, i.e., $D_{i1}=0$ for all units.
\end{assumption}

\begin{assumption}[No Anticipation] \label{ass:no-anticipation} For all $t \in \{1,\ldots,\T\}$ and for any unit $i$, $Y_{it}(\mathbf{d}) = Y_{it}(\mathbf{d}^t)$ where $\mathbf{d}^t = (d_1, \ldots, d_t)$
\end{assumption}
\Cref{ass:sampling} says that we observe $\T$ periods of panel data.  \Cref{ass:groups} connects groups and treatments.  \Cref{ass:no-anticipation} says that potential outcomes in a particular period $t$ can depend on the treatment status in the current period or previous periods, but they do not depend on future treatments.

Next, we provide an additional assumption that focuses the discussion in the main text, though we discuss how to relax it later in the paper.

\begin{assumption}[Single Treated Group] \label{ass:single-treated} Only one group participates in the treatment.  Without loss of generality, we use $G=1$ to denote this group.
\end{assumption}
\Cref{ass:single-treated} says that there is a single group that receives the treatment.  Let $t^*_{g}$ denote the time period that group $g$ first participates in the treatment.  Let $\bar{\mathcal{G}}$ denote the set of groups that do not participate in the treatment in any period.  Under \Cref{ass:single-treated}, for simplicity, we let $Y_{it}(1)$ denote the potential corresponding to the treatment regime of being untreated up through period $t^*_1-1$, becoming treated in period $t^*_1$, and then following the same treatment regime experienced by units in group 1.  We do not use the assumption of staggered treatment adoption, where treatment timing can vary across units, but once a unit becomes treated, it remains treated in subsequent periods.  Staggered treatment adoption, however, is an important special case and has been widely considered in the literature on difference-in-differences (\citet{goodman-bacon-2021,callaway-santanna-2021}).  Oftentimes in that literature, the groups are based on treatment timing.
It is worth pointing out that our notion of groups can be more general than forming groups solely on the basis of treatment timing, as our setting allows for groups that could experience the same treatment regime (e.g., if we define groups on the basis of state, but several states become treated at the same point in time).

\subsubsection{Sources of Multiple Comparison Groups}

The approach that we consider relies on there being multiple possible comparison groups and that we observe underlying units with each group.  Next, we provide some leading examples of where these groups could come from.

\begin{example}[State-level policies] Consider the case where we are interested in the effect of a policy implemented by a particular state.  In this case, $\mathcal{G}$ could be the set of all states, and $\bar{\mathcal{G}}$ could include all other states that did not implement the policy.  We also suppose that we observe underlying units in each state, e.g., counties, individuals, or firms.
\end{example}
\begin{example}[Staggered treatment adoption] Suppose that different units become treated at different points in time.  Let $G_i$ denote the time period when unit $i$ becomes treated.  In this case, $\mathcal{G}$ could include all timing-groups and $\bar{\mathcal{G}}$ could include all timing-groups that are not-yet-treated by period $t^*_1$.
\end{example}

\subsubsection{Target Parameters}

Following the majority of the literature on panel data causal inference, we target the average treatment effect on the treated ($\ATT$).  The DiD literature has often focused on ``group-time average treatment effects''---$\ATT$'s that can vary by group and time period.  In our setting, when there is a single treated group, we can simply index $\ATT$'s by time:
\begin{align*}
    \ATT_t = \E[Y_t(1) - Y_t(0) \mid G=1]
\end{align*}
which is the average treatment effect in period $t$ for group 1.  For the remainder of this section, we mainly focus on $\ATT_{t^*_1}$---the average treatment effect for the treated group in the period it becomes treated---though it is straightforward to extend the arguments presented in this section to recover $\ATT_t$ in other post-treatment periods.  We discuss settings with more groups and more post-treatment periods in \Cref{sec:more-extensions}.  %

\subsection{Overview of Identification Strategy}

The central idea of our identification strategy is to choose a subset of the possible comparison groups based on being similar to the treated group in pre-treatment periods.  This is much different from the way that panel data is typically used in modern approaches to causal inference.  Instead of using panel data to somehow account for differences in unobserved heterogeneity between different groups, we use panel data to find very similar comparison groups.  Then, under certain conditions, groups being ``similar'' in terms of pre-treatment outcomes can imply that they are also similar in terms of unobserved heterogeneity, further implying that they should be alike in post-treatment periods even if the mapping from unobservables to untreated potential outcomes changes over time.  More generally, the upshot of having comparable groups is that our approach is robust to many different panel data causal inference identification strategies.

For our main results in this section, we consider the subset of comparison groups defined as follows:
\begin{align}
    \mathcal{G}^* = \big\{ g \in \bar{\mathcal{G}} : (Y_{t^*_1-1} \mid G=1) \overset{d}{=} (Y_{t^*_1-1} \mid G=g) \big\} \label{eqn:Gstar-def}
\end{align}
which is the set of groups that have the same distribution of outcomes in the pre-treatment period $t^*_1-1$ as the treated group has.  We refer to groups that satisfy the condition in \Cref{eqn:Gstar-def} as \textit{close comparison groups}.  For any group $g \in \bar{\mathcal{G}}$, define
\begin{align*}
    \tau_g = \E[Y_{t^*_1} \mid G=1] - \E[Y_{t^*_1} \mid G=g]
\end{align*}
i.e., $\tau_g$ is the difference in mean outcomes between group 1 and group $g$. Our main result is that, for any group $g \in \mathcal{G}^*$,
\begin{align*}
    \tau_g = \ATT_{t^*_1}
\end{align*}
if any of several leading panel data identification strategies hold.  In other words, if we compare the average outcome in levels for the treated group to the average outcome in levels for group $g \in \mathcal{G}^*$ (i.e., to a close comparison group), then that comparison delivers the average treatment effect for the treated group.  Because there may be multiple close comparison groups, the ultimate estimand that we use is
\begin{align} \tau = \E[Y_{t^*_1} \mid G = 1] - \displaystyle \sum_{g \in \mathcal{G}^*} w(g) \cdot \E[Y_{t^*_1} \mid G=g] \label{eqn:estimand}
\end{align}
where $w(g)$ are weights, which can depend on things like the relative frequencies of each close comparison group.\footnote{One example of $w(g)$ is $w(g) = \P(G=g \mid G \in \mathcal{G}^*)$---in this case, the right hand side of \Cref{eqn:estimand} simplifies to $\E[Y_{t^*_1} \mid G \in \mathcal{G}^*]$, which is simply the average untreated potential outcome across all units in that are in any comparison group in $\mathcal{G}^*$.   Most of our identification results below go through for any weighting function $w(g)$, so choosing the weights is often more of an estimation problem, with good choices of weights leading to more precise estimates of $ATT_{t^*_1}$.}

\subsection{Five Leading Panel Data Causal Inference Identification Strategies}

In this subsection, we briefly review five leading identification strategies for panel data causal inference.  Later, we will connect these identification strategies to our approach of using close comparison groups.

\bigskip

\noindent \textit{\bfseries Identification Strategy 1: Difference-in-Differences}
\bigskip

The key assumption underlying a difference-in-differences identification strategy is the following parallel trends assumption:
\begin{namedassumption}{DiD-1} \label{ass:pt}
For all $g \in \mathcal{G}$,
\begin{align*}
    \E[\Delta Y_{t^*_1}(0) \mid G=g] = \E[\Delta Y_{t^*_1}(0)]
\end{align*}
\end{namedassumption}
\Cref{ass:pt} says that the trend in untreated potential outcomes over time is the same for all groups.  It follows immediately that $\ATT$ is identified under the parallel trends assumption.  In particular, for any $g \in \bar{\mathcal{G}}$,
\begin{align*}
    \ATT_{t^*_1} &= \E[\Delta Y_{t^*_1} \mid G=1] - \E[\Delta Y_{t^*_1} \mid G=g]
\end{align*}

\bigskip

\noindent \textit{\bfseries Identification Strategy 2: Change-in-Changes}
\bigskip

The main assumption for a change-in-changes identification strategy (\citet{athey-imbens-2006}) is:
\begin{namedassumption}{CiC-1} \label{ass:cic}
For any $y \in \text{support}(Y_{t^*_1-1}|G=1)$
\begin{align*}
    Q_{Y_{t^*_1}(0)|G=1}\big(\F_{Y_{t^*_1-1}(0)|G=1}(y)\big) = Q_{Y_{t^*_1}(0)|G=g}\big(\F_{Y_{t^*_1-1}(0)|G=g}(y)\big) \text{ for all $g \in \mathcal{G}$};
\end{align*}
\end{namedassumption}
See, in particular, Theorem 3.1 of \citet{athey-imbens-2006}.\footnote{\citet{athey-imbens-2006} derive this condition from an underlying model where untreated potential outcomes are generated by a non-separable but invertible function of scalar unobservables whose distribution can be different across groups but does not change across time, and under a support assumption.}  \Cref{ass:cic} can be thought of as a nonlinear way to account for trends in untreated potential outcomes over time---note that, like DiD, the left hand side is related to how untreated potential outcomes change over time for the treated group but is not identified by the sampling process, while the right hand side involves the observed change in untreated potential outcomes across time periods for an untreated comparison group.  CiC identification strategies also require the following support condition:
\begin{namedassumption}{CiC-2}
\begin{align*}
    \text{support}(Y_{t^*_1-1} \mid G=1) \subseteq \text{support}(Y_{t^*_1-1} \mid G=g)
\end{align*}
Under these assumptions, one can show that the $\ATT$ is identified and given by
\begin{align*}
    \ATT_{t^*_1} = \E[Y_{t^*_1} \mid G=1] - \E\Big[ Q_{Y_{t^*_1}|G=g}\big(\F_{Y_{t^*_1-1}|G=g}(Y_{t^*_1-1})\big) \Bigm| G=1 \Big]
\end{align*}
\end{namedassumption}

\bigskip

\noindent \textit{\bfseries Identification Strategy 3: Lagged Outcome Unconfoundedness}
\bigskip

The main idea of a lagged outcome unconfoundedness identification strategy (see, e.g., \citet{ding-li-2019,callaway-li-2023,powell-griffin-wolfson-2023}) is to compare post-treatment outcomes of treated and untreated units that have the same pre-treatment outcome.  The main identifying assumption is:
\begin{namedassumption}{LOU-1} \label{ass:lou}
    $Y_{t^*_1}(0) \independent G \mid Y_{t^*_1-1}(0)$
\end{namedassumption}
This is an unconfoundedness assumption, but where the conditioning variables are pre-treatment outcomes.  For recovering $\ATT_{t^*_1}$, an important implication of this assumption is that
\begin{align} \label{eqn:lou}
    \E[Y_{t^*_1}(0) \mid Y_{t^*_1-1}(0), G=1] = \E[Y_{t^*_1}(0) \mid Y_{t^*_1-1}(0),G=g] \ \text{for all $g \in \bar{\mathcal{G}}$}
\end{align}
The other important condition for lagged outcome unconfoundedness is the following overlap condition
\begin{namedassumption}{LOU-2} For all $g \in \bar{\mathcal{G}}$ and $y \in \text{support}(Y_{t^*_1-1})$, $\P(G=1|Y_{t^*_1-1}(0)=y, G\in\{1,g\}) < 1$
\end{namedassumption}
This assumption says that, for all treated units, there exist units in group $g \in \bar{\mathcal{G}}$ that have the same pre-treatment outcomes.  Under these assumptions, $\ATT_{t^*_1}$ is identified and given by
\begin{align*}
    \ATT_{t^*_1} = \E[Y_{t^*_1} \mid G=1] - \E\big[\E[Y_{t^*_1} \mid Y_{t^*_1-1},G=g] \bigm| G=1\big]
\end{align*}

\bigskip
\noindent \textit{\bfseries Identification Strategy 4: Interactive Fixed Effects}
\bigskip

Next, we consider an interactive fixed effects model for untreated potential outcomes based on the following assumption:
\begin{namedassumption}{IFE-1}\label{ass:ife-model} For $t=1,\ldots,T$,
\begin{align}
   Y_{it}(0) = \theta_t + \eta_i + \lambda_i'F_t + e_{it} \label{eqn:ife}
\end{align}
where $\lambda_i$ is an $R$-dimensional vector of time-invariant heterogeneity and $F_t$ is an $R$-dimensional vector of fixed coefficients on the $\lambda_i$---$F_t$ is often referred to as a factor and $\lambda_i$ is often referred to as a factor loading. $e_{it}$ satisfies the condition that $\E[e_t \mid G=g] = 0$ for all $g \in \mathcal{G}$.
\end{namedassumption}
This condition says that the groups can be systematically different from each other in terms of $\eta_i$ and $\lambda_i$, but not in terms of the time-varying error term $e_{it}$.  Interactive fixed effects models for untreated potential outcomes have been popular in papers that try to relax the parallel trends assumption (e.g., \citet{abadie-diamond-hainmueller-2010,gobillon-magnac-2016,xu-2017,callaway-tsyawo-2023}, among others).  In general, in settings with only two time periods, $\ATT_{t^*_1}$ is not identified if untreated potential outcomes are generated by the interactive fixed effects model in \Cref{eqn:ife}.

For the discussion that follows, it is often convenient to stack untreated potential outcomes across time periods and take expectations conditional on group, i.e., let $\mathbf{Y}(0) = (Y_1(0), Y_2(0), \ldots, Y_\T(0))'$, and then consider
\begin{align} \label{eqn:ife-mean}
    \E[\mathbf{Y}(0) \mid G=g] = \bm{\theta} + \mathbf{F} \mathbf{\Lambda}_g
\end{align}
where
\begin{align*}
    \E[\mathbf{Y}(0) \mid G=g] := \begin{pmatrix}
        \E[Y_1(0) \mid G=g] \\
        \vdots \\
        \E[Y_\T(0) \mid G=g]
    \end{pmatrix},
    \bm{\theta} := \begin{pmatrix} \theta_1 \\ \vdots \\ \theta_\T \end{pmatrix},
    \mathbf{F} := \begin{pmatrix} 1 & F_1' \\ \vdots & \vdots \\ 1 & F_\T' \end{pmatrix},
    \mathbf{\Lambda}_g := \begin{pmatrix} \E[\eta \mid G=g] \\ \E[\lambda \mid G=g] \end{pmatrix}
\end{align*}
where $\mathbf{\Lambda}_g$ is an $R+1$ dimensional vector.  $\mathbf{F}$ is a $\T \times (R+1)$ matrix.  We can also stack these across groups to get
\begin{align*}
    \begin{bmatrix} \E[\mathbf{Y}(0) \mid G=g]' \end{bmatrix}_{g \in \mathcal{G}} = \mathbf{1}_{|\mathcal{G}|} \bm{\theta}' + \mathbf{\Lambda} \mathbf{F}' \text{ where } \mathbf{\Lambda} := \begin{bmatrix} \mathbf{\Lambda}_g' \end{bmatrix}_{g \in \mathcal{G}}
\end{align*}

Next, define $\mathbf{Y}^{\text{pre}(t)}(0) = (Y_1(0), \ldots, Y_{t-1}(0))'$ to be the vector of untreated potential outcomes in periods before $t$, and consider
\begin{align} \label{eqn:ife-mean-pre}
    \E[\mathbf{Y}^{\text{pre}(t)}(0) \mid G=g] = \bm{\theta}^{\text{pre}(t)} + \mathbf{F}^{\text{pre}(t)} \mathbf{\Lambda}_g
\end{align}
where
\begin{align*}
    \bm{\theta}^{\text{pre}(t^*_1)} := \begin{pmatrix} \theta_1 \\ \vdots \\ \theta_{t^*_1-1} \end{pmatrix}, \quad
    \mathbf{F}^{\text{pre}(t)} := \begin{pmatrix} 1 & F_1' \\ \vdots & \vdots \\ 1 & F_{t^*_1-1}' \end{pmatrix}
\end{align*}
For our results on close comparison groups, we consider the following assumption.
\begin{namedassumption}{IFE-2*}\label{ass:ife-rank} $\text{rank}\big(\mathbf{F}^{\text{pre}(t^*_1)}\big) = \text{rank}\big(\mathbf{F}\big)$.
\end{namedassumption}
We add the $*$ superscript to indicate that we do not maintain this assumption every time that we discuss the IFE identification strategy, but rather only invoke it in certain contexts.  \Cref{ass:ife-rank} effectively requires that there be the same amount of variation in the factors in pre-treatment periods relative to the variation in the factors in post-treatment periods.  A necessary condition for \Cref{ass:ife-rank} is that the number of pre-treatment periods be at least as large as the number of unobserved heterogeneity terms, i.e., $t^*_1 - 1 \geq R+1$.  Most importantly, it rules out some factor ``turning on'' in post-treatment periods that did not affect outcomes in pre-treatment periods.

Additionally, notice that we can view \Cref{eqn:ife-mean,eqn:ife-mean-pre} as maps $\begin{pmatrix} \E[\eta \mid G=g] \\ \E[\lambda \mid G=g] \end{pmatrix} \mapsto \E[\mathbf{Y}(0) \mid G=g]$ and $\begin{pmatrix} \E[\eta \mid G=g] \\ \E[\lambda \mid G=g] \end{pmatrix} \mapsto \E[\mathbf{Y}^{\text{pre}(t)}(0) \mid G=g]$, respectively.  The rank condition in \Cref{ass:ife-rank} implies that both maps are injective, and, hence, that groups that have the same mean of pre-treatment outcomes also have the same mean of $\eta$ and $\lambda$.

Let $\bar{\mathbf{\Lambda}} = \begin{bmatrix} \mathbf{\Lambda}_g' \end{bmatrix}_{g \in \bar{\mathcal{G}}}$.  For our results on time homogeneity, we consider the following assumption.
\begin{namedassumption}{IFE-3*}\label{ass:ife-comparison-group-rank} $\text{rank}(\bar{\mathbf{\Lambda}}) = R+1$.
\end{namedassumption}
Like \Cref{ass:ife-rank}, we do not maintain \Cref{ass:ife-comparison-group-rank} throughout the paper, but only invoke it in certain contexts.  \Cref{ass:ife-comparison-group-rank} requires that there be enough independent variation in unobserved heterogeneity among the comparison groups relative to the number of interactive fixed effects.

\bigskip

\noindent \textit{\bfseries Identification Strategy 5: Latent Unconfoundedness}
\bigskip

An even more general identification strategy is to take the assumption of latent unconfoundedness as a starting point:
\begin{align} \label{eqn:latent-unconfoundedness}
    Y_{t}(0) \independent G \mid \xi
\end{align}
where $\xi$ is an $L$-dimensional vector of unobserved heterogeneity.  Latent unconfoundedness can equivalently be stated as in the following assumption
\begin{namedassumption}{Lat.~Unc.-1}\label{ass:latent-unconfoundedness} The following model for untreated potential outcomes holds
\begin{align*}
    Y_{it}(0) = h_t(\xi_i) + e_{it}
\end{align*}
where $h_t$ is left unrestricted, the distribution of $\xi$ can vary across groups, but $e_t \independent G \mid \xi$.
\end{namedassumption}
The motivation for \Cref{ass:latent-unconfoundedness} is similar to that of the interactive fixed effects model discussed above, except that it does not impose the functional form assumption of the IFE model.  Like for the case of IFE, in general, $\text{ATT}_{t^*_1}$ is not identified under latent unconfoundedness with only two time periods.  Let
\begin{align*}
    \mathbf{H}(\xi) := \begin{pmatrix} h_1(\xi), \ldots, h_\T(\xi) \end{pmatrix}, \qquad \mathbf{H}^{\text{pre}(t)}(\xi) := \begin{pmatrix} h_1(\xi), \ldots, h_{t-1}(\xi) \end{pmatrix}
\end{align*}
which are $\T$-dimensional and $t-1$-dimensional, respectively.  Without loss of generality, we define $\xi$ so that $\mathbf{H}(\xi)$ is injective---this effectively controls the dimension of $\xi$ (where the maximum possible dimension of $\xi$ is $\T$, but it can be smaller), similar to our discussion above about $\text{rank}\big(\mathbf{F}\big)$ determining the number of factors.

\begin{namedassumption}{Lat.~Unc.-2} \label{ass:latent-unconfoundedness-injectivity} $\mathbf{H}^{\text{pre}(t^*_1)}$ is injective.
\end{namedassumption}

\Cref{ass:latent-unconfoundedness-injectivity} play a similar role as \Cref{ass:ife-rank} discussed above for IFE.  It mainly rules out components of $\xi$ ``turning on'' in post-treatment periods.  The main implication of \Cref{ass:latent-unconfoundedness-injectivity} that we use below is that it implies that, for $t \geq t^*_1$, $h_t(\xi) = \phi_t\big(\mathbf{H}^{\text{pre}(t^*_1)}(\xi)\big)$ for some function $\phi_t$.  Note that $\mathbf{H}^{\text{pre}(t^*_1)}$ cannot be injective if $\text{dim}(\xi) > t^*_1-1$.

\section{Robust Identification with Close Comparison Groups} \label{sec:close-comparison-groups}

This section contains our main results on the robustness of using close comparison groups.  We start with the case where close comparison groups are based on having the same distribution of outcomes in the period immediately preceding the treatment.  Then, we discuss extensions where close comparison groups are (i) based on having only the same mean outcomes in a single pre-treatment period, (ii) based on having the same mean outcomes over several pre-treatment periods, and (iii) based on having the same distribution of outcomes over several pre-treatment periods.

\begin{theorem} \label{thm:1} Under \Cref{ass:sampling,ass:groups,ass:no-anticipation,ass:single-treated}, if $\mathcal{G}^*$ is non-empty, and if any of the following identification strategies hold: (i) difference-in-differences, (ii) lagged outcome unconfoundedness, or (iii) change-in-changes, then
    \begin{align*}
        \tau = \ATT_{t^*_1}
    \end{align*}
    In contrast, under (iv) IFE, or (v) latent unconfoundedness, it is possible that $\tau \neq \ATT_{t^*_1}$.
\end{theorem}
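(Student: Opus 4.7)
The plan is to establish, for each $g \in \mathcal{G}^*$, the group-by-group identity $\tau_g = \ATT_{t^*_1}$ under each of strategies (i)--(iii), and then conclude $\tau = \ATT_{t^*_1}$ by averaging (taking the weights to sum to one, as in the canonical choice in the footnote to \Cref{eqn:estimand}, so that $\tau = \sum_g w(g)\tau_g$). For parts (iv) and (v), I would exhibit explicit two-period counterexamples in which $\mathcal{G}^*$ is non-empty but $\tau \neq \ATT_{t^*_1}$.

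For DiD, I would expand the identification formula $\ATT_{t^*_1} = \E[\Delta Y_{t^*_1}\mid G=1] - \E[\Delta Y_{t^*_1}\mid G=g]$ into $\tau_g + \E[Y_{t^*_1-1}\mid G=g] - \E[Y_{t^*_1-1}\mid G=1]$ and observe that $g \in \mathcal{G}^*$ makes the last two terms cancel. For LOU, I would start from $\ATT_{t^*_1} = \E[Y_{t^*_1}\mid G=1] - \E\big[\E[Y_{t^*_1}\mid Y_{t^*_1-1},G=g] \bigm| G=1\big]$; the outer expectation is over $Y_{t^*_1-1}\mid G=1$, which coincides in distribution with $Y_{t^*_1-1}\mid G=g$ for $g\in\mathcal{G}^*$, so swapping to conditioning on $G=g$ and applying iterated expectations collapses the expression to $\E[Y_{t^*_1}\mid G=g]$. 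For CiC, starting from $\E[Y_{t^*_1}\mid G=1] - \E\big[Q_{Y_{t^*_1}\mid G=g}\big(\F_{Y_{t^*_1-1}\mid G=g}(Y_{t^*_1-1})\big) \bigm| G=1\big]$, I would replace $\F_{Y_{t^*_1-1}\mid G=g}$ by $\F_{Y_{t^*_1-1}\mid G=1}$ on $\mathcal{G}^*$, invoke the probability-integral transform to convert $\F_{Y_{t^*_1-1}\mid G=1}(Y_{t^*_1-1})$ into a Uniform$[0,1]$ random variable under $G=1$ (continuity of outcomes from \Cref{ass:sampling} is essential), and finally note that $Q_{Y_{t^*_1}\mid G=g}$ applied to a uniform variable is distributed as $Y_{t^*_1}\mid G=g$. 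Matching of distributions in the pre-period automatically delivers the CiC support condition.

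For parts (iv)--(v), my counterexamples have the pre-period marginal of $Y_{t^*_1-1}(0)$ agreeing across group $1$ and some $g$, but post-period means diverging. A convenient IFE construction: take $\T=2$, $R=1$, $F_1=0$, $F_2=1$; let both groups share a common marginal $\F_\eta$ for $\eta$ and independent mean-zero $e_{it}$, but set $\lambda\equiv 0$ in group $1$ and $\lambda\equiv c\neq 0$ in group $g$. Because $F_1=0$, the period-1 distributions coincide (so $g\in\mathcal{G}^*$), yet $\E[Y_2(0)\mid G=g] - \E[Y_2(0)\mid G=1] = c$, producing $\tau_g = \ATT_2 - c$. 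This example violates \Cref{ass:ife-rank}, the rank condition that would otherwise propagate pre-period distributional matching to matching of the latent means. For latent unconfoundedness, an analogous construction uses an $h_{t^*_1-1}$ that is non-injective in a coordinate of $\xi$ whose distribution differs across groups, violating \Cref{ass:latent-unconfoundedness-injectivity}.

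The main obstacle is the CiC step: the probability-integral transform must be applied with care, and it is the continuity of outcomes in \Cref{ass:sampling} that makes $\F_{Y_{t^*_1-1}\mid G=1}(Y_{t^*_1-1})\mid G=1$ exactly Uniform$[0,1]$, which in turn is what lets the composed CiC map transport the $G=1$ law of $Y_{t^*_1-1}$ to the $G=g$ law of $Y_{t^*_1}$. No similar subtlety arises in the DiD or LOU steps, which rely only on moment equalities and iterated expectations, nor in the counterexamples, which only require choosing latent distributions that agree after being pushed through the pre-period mapping but disagree after the post-period mapping.
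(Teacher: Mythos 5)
Your proposal is correct and follows essentially the same route as the paper's proof: the same group-by-group cancellation argument for DiD, the same probability-integral-transform/quantile-transport argument for CiC, the same iterated-expectations-plus-distribution-swap argument for LOU, and a two-period IFE counterexample that is reused for latent unconfoundedness. The only differences are cosmetic: you run the algebra from each strategy's identification formula toward $\tau_g$ rather than from $\tau_g$ toward $\ATT_{t^*_1}$, and your IFE counterexample sets $F_1=0$ so that period-one distributions match trivially, whereas the paper uses $F_1=1$, $F_2=2$ with offsetting latent means.
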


The proof of \Cref{thm:1} is provided in Appendix \ref{app:proofs}.  The first part of \Cref{thm:1} establishes the central result of our paper.  It shows that, if we are able to find any comparison groups that satisfy the condition of having the same distribution of observed outcomes in the pre-treatment period, then we can simply compare the average outcome for the treated group to the average outcome for these groups to recover the $\ATT$ in a way that is robust across several different leading identification strategies.  In other words, in applications where there are ``close comparison groups'', then it is possible to recover $\ATT$ without having to take a stand on the ``correct'' identification strategy, at least among DiD, CiC, and LOU---three leading identification strategies that are used in this context.  This sort of strategy may not be feasible in all applications---because there may not be any available close comparison groups---but, in applications where close comparison groups are available, this result shows that recovering the $\ATT$ is substantially more robust than in applications that rely more strongly on auxiliary assumptions like parallel trends for identification.

The identification strategy discussed here is also transparent and simple to explain to non-experts.  It suggests using pre-treatment data to find comparison groups that are very similar to the treated group, and then to compare the level of the outcomes for the treated group to the level of the outcome for very similar comparison groups.  For example, in an application where a state implements a policy and where we have microdata from every state, this approach suggests using the pre-treatment data to find other states that have similar outcome distributions and to discard states that have different distributions of outcomes in the pre-treatment period.  The simple intuition of our approach contrasts with most panel data approaches to causal inference.  For example, at least arguably, comparing trends in outcomes over time for the treated group relative to untreated comparison states (as would be done with DiD) is unnatural outside of the context of linear fixed effects models for untreated potential outcomes and can be hard to explain or rationalize to non-economists.  These strategies are also conceptually different in that they effectively use the pre-treatment data to ``adjust for'' differences in the levels of outcomes across groups, even if the distributions of these outcomes are quite different across groups.  Adjusting for differences between groups is a much harder task---and, as emphasized above, often requires making auxiliary assumptions that can be hard to rationalize---than comparing similar groups.  Our approach is simply to check if we have access to similar groups and, if we do, to compare their outcomes in post-treatment periods.

The second part of \Cref{thm:1} shows that forming $\mathcal{G}^*$ and comparing post-treatment outcomes for the treatment group relative to outcomes for groups in $\mathcal{G}^*$ is not a panacea---under IFE or latent unconfoundedness, this comparison is not guaranteed to deliver the $\ATT$.  The intuition for both of these results is that groups having the same distribution of outcomes in the pre-treatment period is not strong enough to guarantee that these groups have the same distribution of time-invariant unobservables, particularly in settings where the dimension of the time-invariant unobservables is greater than one.  To foreshadow some of our results in the next section, it is interesting to notice that, as discussed above, in general, the $\ATT$ is not identified for either of these identification strategies when there are only two time periods.  We will show below that, if there are more pre-treatment periods, at least under certain conditions, one can use an alternative definition of close comparison groups (involving groups have similar outcomes across multiple pre-treatment periods) that can rationalize the $\ATT$ as being equal to the difference between post-treatment outcomes for the treated group relative to post-treatment outcomes for groups in this alternative version of close comparison groups.

\begin{remark}[Analogy to de-biasing using pre-treatment period] \label{rem:debiasing-pre}
    Difference-in-differences, change-in-changes, and lagged outcome unconfoundedness identification strategies all have representations like:
    \begin{align*}
        \ATT_{t^*_1} = \E[Y_{t^*_1} \mid G=1] - \E[Y_{t^*_1} \mid G=g] - \text{de-biasing term from period $t^*_1-1$ differences}
    \end{align*}
    where, for example, the de-biasing term for difference-in-differences is $\E[Y_{t^*_1-1} \mid G=1] - \E[Y_{t^*_1-1} \mid G=g]$.  The particular de-biasing terms for change-in-changes and lagged outcome unconfoundedness are different.  Instead of de-biasing, the approach that we proposed above amounts to looking for groups where we do not need to do any de-biasing using the pre-treatment period, and our criteria for being a close comparison group implied that the de-biasing term for all three of these identification strategies is zero.
\end{remark}

\begin{remark}[Post-treatment placebo tests]
    When $|\mathcal{G}^*| > 1$, i.e., when there is more than one close comparison group, there are post-treatment testable implications of our approach.  In particular, if any of the difference-in-differences, change-in-changes, or lagged outcome unconfoundedness identification strategies hold, it will be the case that
    \begin{align*}
        \E[Y_{t^*_1} \mid G=g] = \E[Y_{t^*_1} \mid G=g'] \ \text{for all $g,g' \in \mathcal{G}^*$}
    \end{align*}
    In other words, all close comparison groups should have the same average outcome in post-treatment periods.
\end{remark}

\begin{remark}[Single available comparison group]
    The discussion in this section has focused on the case where $|\bar{\mathcal{G}}|>1$, i.e., where there are multiple possible comparison groups, but the arguments can continue to go through when $|\bar{\mathcal{G}}|=1$; in this case, our approach amounts to simply checking whether or not the one available comparison group is a close comparison group or not.  The main value of having more comparison groups is that the researcher has more opportunities to find close comparison groups among them.
\end{remark}

\subsection{Alternative approaches to forming the set of close comparison groups}

Above, we formed the set of close comparison groups on the basis of having the same distribution of outcomes as the treated group in the period immediately preceding the treatment.  It is possible to form a set of close comparison groups using different criteria, and these different criteria lead to different robustness properties.  We consider several different alternative definitions of close comparison groups below.

\subsubsection{Single Pre-Treatment Period}

\noindent \textit{\bfseries One-period means:} $\mathcal{G}^*_{\text{mean}}(1) := \big\{ g \in \bar{\mathcal{G}} : \E[Y_{t=1} \mid G=1] = \E[Y_{t=1} \mid G=g]\big\}$

\bigskip

$\mathcal{G}^*_{\text{mean}}(1)$ collects all of the possible comparison groups that have the same mean pre-treatment outcome as the treated group.  This is a weaker requirement than having the same distribution of outcomes in the pre-treatment period.  Thus, $\mathcal{G}^* \subseteq \mathcal{G}^*_{\text{mean}}(1)$.

\subsubsection{Multiple Pre-Treatment Periods}

For some $S \in \{1,\ldots,t^*_1-1\}$, we consider the following alternative definitions of close comparison groups.

\bigskip

\noindent \textit{\bfseries Multiple-period means:} $\mathcal{G}^*_{\text{mean}}(S) := \Big\{ g \in \bar{\mathcal{G}} : \E[Y_{t=t^*_1-j} \mid G=1] = \E[Y_{t^*_1-j} \mid G=g] \text{ for } j=1,\ldots, S \Big\}$

\bigskip

\noindent \textit{\bfseries Multi-period distributions:} $\mathcal{G}^*_{\text{dist}}(S) := \Big\{ g \in \bar{\mathcal{G}} : \big(Y_{t^*_1-1}, \ldots, Y_{t^*_1-S} \mid G=1\big) \overset{d}{=} \big(Y_{t^*_1-1}, \ldots, Y_{t^*_1-S} \mid G=g\big) \Big\}$

\bigskip

$\mathcal{G}^*_{\text{mean}}(S)$ generalizes $\mathcal{G}^*_{\text{mean}}(1)$ so that a close comparison group must have the same mean of outcomes across $S$ pre-treatment periods, which is a more stringent criteria when $S>1$.  Thus, $\mathcal{G}^*_{\text{mean}}(S) \subseteq \mathcal{G}^*_{\text{mean}}(1)$.  The researcher can choose different values of $S$, up to $S=t^*_1-1$ (which would amount to using all available pre-treatment periods), and larger values of $S$ can only shrink the number of close comparison groups.

$\mathcal{G}^*_{\text{dist}}(S)$ is the most restrictive definition of a close comparison group that we consider in the paper.  It generalizes $\mathcal{G}^*$---instead of requiring that a group has the same distribution of outcomes in the period immediately preceding treatment, it requires that a group has the same joint distribution of pre-treatment outcomes across $S$ periods preceding treatment as the treated group has.  This leads to the smallest set of close comparison groups among those that we have considered.  Notice that $\mathcal{G}^*_{\text{dist}}(S) \subseteq \mathcal{G}^*$ and $\mathcal{G}^*_{\text{dist}}(S) \subseteq \mathcal{G}^*_{\text{mean}}(S)$.  In addition, similar to the discussion above, larger values of $S$ can only shrink the number of close comparison groups.

\subsubsection{Identification under alternative definitions of close comparison groups}

This section contains our second main result relating the five identification strategies discussed previously to the alternative definitions of a close comparison group discussed above.  Before stating this result, we define the following estimands
\begin{align*}
    \tau^g_{\text{mean}}(1) &:= \E[Y_{t^*_1}|G=1] - \E[Y_{t^*_1}|G=g] \qquad \text{ for $g \in \mathcal{G}^*_{\text{mean}}(1)$} \\
    \tau^g_{\text{mean}}(S) &:= \E[Y_{t^*_1}|G=1] - \E[Y_{t^*_1}|G=g] \qquad \text{ for $g \in \mathcal{G}^*_{\text{mean}}(S)$} \\
    \tau^g_{\text{dist}}(S) &:= \E[Y_{t^*_1}|G=1] - \E[Y_{t^*_1}|G=g] \qquad \text{ for $g \in \mathcal{G}^*_{\text{dist}}(S)$}
\end{align*}
and
\begin{align*}
    \tau_{\text{mean}}(1) & := \sum_{g \in \mathcal{G}^*_{\text{mean}}(1)} \tau^g_{\text{mean}}(1) \cdot w^g_{\text{mean}(1)}(g) \\
    \tau_{\text{mean}}(S) & := \sum_{g \in \mathcal{G}^*_{\text{mean}}(S)} \tau^g_{\text{mean}}(S) \cdot w^g_{\text{mean}(S)}(g) \\ %
    \tau_{\text{dist}}(S) & := \sum_{g \in \mathcal{G}^*_{\text{dist}}(S)} \tau^g_{\text{dist}}(S) \cdot w^g_{\text{dist}(S)}(g) %
\end{align*}

The next proposition collects a number of results on whether close comparison groups, defined in the ways discussed above, combined with particular identification strategies guarantee that the post-treatment comparison of means, $\E[Y_{t^*_1} \mid G=1] - \E[Y_{t^*_1} \mid G=g]$, where $g$ is a close comparison group, recovers $\ATT_{t^*_1}$. 
\begin{proposition}  \label{prop:alt-def-close-comparison-groups} For IFE, suppose that \Cref{ass:ife-rank} holds, and, for latent unconfoundedness, suppose that $S \geq L$.  Under \Cref{ass:sampling,ass:groups,ass:no-anticipation,ass:single-treated}, the following table indicates when a group $g$ satisfying a certain definition of being a close comparison group implies that $\E[Y_{t^*_1} \mid G=1] - \E[Y_{t^*_1} \mid G=g] = \ATT_{t^*_1}$ under different identification strategies.

    \bigskip

    \begin{tabular}{lccc}
    \toprule
     & \shortstack{$g \in \mathcal{G}^*_{\text{mean}}(1)$ \\[4pt] $\implies \tau^g_{\text{mean}}(1) = \ATT_{t^*_1}$} & \shortstack{$g \in \mathcal{G}^*_{\text{mean}}(S)$ \\[4pt] $\implies \tau^g_{\text{mean}}(S) = \ATT_{t^*_1}$} & \shortstack{$g \in \mathcal{G}^*_{\text{dist}}(S)$ \\[4pt] $\implies \tau^g_{\text{dist}}(S) = \ATT_{t^*_1}$} \\
    \midrule
    \textbf{DiD} & \greencheck & \greencheck & \greencheck \\
    \textbf{CiC} & \redx       & \redx       & \greencheck  \\
    \textbf{LOU} & \redx       & \redx       & \greencheck \\
    \textbf{IFE} & \redx       & \greencheck & \greencheck \\
    \textbf{Lat.~Unc.} & \redx & \redx       & \greencheck \\
    \bottomrule
    \end{tabular}
\end{proposition}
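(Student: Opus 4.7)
My plan is to verify the table entry-by-entry, organized by identification strategy, exploiting the nestings $\mathcal{G}^*_{\text{dist}}(S) \subseteq \mathcal{G}^*$ and $\mathcal{G}^*_{\text{dist}}(S) \subseteq \mathcal{G}^*_{\text{mean}}(S) \subseteq \mathcal{G}^*_{\text{mean}}(1)$ highlighted in the text. For each check-mark I need to show that the chosen definition of closeness makes the identification-strategy-specific de-biasing term (\Cref{rem:debiasing-pre}) vanish; for each red x I must exhibit a data-generating process consistent with the identification strategy in which $\tau^g \neq \ATT_{t^*_1}$ for some $g$ in the indicated set.

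For the positive cells, I would argue as follows. \textbf{DiD row:} All three definitions imply $\E[Y_{t^*_1-1}\mid G=1] = \E[Y_{t^*_1-1}\mid G=g]$, so under \Cref{ass:pt} the DiD de-biasing term vanishes and $\tau^g = \ATT_{t^*_1}$ in every column. \textbf{CiC and LOU, $\mathcal{G}^*_{\text{dist}}(S)$ column:} Since $\mathcal{G}^*_{\text{dist}}(S) \subseteq \mathcal{G}^*$, the result follows immediately from \Cref{thm:1}. \textbf{IFE, $\mathcal{G}^*_{\text{mean}}(S)$ column:} Stacking \Cref{eqn:ife-mean-pre} restricted to the $S$ periods defining $\mathcal{G}^*_{\text{mean}}(S)$ yields a linear system of the form $\mathbf{F}_{\text{sub}}(\mathbf{\Lambda}_1 - \mathbf{\Lambda}_g) = 0$, where $\mathbf{F}_{\text{sub}}$ is the corresponding $S \times (R+1)$ sub-matrix of $\mathbf{F}^{\text{pre}(t^*_1)}$; provided $S$ is large enough that $\mathbf{F}_{\text{sub}}$ has full column rank $R+1$ (which in particular holds at $S = t^*_1 - 1$ under \Cref{ass:ife-rank}), I conclude $\mathbf{\Lambda}_1 = \mathbf{\Lambda}_g$, and evaluating \Cref{eqn:ife-mean} at period $t^*_1$ delivers equality of untreated means. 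The $\mathcal{G}^*_{\text{dist}}(S)$ column for IFE then follows from the inclusion $\mathcal{G}^*_{\text{dist}}(S) \subseteq \mathcal{G}^*_{\text{mean}}(S)$. \textbf{Lat.~Unc., $\mathcal{G}^*_{\text{dist}}(S)$ column with $S \geq L$:} Under \Cref{ass:latent-unconfoundedness-injectivity}, $\mathbf{H}^{\text{pre}(t^*_1)}$ is injective, so the joint distribution of $(Y_{t^*_1-1}(0),\ldots,Y_{t^*_1-S}(0))$ conditional on $G$ pins down the distribution of $\xi$ conditional on $G$. Equality of these $S$-period joint distributions across groups then gives $(\xi\mid G=1) \overset{d}{=} (\xi\mid G=g)$, and integrating $h_{t^*_1}$ yields the required equality of untreated means.

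For the negative cells I would construct explicit counterexamples using two groups. For CiC and LOU under either mean-based column, I would take pre-treatment distributions with a common mean but distinct shapes (e.g., different variances), then design the untreated potential outcome process so that the nonlinear CiC Skorohod map $Q_{Y_{t^*_1}\mid G=g}(\F_{Y_{t^*_1-1}\mid G=g}(\cdot))$ and the LOU conditional-expectation map $\E[Y_{t^*_1}\mid Y_{t^*_1-1},G=g]$ produce non-zero adjustments relative to the level comparison. For IFE at $\mathcal{G}^*_{\text{mean}}(1)$, I would take $R = 1$ and choose $\mathbf{\Lambda}_1 \neq \mathbf{\Lambda}_g$ that agree after multiplication by the single pre-treatment row $(1, F_{t^*_1-1}')$ but disagree after multiplication by $(1, F_{t^*_1}')$---generically possible because a single linear equation in $R+1 = 2$ unknowns leaves a one-dimensional set of solutions. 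For Lat.~Unc.~in the mean columns, I would take $\dim(\xi) = 2$ with each $h_{t^*_1-j}$ chosen so that only a single linear combination of $\E[\xi\mid G]$ is constrained by the pre-treatment means, and with $h_{t^*_1}$ depending on an unconstrained direction.

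The main obstacle is ensuring each counterexample simultaneously (a) is consistent with the structural content of the relevant identification strategy---in particular, that the CiC counterexample respects an underlying invertible scalar-unobservable model, and the IFE and Lat.~Unc.~counterexamples respect the required rank or injectivity assumptions---(b) places $g$ in the indicated close-comparison set, and (c) produces a non-zero gap between $\tau^g$ and $\ATT_{t^*_1}$. The CiC counterexample is the most delicate, because one must align the pre- and post-period quantile functions with a single underlying scalar unobservable while still controlling the pre-period mean; the remaining check-marks and counterexamples reduce to routine linear-algebra, nesting, or moment-matching arguments.
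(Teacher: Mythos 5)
Your proposal follows essentially the same route as the paper's proof. The positive cells are handled identically: the DiD column uses only equality of period-$(t^*_1\tightminus 1)$ means; the $\mathcal{G}^*_{\text{dist}}(S)$ cells for CiC and LOU follow from the nesting $\mathcal{G}^*_{\text{dist}}(S) \subseteq \mathcal{G}^*$ plus \Cref{thm:1}; the IFE mean-$(S)$ cell follows from the stacked pre-treatment linear system and a full-column-rank argument yielding $\mathbf{\Lambda}_1 = \mathbf{\Lambda}_g$; and the Lat.~Unc.~dist-$(S)$ cell follows from injectivity of $\mathbf{H}^{\text{pre}(t^*_1)}$ pinning down the distribution of $\xi$. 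The negative cells you sketch are exactly what the paper instantiates: for LOU, normals with common mean but different variances together with the quadratic transition $\E[Y_{t^*_1}(0) \mid Y_{t^*_1-1}(0), G] = Y_{t^*_1-1}(0)^2$, giving a bias equal to the difference of the variances; for CiC, Uniform$[0,1]$ versus Triangular$(0,1,0.5)$ pre-treatment distributions with a $\chi^2_1$ post-period distribution for group $g$; for IFE, two loading vectors that agree on the pre-treatment factor row but not the post-treatment one. One caution you raise is unnecessary: since the paper takes \Cref{ass:cic} itself as the primitive identifying condition (rather than the underlying invertible scalar-unobservable model), any specification of the group-$g$ marginals, with group 1's counterfactual defined through the quantile map, is automatically consistent with the CiC identification strategy, so the step you flag as "most delicate" dissolves.

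Two refinements are worth recording. First, your Lat.~Unc.~counterexample for the mean columns, as sketched ($\dim(\xi)=2$ with all pre-treatment $h$'s constraining only a single linear combination of $\E[\xi \mid G]$), is in tension with \Cref{ass:latent-unconfoundedness-injectivity}: if every $h_{t^*_1-j}$ depends on $\xi$ only through one linear combination, then $\mathbf{H}^{\text{pre}(t^*_1)}$ cannot be injective on $\mathbb{R}^2$. The robust fix uses nonlinearity rather than dimension: take scalar $\xi$, $h_{t^*_1-j}(\xi)=\xi$ in pre-treatment periods (injective), and $h_{t^*_1}(\xi)=\xi^2$, so that matching means in every pre-treatment period still leaves $\E[\xi^2 \mid G]$ unrestricted---this is the LOU counterexample lifted to latent unconfoundedness via $\xi = Y_{t^*_1-1}(0)$. (The paper's own proof is silent on these two cells, so you lose nothing relative to it here.) Second, your IFE mean-$(1)$ counterexample is actually more faithful to the proposition's premises than the paper's: the paper reuses the two-period counterexample from \Cref{thm:1}, in which $\text{rank}\big(\mathbf{F}^{\text{pre}(t^*_1)}\big) = 1 < 2 = \text{rank}(\mathbf{F})$, violating the maintained \Cref{ass:ife-rank}, whereas your generic construction (one matched pre-treatment row, one unmatched post-treatment row, with additional pre-treatment variation supplying the rank) can be made consistent with it. So the approach is correct and matches the paper; what remains is only the routine instantiation of the sketched counterexamples, which the paper carries out explicitly.
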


\bigskip

The proof of \Cref{prop:alt-def-close-comparison-groups} is provided in Appendix \ref{app:proofs}.  There are several main takeaways from this proposition.  First, among the identification strategies that we consider, a group $g$ having the same mean outcome in a single pre-treatment period only guarantees that $\tau_{\text{mean}}(1) = \ATT_{t^*_1}$ for difference-in-differences.  The second column indicates that having the same mean outcomes across multiple pre-treatment periods is additionally sufficient to guarantee that $\tau_{\text{mean}}(S) = \ATT_{t^*_1}$ for interactive fixed effects, as long as a rank condition related to the number of available pre-treatment time periods holds.  On the other hand, it is not sufficient for change-in-changes, lagged outcomes, or latent unconfoundedness.  The intuition for this result is that DiD and IFE models are linear in unobserved heterogeneity, so groups with the same pre-treatment means of outcomes also have the same means of unobserved heterogeneity, as long as there are enough pre-treatment periods.  In contrast, change-in-changes, lagged outcome unconfoundedness, and latent unconfoundedness are nonlinear in unobserved heterogeneity, so having the same pre-treatments means of outcomes is not sufficient to guarantee that the distribution of unobserved heterogeneity (or lagged outcomes) is the same between groups. Finally, the last column shows that having the same joint distribution of outcomes across multiple pre-treatment periods is sufficient to guarantee that $\tau_{\text{dist}}(S) = \ATT_{t^*_1}$ for all five identification strategies considered here, as long as there are enough pre-treatment periods relative to the dimension of unobserved heterogeneity.  Essentially, exhibiting the same joint distribution of outcomes over several pre-treatment periods guarantees that groups have the same distribution of unobserved heterogeneity, even if the model is nonlinear in unobserved heterogeneity.  And having the same distribution of unobserved heterogeneity implies that post-treatment comparisons of means recover the $\ATT$, even for nonlinear models that can vary over time in complicated ways.

\subsection{Other identification strategies:}

Our discussion above has focused on five important identification strategies for causal inference with panel data; however, it is important to note that our arguments apply to other identification strategies as well.  We briefly mention two additional identification strategies here and conjecture that our arguments could be extended to others as well.

\bigskip

\noindent \textit{\bfseries Unit-specific linear trends: }
\begin{align} \label{eqn:linear-trends}
    Y_{it}(0) = \theta_t + \eta_i + \lambda_i t + e_{it} \quad \text{ with $\E[e_t \mid G=g] = 0$ for all $g \in \mathcal{G}$.}
\end{align}

\bigskip

\noindent \textit{\bfseries Dynamic panel data model: }
\begin{align} \label{eqn:dynamic-panel-model}
    Y_{it}(0) = \theta_t + \eta_i + \rho Y_{it-1}(0) + e_{it} \quad \text{ with $\E[e_t \mid G=g] = 0$ for all $g \in \mathcal{G}$.}
\end{align}

\bigskip

Unit-specific linear trends models are popular in empirical work and have been considered in \citet{heckman-hotz-1989,wooldridge-2005}.  Dynamic panel data models for untreated potential outcomes are somewhat less common in empirical work focused on causal inference (see \citet{marx-tamer-tang-2025} for a recent paper that considers this type of setting).  In the next proposition, we relate these models to our approach of finding close comparison groups.

\begin{proposition} \label{prop:linear-trends-and-dynamic-panel} Suppose that $S \geq 2$.  Under \Cref{ass:sampling,ass:groups,ass:no-anticipation,ass:single-treated}, the following table indicates when a group $g$ satisfying a certain definition of being a close comparison group implies that $\E[Y_{t^*_1} \mid G=1] - \E[Y_{t^*_1} \mid G=g] = \ATT_{t^*_1}$ under different identification strategies.

\bigskip

{ \setlength{\tabcolsep}{3pt} \small
\noindent \begin{tabular}{lcccc}
    \toprule
     & \shortstack{$g \in \mathcal{G}^*_{\text{mean}}(1)$ \\[4pt] $\implies \tau^g_{\text{mean}}(1) = \ATT_{t^*_1}$} & \shortstack{$g \in \mathcal{G}^*_{\text{dist}}(1)$ \\[4pt] $\implies \tau^g_{\text{dist}}(1) = \ATT_{t^*_1}$} & \shortstack{$g \in \mathcal{G}^*_{\text{mean}}(S)$ \\[4pt] $\implies \tau^g_{\text{mean}}(S) = \ATT_{t^*_1}$} & \shortstack{$g \in \mathcal{G}^*_{\text{dist}}(S)$ \\[4pt] $\implies \tau^g_{\text{dist}}(S) = \ATT_{t^*_1}$} \\
    \midrule
    \textbf{Linear Trends} & \redx & \redx & \greencheck & \greencheck \\
    \textbf{Dynamic Panel} & \redx & \redx & \greencheck  & \greencheck \\
    \bottomrule
\end{tabular}
}
\end{proposition}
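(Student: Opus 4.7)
The plan is to use the fact that both models determine a tractable recursion for $\E[Y_t(0) \mid G=g]$ in terms of a small number of group-specific moments of unobserved heterogeneity, and then to show that matching pre-treatment outcome means across two or more periods pins down exactly those moments. By no anticipation, pre-treatment observed outcomes equal untreated potential outcomes, so the definitions of $\mathcal{G}^*_{\text{mean}}(S)$ and $\mathcal{G}^*_{\text{dist}}(S)$ translate directly into restrictions on moments of $Y_t(0)$. Throughout, I will use that $\tau^g = \ATT_{t^*_1}$ is equivalent to $\E[Y_{t^*_1}(0) \mid G=1] = \E[Y_{t^*_1}(0) \mid G=g]$, since $\E[Y_{t^*_1} \mid G=1] = \E[Y_{t^*_1}(0) \mid G=1] + \ATT_{t^*_1}$.

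For the \textbf{positive ($\greencheck$) cells with $S \geq 2$}, I would handle $\mathcal{G}^*_{\text{mean}}(S)$ first; the result for $\mathcal{G}^*_{\text{dist}}(S)$ then follows immediately from $\mathcal{G}^*_{\text{dist}}(S) \subseteq \mathcal{G}^*_{\text{mean}}(S)$. For linear trends, taking conditional expectations in \Cref{eqn:linear-trends} gives $\E[Y_t(0) \mid G=g] = \theta_t + \E[\eta \mid G=g] + \E[\lambda \mid G=g] \cdot t$, which is an affine function of $t$ with group-specific intercept $\E[\eta \mid G=g]$ and slope $\E[\lambda \mid G=g]$. Matching means at two distinct pre-treatment periods ($j=1,2$) yields a $2\times 2$ nonsingular linear system that forces $\E[\eta \mid G=1] = \E[\eta \mid G=g]$ and $\E[\lambda \mid G=1] = \E[\lambda \mid G=g]$; plugging back in gives equality at $t^*_1$. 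For the dynamic panel model, taking conditional expectations in \Cref{eqn:dynamic-panel-model} gives the recursion $\E[Y_t(0) \mid G=g] = \theta_t + \E[\eta \mid G=g] + \rho \E[Y_{t-1}(0) \mid G=g]$. Applying it at $t = t^*_1-1$ and using matched means at $t^*_1-1$ and $t^*_1-2$ solves for $\E[\eta \mid G=g] = \E[Y_{t^*_1-1}(0) \mid G=g] - \theta_{t^*_1-1} - \rho \E[Y_{t^*_1-2}(0) \mid G=g]$, which must therefore agree across $G=1$ and $G=g$; one more application of the recursion at $t=t^*_1$ then gives the desired post-treatment equality.

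For the \textbf{negative ($\redx$) cells with $S=1$}, since $\mathcal{G}^*_{\text{dist}}(1) \subseteq \mathcal{G}^*_{\text{mean}}(1)$, it suffices to produce a single counterexample of two groups with the same \emph{distribution} (not just mean) of $Y_{t^*_1-1}(0)$ but different $\E[Y_{t^*_1}(0)\mid G=g]$. For linear trends, take the treated group with $\eta = a_1$, $\lambda = 0$ (deterministic) and a candidate group with $\eta = a_g$, $\lambda = b_g$ (deterministic) chosen so that $a_1 = a_g + b_g(t^*_1 - 1)$, with i.i.d.\ $N(0,1)$ errors in both groups; the distributions of $Y_{t^*_1-1}(0)$ coincide but $\E[Y_{t^*_1}(0)\mid G=g] - \E[Y_{t^*_1}(0)\mid G=1] = b_g \neq 0$. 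For dynamic panel with $\rho, \theta_t$ common, pick groups with different deterministic $\eta$ but compensating deterministic initial conditions $Y_1(0)$ so that $\E[Y_{t^*_1-1}(0)\mid G=g]$ matches across groups; the propagated means at $t^*_1$ differ by $\E[\eta \mid G=1] - \E[\eta \mid G=g] \neq 0$.

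The main obstacle, which is modest, is being careful that the counterexamples respect $\E[e_t \mid G=g] = 0$ and the model restrictions, and that the positive argument for $\mathcal{G}^*_{\text{mean}}(S)$ really only needs means at two pre-treatment periods (which it does, because both $\E[Y_t(0) \mid G=g]$ expressions depend on only two group-specific scalars: $(\E[\eta\mid G=g], \E[\lambda\mid G=g])$ for linear trends and $\E[\eta\mid G=g]$ for the dynamic panel with common $\rho$). Everything else is linear algebra on the induced moment equations.
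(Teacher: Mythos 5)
Your proposal is correct and follows essentially the same route as the paper's proof: matching pre-treatment means over two periods pins down the group-specific heterogeneity means via a nonsingular linear system (linear trends) or the recursion for $\E[\eta \mid G=g]$ (dynamic panel), the $\mathcal{G}^*_{\text{dist}}(S) \subseteq \mathcal{G}^*_{\text{mean}}(S)$ inclusion handles the last column, and the negative cells are settled by a single distribution-matching counterexample per row. Your counterexamples use deterministic heterogeneity with common normal errors rather than the paper's stochastic versions (the paper reuses its Theorem~\ref{thm:1}(iv) construction with $F_t = t$ for linear trends and builds a normal-based example for the dynamic panel), but this is a cosmetic difference, not a different argument.
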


\bigskip

The proof is provided in Appendix \ref{app:proofs}.  \Cref{prop:linear-trends-and-dynamic-panel} shows that both linear trends and dynamic panel models for untreated potential outcomes behave very similarly to interactive fixed effects models with $R=1$ (i.e., with one interactive fixed effect).  Thus, defining the set of close comparison groups based on two periods of balance with respect to means is sufficient for $\tau^g_{\text{mean}}(2) = \ATT_{t^*_1}$.

\section{Robust Identification with Time Homogeneity} \label{sec:time-homogeneity}

\Cref{rem:debiasing-pre} emphasized that all of the two-period identification strategies that we have considered involve a cross-group comparison de-biased by differences between the groups in pre-treatment periods (where the particular form of de-biasing depended on the identification strategy).  Then, we tried to find groups where the de-biasing was unnecessary because the groups were already sufficiently similar.

A different, but related, viewpoint for all of these identification strategies is that they can be viewed as a de-biased before-after comparison where the de-biasing is to try to account for time effects on untreated potential outcomes.  For example, for difference-in-differences:
\begin{align*}
    \ATT_{t^*_1} = \underbrace{\E[Y_{t^*_1} - Y_{t^*_1-1} \mid G=1]}_{\text{before-after}} - \underbrace{\E[Y_{t^*_1} - Y_{t^*_1-1} \mid G=g]}_{\text{time de-biasing}}
\end{align*}
where, as above, the exact form of time de-biasing depends on the identification strategy.  In line with our approach above, this suggests that an alternative way to provide an identification-strategy-robust approach is to look for periods where outcomes do not change over time for the comparison groups, which is what we pursue here.

We define ``periods where outcomes do not change over time for the comparison groups'' in terms of there being no change in the mean outcome for each group across time periods (though the mean outcome need not be the same for each group).  Consider
\begin{align*}
    t_\text{mean}^{\text{TH}} &:=\max \big\{ s \in \{t^*_1, \ldots, \T\} : \E[Y_{t=s'}(0) \mid G=g] = \E[Y_{t=t^*_1-1}(0) \mid G=g] \ \text{for all } s'\leq s \text{ and } g \in \bar{\mathcal{G}}\big\}
\end{align*}
which is the maximum post-treatment period for which the mean outcome did not change relative to the period immediately preceding treatment for any comparison group.  We follow the convention of setting $t_{\text{mean}}^{\text{TH}}$ to 0 if the criteria is not met for any period.

Next, we define the following sets of time periods:
\begin{align*}
    \mathcal{T}_{\text{mean}} &= \big\{ t \in \{t^*_1, \ldots, \T\} : t \leq t_{\text{mean}}^{\text{TH}} \big\}
\end{align*}
which is the set of post-treatment time periods for which mean time homogeneity holds.  Note that this set can be empty if there are no time periods where there are no observed periods in which time homogeneity holds.

For this section, we consider the following estimand:
\begin{align*}
    \tau^{\text{TH}}_t := \E[Y_t - Y_{t^*_1-1} \mid G=1]
\end{align*}
for some period $t \in \mathcal{T}_{\text{mean}}$.  $\tau^{\text{TH}}_t$ is the average change in outcomes from period $Y_{t^*_1-1}$, the period right before group 1 becomes treated, to period $t$ for the treated group.

The following proposition contains our main result on the robustness of $\tau^{\text{TH}}_t$ under mean time homogeneity.

\begin{proposition}  \label{prop:time-homogeneity} For IFE, suppose that \Cref{ass:ife-comparison-group-rank} holds.  Under \Cref{ass:sampling,ass:groups,ass:no-anticipation,ass:single-treated}, and if $\mathcal{T}_{\text{mean}} \neq \emptyset$, then, for any $t \in \mathcal{T}_{\text{mean}}$, the following table indicates when $\tau^{\text{TH}}_t = \ATT_t$ under different identification strategies:

    \bigskip

    \begin{center}
    \begin{tabular}{lcc}
    \toprule
    & $\tau^{\text{TH}}_t = \ATT_{t}$  \\
    \midrule
    \textbf{DiD} & \greencheck \\
    \textbf{CiC} & \redx \\
    \textbf{LOU} & \redx \\
    \textbf{IFE} & \greencheck \\
    \textbf{Lat.~Unc.} & \redx   \\
    \bottomrule
    \end{tabular}
    \end{center}

\end{proposition}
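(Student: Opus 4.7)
The plan is to decompose $\tau^{\text{TH}}_t$ via \Cref{ass:no-anticipation,ass:single-treated} into the target $\ATT_t$ plus a counterfactual trend for the treated group:
\begin{align*}
\tau^{\text{TH}}_t = \ATT_t + \big\{\E[Y_t(0)\mid G=1] - \E[Y_{t^*_1-1}(0)\mid G=1]\big\}.
\end{align*}
The proposition then reduces to showing that the bracketed counterfactual trend vanishes under DiD and IFE, and to exhibiting counterexamples under CiC, LOU, and Lat.~Unc.~in which mean time homogeneity for $\bar{\mathcal{G}}$ coexists with a nonzero trend for $G=1$.

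For the DiD row, I would extend \Cref{ass:pt} to hold between $t^*_1-1$ and $t$: parallel trends equates the counterfactual trend for $G=1$ with the trend for any $g\in\bar{\mathcal{G}}$, which is identically zero for every $s\in\mathcal{T}_{\text{mean}}$ by mean time homogeneity. For the IFE row, \Cref{ass:ife-model} lets me write the counterfactual trend for group $g$ as $(\theta_t-\theta_{t^*_1-1}) + (F_t-F_{t^*_1-1})'\E[\lambda\mid G=g]$, and mean time homogeneity forces this expression to vanish for every $g\in\bar{\mathcal{G}}$. Stacking the resulting equations across $\bar{\mathcal{G}}$ and invoking the rank condition \Cref{ass:ife-comparison-group-rank} on $\bar{\mathbf{\Lambda}}$, I would conclude that the only solution is the zero drift $\theta_t=\theta_{t^*_1-1}$ and $F_t=F_{t^*_1-1}$, so the trend at $g=1$ is also zero. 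The main obstacle here is a careful rank argument: $\bar{\mathbf{\Lambda}}$ contains the $\E[\eta\mid G=g]$ column (rather than a column of ones), so one has to argue that enough independent variation across comparison groups forecloses any compensating drift between $\theta$ and $F$ that could leave the treated group's trend nonzero.

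For the \redx\ rows, I would construct settings in which the pre-treatment distribution of $Y_{t^*_1-1}(0)$ for $G=1$ differs from that for every $g\in\bar{\mathcal{G}}$. Under CiC, \Cref{ass:cic} makes the period-to-period map $\phi_t := \Q_{Y_t(0)\mid G=g}\circ \F_{Y_{t^*_1-1}(0)\mid G=g}$ common across $g$; I would choose a monotone $\phi_t$ with $\E[\phi_t(Y_{t^*_1-1}(0))-Y_{t^*_1-1}(0)\mid G=g]=0$ for every $g\in\bar{\mathcal{G}}$ (i.e.~mean time homogeneity in the comparison groups) while $\E[\phi_t(Y_{t^*_1-1}(0))-Y_{t^*_1-1}(0)\mid G=1]\neq 0$, which produces a nonzero counterfactual trend for $G=1$. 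Analogous constructions work for LOU by using the common conditional-mean function $\E[Y_t(0)\mid Y_{t^*_1-1}(0)=y]$ implied by \Cref{ass:lou}, and for Lat.~Unc.~by using the common map $h_t(\cdot)$ from \Cref{ass:latent-unconfoundedness} together with a multivariate $\xi\mid G$ whose distribution differs between $G=1$ and $\bar{\mathcal{G}}$. These counterexamples use only the freedom to shape the latent/outcome distributions across groups and so are routine to build once the asymmetry between $G=1$ and $\bar{\mathcal{G}}$ is in place.
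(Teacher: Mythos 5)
Your proposal follows essentially the same route as the paper's proof: the same decomposition of $\tau^{\text{TH}}_t$ into $\ATT_t$ plus the treated group's counterfactual trend, the same parallel-trends transfer argument for DiD, the same stack-and-rank argument for IFE (where, as you anticipate, the paper resolves the $\E[\eta \mid G=g]$-column wrinkle by working with the column-of-ones version of $\bar{\mathbf{\Lambda}}$, so that mean time homogeneity across $\bar{\mathcal{G}}$ forces $\Delta \theta_t = 0$ and $\Delta F_t = 0$), and the same counterexample mechanism for CiC, LOU, and latent unconfoundedness (a common transition map that is mean-preserving on every comparison group's distribution but not on the treated group's). One small caution on your claim that the counterexamples are ``routine'': pairwise difference between the treated and comparison pre-treatment distributions is not quite sufficient---the paper additionally requires that the treated density $f_1$ not be a linear combination of the comparison densities $f_g$, since a perturbation orthogonal to every $f_g$ is automatically orthogonal to any mixture of them.
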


The proof is provided in Appendix \ref{app:proofs}. The proposition shows that difference-in-differences and interactive fixed effects identification strategies both imply that $\tau^{\text{TH}}_t = \ATT_t$ for any $t$ in the set of time periods where mean time homogeneity holds.  That mean time homogeneity is sufficient for $\tau^{\text{TH}}_t = \ATT_t$ under an IFE identification strategy is quite interesting, as typically it would not be possible to identify $\ATT_t$ under IFE with only two time periods, where quasi-differencing strategies would typically require at least $(R+1)$ pre-treatment periods.  However, under an IFE model for untreated potential outcomes, if we have access to a sufficient number of ``distinct'' comparison groups, then mean time homogeneity effectively implies that the factors, $F_t$, are constant over the relevant time periods, which then implies that untreated potential outcomes for the treated group are time homogeneous as well.

In contrast, change-in-changes, lagged outcome unconfoundedness, and latent unconfoundedness do not guarantee that $\tau^{\text{TH}}_t = \ATT_t$ even when there are no time effects for comparison groups.  The intuition for this result is that the nonlinearity of each of these identification strategies leads to mean time homogeneity not being sufficient to guarantee that untreated potential outcomes for the treated group are time homogeneous as well. That being said, \Cref{prop:lou-mean-time-homogeneity} in Appendix \ref{app:proofs} counterbalances this result, at least to some extent, by showing that mean time homogeneity can be sufficient for $\tau^{\text{TH}}_t = \ATT_t$ under lagged outcome unconfoundedness under a ``flexible'' parametric model for $\E[Y_{t}(0) \mid Y_{t^*_1-1}(0), G=g]$, where the amount of flexibility that can be allowed for is connected to the number of available comparison groups.

\section{Multiply Robust Identification}

As discussed above, comparisons (i)between close comparison groups and (ii) based on time homogeneity can both be used to identify $\ATT_{t^*_1}$.  In both cases, these comparisons are robust to multiple identification strategies. In some applications, it may be the case that the researcher has access to both close comparison groups and a setting where time homogeneity holds.  Given that the robustness is non-nested for each approach, it suggests combining these two approaches to provide an even more robust estimator of $\ATT_{t^*_1}$.  To do so, we consider the following estimand:
\begin{align*}
    \tau_\text{mr} := \E[Y_{t^*_1} \mid G=1] - \sum_{g \in \mathcal{G}^*} w_g \E[Y_{t^*_1} \mid G=g] - \left(\E[Y_{t^*_1-1} \mid G=1] - \sum_{g \in \mathcal{G}^*} w_g \E[Y_{t^*_1-1} \mid G=g]\right)
\end{align*}
$\tau_{\text{mr}}$ is the difference between the post-treatment comparison of means between the treated group and close comparison groups adjusted by the analogous pre-treatment comparison of means.  The following discussion illustrates that $\tau_{\text{mr}}$ identifies $\ATT_{t^*_1}$ if either close comparison groups identification or time homogeneity identification holds.

\bigskip

\noindent \underline{\textit{Case 1:}} Suppose that close comparison groups identification works, i.e., $\E[Y_{t^*_1}(0) \mid G=1] = \sum_{g \in \mathcal{G}^*} w_g \E[Y_{t^*_1}(0) \mid G=g]$, but time homogeneity does not hold, i.e., $\E[Y_{t^*_1}(0) \mid G=1] \neq \E[Y_{t^*_1-1}(0) \mid G=1]$.  Then,%
\begin{align*}
    \tau_\text{mr} &= \Big(\E[Y_{t^*_1}(1) \mid G=1] - \sum_{g \in \mathcal{G}^*} w_g \E[Y_{t^*_1}(0) \mid G=g]\Big) + \Big(\E[Y_{t^*_1-1}(0) \mid G=1] - \sum_{g \in \mathcal{G}^*} w_g \E[Y_{t^*_1-1}(0) \mid G=g]\Big) \\
    &= \ATT_{t^*_1} + 0 = \ATT_{t^*_1}
\end{align*}
where the first equality holds by writing $\tau_{\text{mr}}$ in terms of potential outcomes, and the second equality holds because close comparison groups holds---the first term is equal to $\ATT_{t^*_1}$ because close comparison groups holds, and the second term is equal to zero because, under any of our definitions of close comparison groups, the treated group and all comparison groups have the same mean outcome in the pre-treatment period $t^*_1-1$.

\bigskip

\noindent \underline{\textit{Case 2:}}  Suppose that time homogeneity identification holds, i.e., $\E[Y_{t^*_1}(0) \mid G=1] = \E[Y_{t^*_1-1}(0) \mid G=1]$, but close comparison groups does not hold, i.e., $\E[Y_{t^*_1}(0) \mid G=1] \neq \sum_{g \in \mathcal{G}^*} w_g \E[Y_{t^*_1}(0) \mid G=g]$.  Then, %
\begin{align*}
    \tau_\text{mr} &= \Big(\E[Y_{t^*_1}(1) \mid G=1] - \sum_{g \in \mathcal{G}^*} w_g \E[Y_{t^*_1}(0) \mid G=g]\Big) + \Big(\E[Y_{t^*_1-1}(0) \mid G=1] - \sum_{g \in \mathcal{G}^*} w_g \E[Y_{t^*_1-1}(0) \mid G=g]\Big) \\
    &= \Big(\E[Y_{t^*_1}(1) \mid G=1] - \E[Y_{t^*_1-1}(0) \mid G=1]\Big) - \sum_{g \in \mathcal{G}^*} w_g \Big( \E[Y_{t^*_1}(0) \mid G=g] - \E[Y_{t^*_1-1}(0) \mid G=g]\Big) \\
    &= \ATT_{t^*_1} + 0 = \ATT_{t^*_1}
\end{align*}
where the first equality holds by writing $\tau_{\text{mr}}$ in terms of potential outcomes, the second equality holds by rearranging terms, and the last line holds because time homogeneity holds---the first term is equal to $\ATT_{t^*_1}$ because time homogeneity holds, and the second term is equal to zero because time homogeneity holds among all comparison groups.

\section{Estimation and Inference}

In this section, we propose an estimator for $\ATT_{t^*_1}$.  The key challenge for our estimation procedure is that the set of close comparison groups is not known and must be estimated.  Given an estimate of this, we effectively then plug it into the identification formula from \Cref{thm:1} to obtain an estimate of $\ATT_{t^*_1}$.  We also propose a method for conducting valid inference, taking into account the uncertainty in choosing the set of close comparison groups.

Slightly abusing notation, in this section, we use the notation $\mathcal{G}^*$ to indicate the set of close comparison groups.  This is the same notation we used above for choosing the set of close comparison groups based on having the same distribution of outcomes in the period immediately before the treatment; however, the arguments in this section apply to any of the notions of forming close comparison groups earlier in the paper.  We consider the following estimator for $\ATT_{t^*_1}$. %
\begin{align*}
    \widehat{\ATT}_{t^*_1}(h_n) = \bar{Y}_{1,t^*_1} - \sum_{g \in \mathcal{G}} \widehat{w}_g(h_n) \cdot \bar{Y}_{g,t^*_1}
\end{align*}
where
\begin{align*}
    \bar{Y}_{g,t^*_1} := \frac{1}{n_g} \sum_{i=1}^n \indicator{G_i=g} Y_{it^*_1} \quad \text{and} \quad \widehat{w}_g(h_n) := \frac{\K\big(\hat{\d}_g/h_n\big) \cdot \widehat{p}_g}{\displaystyle \sum_{g' \in \mathcal{G}} \K\big(\hat{\d}_{g'}/h_n\big) \cdot \widehat{p}_{g'}}
\end{align*}
and where $\K$ is a compactly supported kernel function, $h_n$ is a bandwidth, and $\hat{\d}_g$ is an estimate of the distance between group 1 and group $g$.  Together, having a compactly supported kernel and the bandwidth result in a form of model selection, where only certain groups receive any positive weight.
We use the notation $\widehat{\mathcal{G}}^*$ to denote the implied estimator of the set of close comparison groups.  An important special case is when $\K$ is a uniform kernel, i.e., $\K(u) = \indicator{|u| \leq 1}$, in which case the estimator reduces to
\begin{align*}
    \widehat{\ATT}_{t^*_1} = \frac{1}{n_1} \sum_{i=1}^n \indicator{G_i=1} Y_{it^*_1} - \frac{1}{n_{\widehat{\mathcal{G}}^*}} \sum_{i=1}^n \indicator{G_i \in \widehat{\mathcal{G}}^*} Y_{it^*_1}
\end{align*}
where %
$n_1$ is the number of units in the treated group, and $n_{\widehat{\mathcal{G}}^*}$ is the number of units in the estimated close comparison groups.  This estimator is simply the difference in average outcomes for the treated group relative to the average outcomes in the estimated close comparison groups in the post-treatment period.

The key estimation task is to measure how ``close'' each untreated group $g$ is to the treated group in pre-treatment periods. Next, we give two leading examples for distance metrics, though we note that our theory applies to any distance metric that satisfies fairly minimal properties discussed below, and the appropriate choice of distance metric mainly comes down to the definition of close comparison groups that the researcher chooses.  For example, if the researcher chooses close comparison groups based on having the same distribution of outcomes as the treated group in the period immediately preceding treatment, then a natural distance metric is
\begin{align*}
    \d^{\text{dist}}_g := \frac{ \int_{0}^{1} \big( \Q_{1}(u) - \Q_{g}(u) \big)^2 \, du }{ \sqrt{(\sigma_1^2 + \sigma_g^2)/2} }
\end{align*}
where $\Q_g(u) := \Q_{Y_{t^*_1-1}|G=g}(u)$ is the quantile function of the outcome distribution for group $g$ in the period immediately preceding treatment, and $\sigma_g^2 := \Var(Y_{t^*_1-1} \mid G=g)$.  $\d^{\text{dist}}_g$ is a standardized 2-Wasserstein distance between the treated and group $g$ pre-treatment outcome distributions.  The distance is zero if and only if the pre-treatment outcome distributions match, i.e., if group $g$ belongs to the set of close comparison groups.

Similarly, if close comparison groups are based on having the same mean outcomes as the treated group across several pre-treatment periods, then a natural distance metric is
\begin{align*}
    \quad \d^{\text{mean}}_g := \sqrt{ (\bm{\mu}_g - \bm{\mu}_1)' \Big((\bm{\Sigma}_1 + \bm{\Sigma}_g)/2\Big)^{-1} (\bm{\mu}_g - \bm{\mu}_1) }.
\end{align*}
where $\bm{\mu}_g := (\E[Y_{t^*_1-1} \mid G=g], \ldots, \E[Y_{t^*_1-S} \mid G=g])'$ is the vector of mean outcomes for group $g$ for the $S$ periods preceding treatment, and $\bm{\Sigma}_g$ is the covariance matrix of outcomes for group $g$ across the same $S$ periods.  $\d^{\text{mean}}_g$ is a multivariate standardized distance between the treated and group $g$ pre-treatment mean outcome vectors---it reduces to the standardized difference in \citet{imbens-rubin-2015} in the case where $S=1$.  The distance is zero if and only if the pre-treatment mean outcome vectors match, i.e., if group $g$ belongs to the set of close comparison groups.

Whatever the distance metric that is ultimately used, it needs to be estimated.  For the examples given above, natural estimators are
\begin{align*}
    \widehat{\d}^{\text{dist}}_g &:= \frac{ \displaystyle \frac{1}{J}\sum_{j=1}^J \big( \widehat{\Q}_{1}(u_j) - \widehat{\Q}_{g}(u_j) \big)^2 }{ \sqrt{(\widehat{\sigma}_1^2 + \widehat{\sigma}_g^2)/2} } \quad \text{and} \quad
    \widehat{\d}^{\text{mean}}_g := \sqrt{ (\widehat{\bm{\mu}}_g - \widehat{\bm{\mu}}_1)' \Big((\widehat{\bm{\Sigma}}_1 + \widehat{\bm{\Sigma}}_g)/2\Big)^{-1} (\widehat{\bm{\mu}}_g - \widehat{\bm{\mu}}_1) }
\end{align*}
where $\mathcal{U} = \{u_1, \ldots, u_J\}$ is an equally spaced grid on $(0,1)$, $\widehat{\Q}_g(u)$ is the empirical quantile function of the outcome distribution for group $g$ in the period immediately preceding treatment, $\widehat{\sigma}_g^2$ is the sample variance of outcomes for group $g$ in the same period, $\widehat{\bm{\mu}}_g$ is the vector of sample mean outcomes for group $g$ across the $S$ periods preceding treatment, and $\widehat{\bm{\Sigma}}_g$ is the sample covariance matrix of outcomes for group $g$ across the same $S$ periods.

\bigskip

We make the following additional assumptions for estimation and inference.

\begin{assumption}[Regularity Conditions] \label{ass:est-regularity} \
    \begin{itemize}
        \item [(i)] $\mathcal{G}^*$ is non-empty
        \item [(ii)] $p_g > 0$ for all $g \in \mathcal{G}$.
        \item [(iii)] $\E\big[ || \mathbf{Y} ||^2 \big] < \infty$ where $\mathbf{Y} = (Y_1, \ldots, Y_\T)'$.
    \end{itemize}
\end{assumption}

\begin{assumption}[Group Separation] \label{ass:est-group-separation} $\d_g = 0$ for all $g \in \mathcal{G}^*$.  Let $\d_{\text{min}} = \min_{g \notin \mathcal{G}^*} d_g$.  $\d_{\text{min}} > 0$.
\end{assumption}

\begin{assumption}[Bandwidth] \label{ass:est-bandwidth}
    The bandwidth $h_n$ satisfies $h_n \downarrow 0$ and $\sqrt{n_{\mathrm{min}}} \, h_n^2 \to \infty$, as $n_{\mathrm{min}} \rightarrow \infty$.
\end{assumption}

\begin{assumption}[Kernel] \label{ass:est-kernel}
    The kernel function $\K : \mathbb{R} \to \mathbb{R}$ satisfies:
    \begin{itemize}
        \item[(i)] $\K$ is symmetric: $\K(u) = \K(-u)$ for all $u \in \mathbb{R}$;
        \item[(ii)] $\K$ has compact support: $\K(u) = 0$ for $|u| > 1$;
        \item[(iii)] $\K(0) > 0$;
        \item[(iv)] $\K$ is twice continuously differentiable in a neighborhood of $0$;
        \item[(v)] $\K$ is bounded.
    \end{itemize}
\end{assumption}

\Cref{ass:est-regularity} provides several mild regularity conditions.  \Cref{ass:est-group-separation} says that, in population, the $\d_g$ should be equal to zero for close comparison groups and be non-zero for groups that are not close comparison groups.

\Cref{ass:est-bandwidth} provides the needed properties of the bandwidth.  It should converge to zero, but not too fast relative to the sample size.  The condition that $\sqrt{n_{\text{min}}} h_n^2 \rightarrow \infty$ can be weakened to $\sqrt{n_\text{min}} h_n \rightarrow \infty$ to establish consistency, but we use the stronger condition to establish asymptotic normality---it can be considered as a form of undersmoothing where we effectively impose a more stringent requirement on the allowable distance between group $g$ and the treated group in pre-treatment periods for $g$ to be considered a close comparison group.

\Cref{ass:est-kernel}, particularly the part about the kernel having compact support, rules out some kernels such as the Gaussian kernel, but the assumption allows for a uniform kernel as well as other kernels with compact support, such as an Epanechnikov kernel.  Moreover, using a compactly supported kernel provides a natural way to define $\widehat{\mathcal{G}}^* = \Big\{ g \in \mathcal{G} : \K\big(\widehat{\d}_g / h_n\big) > 0 \Big\}$, i.e., the estimated set of close comparison groups is given by all the groups having $\K\big(\widehat{\d}_g/h\big) > 0$.  We use \Cref{ass:est-kernel}(iv) to establish asymptotic normality of our estimator, but the differentiability requirement can be relaxed to continuity at $u=0$ for consistency.

\bigskip

\begin{assumption}[Asymptotic linearity of distance estimators]\label{ass:dist-if}
For each $g \in \mathcal{G}$, the distance estimator $\widehat \d_g$ admits an asymptotically linear representation
\begin{align*}
\sqrt{n_g}\,(\widehat{\d}_g - \d_g)
= \frac{1}{\sqrt{n_g}} \sum_{i=1}^n \indicator{G_i=g}\,\psi_{\d,g}(W_i) + o_p(1),
\end{align*}
where $W_i$ denotes the observed data for unit $i$, $\E[\psi_{\d,g}(W)\mid G=g]=0$, and $\Var\big(\psi_{\d,g}(W)\mid G=g\big)<\infty$.
\end{assumption}

\Cref{ass:dist-if} imposes a high-level condition on the distance metric and its estimator.  This assumption is likely to be weak in practice as all common distance metrics should satisfy this condition, though it may require additional regularity conditions.

\bigskip 

The next proposition provides our first main estimation result.

\begin{proposition} \label{prop:model-selection-consistency}
    Under \Cref{ass:sampling,ass:groups,ass:no-anticipation,ass:single-treated,ass:est-regularity,ass:est-group-separation,ass:est-bandwidth,ass:est-kernel,ass:dist-if}, the estimator $\widehat{\mathcal{G}}^*$ of $\mathcal{G}^*$ is consistent. That is, $\P( \widehat{\mathcal{G}}^* = \mathcal{G}^*) \to 1$ as $n_{\mathrm{min}} \to \infty$.
\end{proposition}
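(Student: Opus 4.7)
The plan is to show the event $\{\widehat{\mathcal{G}}^* = \mathcal{G}^*\}$ holds with probability tending to one by establishing, separately, correct inclusion and correct exclusion for each group, and then taking a union bound over the finite set $\mathcal{G}$.

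First I would handle close comparison groups. Fix $g \in \mathcal{G}^*$, so $\d_g = 0$ by \Cref{ass:est-group-separation}. By \Cref{ass:dist-if}, $\sqrt{n_g}(\widehat{\d}_g - \d_g) = O_p(1)$, hence $\widehat{\d}_g = O_p(1/\sqrt{n_g})$. Combining $h_n \downarrow 0$ with $\sqrt{n_{\min}}\,h_n^2 \to \infty$ (which, since $h_n \leq 1$ eventually, implies $\sqrt{n_{\min}}\,h_n \to \infty$, and in turn $\sqrt{n_g}\,h_n \to \infty$ because $n_g/n \to p_g > 0$), yields $\widehat{\d}_g/h_n = O_p\!\left(1/(\sqrt{n_g}\,h_n)\right) = o_p(1)$. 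By \Cref{ass:est-kernel}(iii)--(iv), $\K$ is continuous at $0$ with $\K(0) > 0$, so there exists $\delta > 0$ such that $\K(u) > 0$ whenever $|u| < \delta$. The event $\{|\widehat{\d}_g/h_n| < \delta\}$ has probability tending to one, and on this event $\K(\widehat{\d}_g/h_n) > 0$, which means $g \in \widehat{\mathcal{G}}^*$.

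Next I would handle non-close groups. Fix $g \in \mathcal{G} \setminus \mathcal{G}^*$, so $\d_g \geq \d_{\min} > 0$. By \Cref{ass:dist-if}, $\widehat{\d}_g = \d_g + O_p(1/\sqrt{n_g}) \to_p \d_g$. Since $h_n \downarrow 0$ and $\d_g > 0$, the ratio satisfies $\widehat{\d}_g/h_n \to_p +\infty$; in particular, $P(|\widehat{\d}_g/h_n| > 1) \to 1$. By \Cref{ass:est-kernel}(ii), $\K(u) = 0$ for $|u| > 1$, so on this high-probability event $\K(\widehat{\d}_g/h_n) = 0$, meaning $g \notin \widehat{\mathcal{G}}^*$.

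Finally, because $\mathcal{G}$ is finite (implicit from the setup; each $p_g > 0$ and probabilities sum to one), a union bound over the two complementary events above gives
\begin{align*}
    P\big(\widehat{\mathcal{G}}^* \neq \mathcal{G}^*\big) \;\leq\; \sum_{g \in \mathcal{G}^*} P\big(\K(\widehat{\d}_g/h_n) = 0\big) \;+\; \sum_{g \notin \mathcal{G}^*} P\big(\K(\widehat{\d}_g/h_n) > 0\big) \;\longrightarrow\; 0,
\end{align*}
establishing $P(\widehat{\mathcal{G}}^* = \mathcal{G}^*) \to 1$. I do not expect a serious obstacle here: the asymptotic linearity in \Cref{ass:dist-if} delivers the parametric $\sqrt{n_g}$ rate, the bandwidth lies in the ``sweet spot'' where $\widehat{\d}_g/h_n \to 0$ for true close groups but diverges for distant groups, and the compact support of $\K$ together with $\K(0) > 0$ cleanly converts these rate statements into a thresholding argument. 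The only mildly delicate point is confirming that $\sqrt{n_{\min}}\,h_n \to \infty$ follows from the stated bandwidth condition---which, as noted, is immediate once $h_n \leq 1$---so even the bandwidth condition used here is weaker than the stronger version \Cref{ass:est-bandwidth} imposes for the subsequent asymptotic normality results.
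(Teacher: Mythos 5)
Your proposal is correct and follows essentially the same route as the paper, which organizes the identical substance into four lemmas (no false inclusions/exclusions, then set inclusion in each direction via Boole's inequality) before combining them: the rate argument $\widehat{\d}_g/h_n = o_p(1)$ plus continuity and positivity of $\K$ at $0$ for groups in $\mathcal{G}^*$, consistency of $\widehat{\d}_g$ plus compact support of $\K$ for groups outside $\mathcal{G}^*$, and a union bound over the finite set of groups. Your closing observation that only $\sqrt{n_{\mathrm{min}}}\,h_n \to \infty$ is needed for this result also matches the paper's own remark following \Cref{ass:est-bandwidth}.
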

The proof is provided in \Cref{app:proofs}.  \Cref{prop:model-selection-consistency} says that our implied estimator of the close comparison groups, $\widehat{\mathcal{G}}^*$, is consistent for the true set of close comparison groups, $\mathcal{G}^*$, under the assumptions discussed above.

Next, we move towards establishing the asymptotic distribution of our estimator.  Towards this end, consider
\begin{align*}
    \oracle{\ATT}_{t^*_1} = \bar{Y}_{1,t^*_1} - \sum_{g \in \mathcal{G}^*} \frac{\widehat{p}_g}{ \sum_{g' \in \mathcal{G}^*}\widehat{p}_{g'} } \cdot \bar{Y}_{g,t^*_1}
\end{align*}
which is the oracle estimator of $\ATT_{t^*_1}$ that knows the true set of close comparison groups $\mathcal{G}^*$.

Finally, the following result provides the limiting distribution of our estimator.
\begin{theorem} \label{thm:asymptotic-normality} Under \Cref{ass:sampling,ass:groups,ass:no-anticipation,ass:single-treated,ass:est-regularity,ass:est-group-separation,ass:est-bandwidth,ass:est-kernel,ass:dist-if},
    \begin{align*}
        \sqrt{n}\left( \widehat{\ATT}_{t^*_1}(h_n) - \ATT_{t^*_1}\right) &= \frac{1}{\sqrt{n}} \sum_{i=1}^n \psi(W_i) + o_p(1) \\
    & \xrightarrow{d} \N(0,V)
    \end{align*}
    where
    \begin{align*}
    & \psi(W) = \frac{\indicator{G = 1}}{p_1} \big(Y_{t^*_1} - \E[Y_{t^*_1} \mid G=1] \big) - \frac{\indicator{G \in \mathcal{G}^*}}{p_{\mathcal{G}^*}} \big(Y_{t^*_1} - \E[Y_{t^*_1} \mid G \in \mathcal{G}^*] \big) \\[4pt]
    \text{and } & V = \E\big[ \psi(W)^2 \big] = \frac{\Var(Y_{t^*_1} \mid G=1)}{p_1} + \frac{\Var(Y_{t^*_1} \mid G \in \mathcal{G}^*)}{p_{\mathcal{G}^*}}
    \end{align*}
\end{theorem}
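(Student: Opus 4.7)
The plan is to combine the model-selection consistency result (\Cref{prop:model-selection-consistency}) with a standard influence-function argument for the oracle estimator, and then control the difference between the kernel-weighted estimator and the oracle using the smoothness of $\K$ at zero together with the undersmoothing bandwidth condition.

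First, I would replace the estimated comparison-group set with the true one. By \Cref{prop:model-selection-consistency}, $\P(\widehat{\mathcal{G}}^*=\mathcal{G}^*)\to 1$, so with probability approaching one, $\widehat{w}_g(h_n)=0$ for all $g\notin \mathcal{G}^*$. Hence it suffices to study $\widehat{\ATT}_{t^*_1}(h_n)$ on this event and show that $\sqrt{n}\bigl(\widehat{\ATT}_{t^*_1}(h_n)-\oracle{\ATT}_{t^*_1}\bigr)=o_p(1)$, so that the asymptotic behavior of $\widehat{\ATT}_{t^*_1}(h_n)$ coincides with that of the oracle estimator $\oracle{\ATT}_{t^*_1}$.

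Second, I would analyze the oracle. A direct calculation shows $\oracle{\ATT}_{t^*_1}=\bar{Y}_{1,t^*_1}-\bar{Y}_{\mathcal{G}^*,t^*_1}$, where $\bar{Y}_{\mathcal{G}^*,t^*_1}:=\bigl(\sum_i\indicator{G_i\in\mathcal{G}^*}Y_{it^*_1}\bigr)/\bigl(\sum_i\indicator{G_i\in\mathcal{G}^*}\bigr)$, because $\sum_{g\in\mathcal{G}^*}\widehat{p}_g\bar{Y}_{g,t^*_1}=n^{-1}\sum_i\indicator{G_i\in\mathcal{G}^*}Y_{it^*_1}$ and $\sum_{g\in\mathcal{G}^*}\widehat{p}_g=\widehat{p}_{\mathcal{G}^*}$. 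Each of these two terms is a ratio of sample means, so a standard delta-method expansion, using $\E[\|\mathbf{Y}\|^2]<\infty$ and $p_1,p_{\mathcal{G}^*}>0$, yields the asymptotically linear representation with influence function $\psi(W)$ as stated, and the claim that $\E[Y_{t^*_1}\mid G=1]-\E[Y_{t^*_1}\mid G\in\mathcal{G}^*]=\ATT_{t^*_1}$ follows from \Cref{thm:1} applied to the (equally weighted) close comparison groups. The asymptotic normality and variance then follow from the i.i.d.\ CLT (\Cref{ass:sampling}) and independence across the two disjoint group sets.

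The main obstacle, and the only nonroutine part, is step one: controlling the difference between the kernel-induced weights and the uniform weights on $\mathcal{G}^*$. On the event $\widehat{\mathcal{G}}^*=\mathcal{G}^*$, this difference is driven by $\K(\widehat{\d}_g/h_n)-\K(0)$ for $g\in\mathcal{G}^*$, since for such $g$ the population distance $\d_g=0$ and \Cref{ass:dist-if} gives $\widehat{\d}_g=O_p(n^{-1/2})$. Because $\K$ is symmetric and twice continuously differentiable at zero, $\K'(0)=0$ and a second-order Taylor expansion gives
\begin{equation*}
\K(\widehat{\d}_g/h_n)-\K(0)=\tfrac{1}{2}\K''(0)\,\widehat{\d}_g^{\,2}/h_n^2+o_p\bigl(\widehat{\d}_g^{\,2}/h_n^2\bigr)=O_p\bigl(1/(n h_n^2)\bigr).
\end{equation*}
Under \Cref{ass:est-bandwidth}, $\sqrt{n}\,h_n^2\to\infty$, hence $\sqrt{n}/(nh_n^2)=1/(\sqrt{n}h_n^2)\to 0$, so $\sqrt{n}\bigl(\K(\widehat{\d}_g/h_n)-\K(0)\bigr)=o_p(1)$. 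Propagating this through the normalized weights (dividing by a denominator that converges in probability to $\K(0)p_{\mathcal{G}^*}>0$ by the continuous mapping theorem) and combining with boundedness of $\bar{Y}_{g,t^*_1}$ in $L^2$ gives $\sqrt{n}\bigl(\widehat{\ATT}_{t^*_1}(h_n)-\oracle{\ATT}_{t^*_1}\bigr)=o_p(1)$. This step is where both the symmetry/smoothness of the kernel (\Cref{ass:est-kernel}) and the undersmoothing condition are essential; without $\K'(0)=0$, only $\sqrt{n}h_n\to\infty$ would be required, but also the kernel-weighted estimator would carry a non-negligible bias term.
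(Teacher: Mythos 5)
Your proposal is correct and follows essentially the same route as the paper's own proof: reduce to the oracle estimator by a second-order Taylor expansion of $\K$ at zero (exploiting $\K'(0)=0$ from symmetry plus the undersmoothing condition $\sqrt{n}\,h_n^2\to\infty$), eliminate non-close groups via the no-false-inclusion/model-selection-consistency argument, and then apply a standard asymptotic-linearity plus CLT argument to $\bar{Y}_{1,t^*_1}-\bar{Y}_{\mathcal{G}^*,t^*_1}$. The only differences are cosmetic: you work with rates in $n$ where the paper uses $n_{\mathrm{min}}$ (equivalent here since $p_g>0$ for all $g$), and you are in fact slightly more careful than the paper in flagging that centering at $\ATT_{t^*_1}$ requires the identification result of \Cref{thm:1}.
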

The proof is provided in \Cref{app:proofs}.  Our argument to show this result proceeds by showing that, under the conditions discussed above, our estimator is asymptotically equivalent to the oracle estimator above where $\mathcal{G}^*$ is known.  Then, our estimator amounts to the difference in average outcomes between units in the treated group and units in any of the treated group's close comparison groups.

\section{Extensions} \label{sec:more-extensions}

This section discusses several extensions to our main framework. We first show how to incorporate observed covariates. %
Second, we discuss how to handle settings with multiple treated groups and variation in treatment timing.

\subsection{Covariates} \label{subsec:covariates}

All of the arguments discussed above can be extended to hold conditional on covariates.  For example, for some $x \in \text{support}(X)$, we can define
\begin{align*}
    \mathcal{G}^*(x) := \big\{g \in \bar{\mathcal{G}} \bigm| (Y_{t^*_1-1} \mid X=x, G=1) \overset{d}{=} (Y_{t^*_1-1} \mid X=x, G=g) \big\}
\end{align*}
where $\mathcal{G}^*(x)$ is the conditional-on-covariates version of $\mathcal{G}^*$ from earlier in the paper, which is based on having the same distribution of outcomes in the period immediately preceding treatment; notice that we could define alternative versions of conditional-on-covariates close comparison groups based on means and/or multiple periods.  Then, defining
\begin{align*}
    \tau_g(x) &:= \E[Y_{t^*_1}|X=x,G=1] - \E[Y_{t^*_1} \mid X=x, G=g] \quad \text{for } g \in \mathcal{G}^*(x), \\[4pt]
    \text{and} \quad \tau(x) &:= \sum_{g \in \mathcal{G}^*(x)} \tau_g(x) w_g(x)
\end{align*}
for some weighting function $w_g(x)$---e.g., $w_g(x) = \P\big(G=g \mid X=x, G \in \mathcal{G}^*(x)\big)$.  Then, using the same arguments as discussed above, it is possible to show conditional-on-covariates versions of the results in \Cref{thm:1} and \Cref{prop:alt-def-close-comparison-groups,prop:linear-trends-and-dynamic-panel}, under which
\begin{align*}
    \tau(x) = \ATT(x)
\end{align*}
where $\ATT(x) := \E[Y_{t^*_1}(1) - Y_{t^*_1}(0) \mid X=x, G=1]$, is the conditional average treatment effect on the treated for $G=1$.  Moreover, it immediately follows that
\begin{align*}
    \E[\tau(X) \mid G=1] = \ATT
\end{align*}
There are two empirical issues worth briefly discussing.  First, the groups composing $\mathcal{G}^*(x)$ could change for different values of the covariates.  This does not create any issues except in the case where $\mathcal{G}^*(x)$ is empty for some values of the covariates, in which case $\tau(x)$ is not defined.  In this case, a natural approach is to redefine the target parameter to be
\begin{align*}
    \widetilde{\ATT} := \E[Y_{t^*_1}(1) - Y_{t^*_1}(0) \mid X \in \bar{\mathcal{X}}, G=1]
\end{align*}
where $\bar{\mathcal{X}} := \{ x \in \text{support}(X) : \mathcal{G}^*(x) \neq \emptyset \}$.  $\widetilde{\ATT}$ is the average treatment effect on treated across only values of the covariates for which we are able to find at least one close comparison group.  Re-defining the target parameter based on trimming out certain values of the covariates where there are no available comparison groups is conceptually similar to the idea of trimming when there are no available comparison units due to violations of common support in settings with a single comparison group (\citet{crump-hotz-imbens-mitnik-2009}).

Second, we leave dealing with high-dimensional covariates (or even continuous covariates) in estimation for future work; however, we note that our estimation results essentially apply immediately for low-dimensional, discrete covariates.

\subsection{Multiple Treated Groups and Variation in Treatment Timing} \label{subsec:multiple-treated}

Our approach extends straightforwardly to settings when multiple groups participate in the treatment and where treatment timing can vary across groups.  In these settings, it is useful to define the group-time average treatment effect, $ATT(g,t)$, the average treatment effect for group $g$ in time period $t$.  For each $(g,t)$ pair where $t \geq t^*_g$ (i.e., periods after group $g$ first becomes treated), comparison groups can be formed using the procedures outlined in the previous sections.  The key difference relative to the case emphasized above with a single treated group is that the set of available comparison groups may vary across different $(g,t)$ pairs.  For instance, in later time periods, the pool of not-yet-treated groups may shrink as more groups become treated.  Despite these time-varying comparison group structures, the core logic of our approach applies period-by-period and group-by-group: for each $ATT(g,t)$, one can construct a set of close comparison groups and compare average outcomes in time period $t$ for group $g$ relative to its close comparison groups.  Once the $ATT(g,t)$ parameters are estimated, they can be aggregated into more interpretable summary parameters.  Common aggregations include event study parameters, which trace out treatment effect dynamics relative to the time of treatment adoption, and an overall average treatment effect that averages across all groups and time periods.  The specific aggregation scheme depends on the ultimate target of the analysis; see \citet{callaway-santanna-2021} and \citet{callaway-2023} for detailed discussions of aggregation strategies and their interpretation.

\section{Conclusion}

This paper has considered an approach to panel data causal inference based on finding close comparison groups.  Rather than using panel data to adjust for differences in unobserved heterogeneity between treated and comparison groups, we use it to search for groups that are similar to the treated group in terms of their pre-treatment outcomes. We demonstrate that this approach is identification-strategy-robust, recovering the average treatment effect on the treated under many different non-nested panel data identification strategies.  Our approach also avoids the auxiliary assumptions that are required by the most common approaches to causal inference with panel data, such as difference-in-differences, change-in-changes, and lagged outcome unconfoundedness. The key insight is that comparison groups satisfying relatively simple criteria in pre-treatment periods provide valid comparisons under multiple identification strategies simultaneously, substantially increasing the credibility of causal inferences when such groups exist.

Our approach asks more of the research design by requiring the existence of close comparison groups, which is not necessary for most existing panel data approaches. This is the main downside of our approach.  However, when close comparison groups exist, researchers can avoid relying on functional form assumptions and identification strategies that may be difficult to justify.  While this restricts the set of applications where our method applies relative to approaches that impose stronger assumptions, it offers substantially more credible inferences in settings where multiple potential comparison groups are available and researchers have access to underlying microdata within each group.

{
\begingroup
\setstretch{.95}
\renewcommand*{\bibfont}{\small}
\setlength\bibitemsep{5pt}
\printbibliography
\endgroup
}

\appendix

\section{Proofs} \label{app:proofs}

\subsection{Close Comparison Group Proofs}

\begin{proof}[\textbf{Proof of \Cref{thm:1}}]
    We consider each identification strategy in turn.

    \bigskip

    \noindent \textit{ \underline{(i) Difference-in-Differences:}}
    For any group $g \in \mathcal{G}^*$
    \begin{align*}
        \tau_g &= \E[Y_{t^*_1} \mid G=1] - \E[Y_{t^*_1} \mid G=g] \\
        &= \E[Y_{t^*_1} \mid G=1] - \underbrace{\E[Y_{t^*_1-1} \mid G=1]} - \Big(\E[Y_{t^*_1}\mid G=g] - \underbrace{\E[Y_{t^*_1-1}\mid G=g]}\Big) \\
        &= \E[Y_{t^*_1} - Y_{t^*_1-1} \mid G=1] - \E[Y_{t^*_1}(0) - Y_{t^*_1-1}(0) \mid G=g] \\
        &= \E[Y_{t^*_1} - Y_{t^*_1-1} \mid G=1] - \E[Y_{t^*_1}(0) - Y_{t^*_1-1}(0) \mid G=1] \\
        &= \E[Y_{t^*_1}(1) - Y_{t^*_1}(0) \mid G = 1] \\
        &= \ATT_{t^*_1}
    \end{align*}
    where the first equality holds by the definition of $\tau_g$, the two underlined terms in the second line cancel because $g \in \mathcal{G}^*$, the third equality holds by rearranging terms,  the fourth equality holds by \Cref{ass:pt}, the fifth equality holds by rearraning, and the last equality holds by the definition of $\ATT_{t^*_1}$.

    This argument holds for all $g \in \mathcal{G}^*$, implying that $\tau_g = \tau = ATT_{t^*_1}$.

    \bigskip

    \noindent \textit{ \underline{(ii) Change-in-Changes:}} The proof uses the probability integral transform, i.e.:
    \begin{align} \label{eqn:pit}
        F_Y(Y) \sim \text{Uniform}[0,1]
    \end{align}
    and the inverse probability transform, i.e., for $U \sim \text{Uniform}[0,1]$:
    \begin{align} \label{eqn:inv-pit}
        Q_Y(U) \sim F_Y
    \end{align}

    For any group $g \in \mathcal{G}^*$,
    \begin{align*}
        \tau_g &= \E[Y_{t^*_1} \mid G=1] - \E[Y_{t^*_1}(0) \mid G=g] \\
        &= \E[Y_{t^*_1} \mid G=1] - \E[ Q_{Y_{t^*_1}(0) | G=g}(U) \mid G=g ] \\
        &= \E[Y_{t^*_1} \mid G=1] - \E\Big[ Q_{Y_{t^*_1}(0)|G=g} \circ F_{Y_{t^*_1-1}(0)|G=g}\big(Y_{t^*_1-1}(0)\big) \Bigm| G=g \Big] \\
        &= \E[Y_{t^*_1} \mid G=1] - \E\Big[ Q_{Y_{t^*_1}(0)|G=1} \circ F_{Y_{t^*_1-1}(0)|G=1}\big(Y_{t^*_1-1}(0)\big) \Bigm| G=g \Big] \\
        &= \E[Y_{t^*_1} \mid G=1] - \E\Big[ Q_{Y_{t^*_1}(0)|G=1} \circ F_{Y_{t^*_1-1}(0)|G=1}\big(Y_{t^*_1-1}(0)\big) \Bigm| G=1 \Big] \\
        &= \E[Y_{t^*_1} \mid G=1] - \E[ Q_{Y_{t^*_1}(0)|G=1}(U) \mid G=1] \\
        &= \E[Y_{t^*_1}(1) \mid G=1] - \E[Y_{t^*_1}(0) \mid G=1] \\
        &= \ATT_{t^*_1}
    \end{align*}
    where the first equality holds by the definition of $\tau_g$ and because group $g$ does not participate in the treatment, the second equality uses \Cref{eqn:inv-pit}, the third equality uses \Cref{eqn:pit}, the fourth equality holds by \Cref{ass:cic}, the fifth equality uses that $g \in \mathcal{G}^*$, the sixth equality uses \Cref{eqn:inv-pit}, the seventh equality rearranges terms, and the last equality holds by the definition of $\ATT_{t^*_1}$.

    \bigskip

    \noindent \textit{ \underline{(iii) Lagged Outcome Unconfoundedness:}}
    For any group $g \in \mathcal{G}^*$
    \begin{align*}
        \tau_g &= \E[Y_{t^*_1}\mid G=1] - \E[Y_{t^*_1}\mid G=g] \\
        &= \E[Y_{t^*_1}\mid G=1] - \E\Big[ \E\big[Y_{t^*_1}(0)\mid Y_{t^*_1-1}(0), G=g\big]\Big| G=g \Big] \\
        &= \E[Y_{t^*_1}\mid G=1] - \E\Big[ \E\big[Y_{t^*_1}(0)\mid Y_{t^*_1-1}(0), G=g\big]\Big| G=1 \Big] \\
        &= \E[Y_{t^*_1}\mid G=1] - \E\Big[ \E\big[Y_{t^*_1}(0)\mid Y_{t^*_1-1}(0), G=1\big]\Big| G=1 \Big] \\
        &= \E\big[Y_{t^*_1}(1) - Y_{t^*_1}(0) \mid G=1\big] \\
        &= \ATT_{t^*_1}
    \end{align*}
    where the first equality holds by the definition of $\tau_g$, the second equality holds because observed outcomes for group $g$ are untreated potential outcomes and by the law of iterated expectations, the third equality uses that $g \in \mathcal{G}^*$, the fourth equality holds by \Cref{eqn:lou}, the fifth equality rearranges terms, and the last equality holds by the definition of $\ATT_{t^*_1}$.  This argument holds for all $g \in \mathcal{G}^*$, implying that $\tau_g = \tau = ATT_{t^*_1}$.

    \bigskip

    \noindent \textit{\underline{(iv) Interactive Fixed Effects:}}
    We show that $g \in \mathcal{G}^*$ is not sufficient for $\tau_g = \ATT_{t^*_1}$ by counterexample.  Consider the case where $\T=2$ and $t^*_1=2$.  Suppose that $Y_{t=1} \mid G=1 \sim \N(1,1)$, and that $g \in \mathcal{G}^*$, hence, $Y_{t=1} \mid G=g \sim \N(1,1)$.  Furthermore, suppose that $R=1$, $F_1=1$, $F_2=2$, $\E[\eta \mid G=1]=1$, $\E[\lambda \mid G=1]=0$, $\E[\eta \mid G=g]=0$, and $\E[\lambda \mid G=g]=1$.  In this case, although group $g$ has the same distribution of untreated potential outcomes in the first period, $1=\E[Y_{t=2}(0) \mid G=1] \neq \E[Y_{t=2}(0) \mid G=g] = 2$, which implies that $\tau_g \neq \ATT$.

    \bigskip

    \noindent \textit{\underline{(v) Latent Unconfoundedness:}}
    The same counterexample for IFE works for latent unconfoundedness.
\end{proof}

\bigskip

\begin{proof}[\textbf{Proof of \Cref{prop:alt-def-close-comparison-groups}}]

    We organize the proof by each row in the table in \Cref{prop:alt-def-close-comparison-groups}.

    \bigskip

    \noindent \underline{\textit{(i) Difference-in-differences:}}
    For $g \in \mathcal{G}^*_{\text{mean}}(1)$, notice that the proof of part (i) of \Cref{thm:1} only relied on the means of pre-treatment outcomes being the same for the treated group as for group $g$; therefore, the same proof applies in this case.

    Next, if $g \in \mathcal{G}^*_{\text{mean}}(S)$, it also follows that $g \in \mathcal{G}^*_{\text{mean}}(1)$, which implies the result.

    Likewise, if $g \in \mathcal{G}^*_{\text{dist}}(S)$, it also follows that $g \in \mathcal{G}^*_{\text{mean}}(1)$, which implies the result.

    \bigskip

    \noindent \underline{\textit{(ii) Change-in-Changes:}}
    For $g \in \mathcal{G}^*_{\text{mean}}(1)$, consider the following counterexample.  Suppose that $Y_{t^*_1-1} \mid G=g \sim \text{Uniform}[0,1]$, $Y_{t^*_1-1} \mid G=1 \sim \text{Triangular}(0,1,0.5)$, and that $Y_{t^*_1}(0) \mid G=g \sim \chi^2_1$.  Then,
    \begin{align*}
        \tau^g_{\text{mean}}(1) &= \E[Y_{t^*_1} \mid G=1] - \E[Y_{t^*_1} \mid G=g] \\
        &= \ATT_{t^*_1} + \E[Y_{t^*_1}(0) \mid G=1] - \E[Y_{t^*_1}(0) \mid G=g] \\
        &= \ATT_{t^*_1} + \E\Big[ Q_{Y_{t^*_1}(0)|G=g}\big(\F_{Y_{t^*_1-1}(0)|G=g}(Y_{t^*_1-1})\big) \Bigm| G=1 \Big] - \E\Big[ Q_{Y_{t^*_1}(0)|G=g}\big(\F_{Y_{t^*_1-1}(0)|G=g}(Y_{t^*_1-1})\big) \Bigm| G=g \Big] \\
        &= \ATT_{t^*_1} + \E\Big[ Q_{Y_{t^*_1}(0)|G=g}(Y_{t^*_1-1}) \Bigm| G=1 \Big] - \E\Big[ Q_{Y_{t^*_1}(0)|G=g}(Y_{t^*_1-1}) \Bigm| G=g \Big] \\
        &\approx \ATT_{t^*_1} + 0.91 - 1 \\
        &\neq \ATT_{t^*_1}
    \end{align*}
    where the first equality holds by the definition of $\tau^g_{\text{mean}}(1)$, the second equality by adding and subtracting $\E[Y_{t^*_1}(0) \mid G=1]$, the third equality holds by the change-in-changes identification strategy, the simplification in the fourth equality holds by the setup of the counterexample (noting that it is easy to come up with many counterexamples given the previous expression and we choose a particular simple one), the fifth equality holds by approximating the quantiles of a $\chi^2_1$ random variable evaluated at draws from a triangular distribution as in the counterexample and because the mean of a $\chi^2_1$ random variable is 1.

    For $g \in \mathcal{G}^*(S)$, essentially the same as the immediately preceding counterexample can be used as it is invariant to the means of the outcomes being equal across additional pre-treatment periods.

    Finally, if $g \in \mathcal{G}^*_{\text{dist}}(S)$, it implies that $g \in \mathcal{G}^*$ from \Cref{thm:1}, and, therefore, the result follows immediately from the proof of part (ii) of \Cref{thm:1}.

    \bigskip

    \noindent \underline{\textit{(iii) Lagged Outcome Unconfoundedness:}}
    For $g \in \mathcal{G}^*_{\text{mean}}(1)$, consider the following counterexample.  Suppose that $Y_{t^*_1-1} \mid G=1 \sim \N(0,\sigma_1^2)$, $Y_{t^*_1-1} \mid G=g \sim \N(0,\sigma_g^2)$, and that $\E[Y_{t^*_1}(0) \mid Y_{t^*_1-1}(0), G] = Y_{t^*_1-1}^2$.  Then,
    \begin{align*}
        \tau^g_{\text{mean}}(1) &= \E[Y_{t^*_1} \mid G=1] - \E[Y_{t^*_1} \mid G=g] \\
        &= \ATT_{t^*_1} + \E[Y_{t^*_1}(0) \mid G=1] - \E[Y_{t^*_1}(0) \mid G=g] \\
        &= \ATT_{t^*_1} + \E\big[ \E[Y_{t^*_1}(0) \mid Y_{t^*_1-1}(0), G=1] \bigm| G=1 \big] - \E\big[ \E[Y_{t^*_1}(0) \mid Y_{t^*_1-1}(0), G=g] \bigm| G=g \big] \\
        &= \ATT_{t^*_1} + \E\big[ Y_{t^*_1-1}(0)^2 \bigm| G=1 \big] - \E\big[ Y_{t^*_1-1}(0)^2 \bigm| G=g \big] \\
        &= \ATT_{t^*_1} - \sigma_1^2 + \sigma_g^2 \\
        & \neq \ATT_{t^*-1}
    \end{align*}
    where the first equality holds by the definition of $\tau^g_{\text{mean}}(1)$, the second equality by adding and subtracting $\E[Y_{t^*_1}(0) \mid G=1]$, the third equality holds by the law of iterated expectations, the fourth equality holds by the setup of the counterexample, and the fifth and sixth equalities hold immediately.

    For $g \in \mathcal{G}^*(S)$, essentially the same as the immediately preceding counterexample can be used as it is invariant to the means of the outcomes being equal across additional pre-treatment periods.

    Finally, if $g \in \mathcal{G}^*_{\text{dist}}(S)$, it implies that $g \in \mathcal{G}^*$ from \Cref{thm:1}, and, therefore, the result follows immediately from the proof of part (iii) of \Cref{thm:1}.

    \bigskip

    \noindent \underline{\textit{(iv) Interactive Fixed Effects:}}
    For $g \in \mathcal{G}^*_{\text{mean}}(1)$, the same counterexample as in part (iv) of \Cref{thm:1} implies that this is not a sufficient condition to guarantee that $\tau^g_{\text{mean}}(1) = \ATT_{t^*_1}$.

    For $g \in \mathcal{G}^*_{\text{mean}}(S)$, notice that
    \begin{align*}
        \begin{pmatrix} \E[Y_{t^*_1-1} \mid G=1] \\ \vdots \\ \E[Y_{t^*_1-S} \mid G=1] \end{pmatrix} = \bm{\theta}^{\text{pre}(t^*_1)} + {\mathbf{F}^{\text{pre}(t^*_1)}}' \mathbf{\Lambda}_1 \quad \text{and} \quad \begin{pmatrix} \E[Y_{t^*_1-1} \mid G=g] \\ \vdots \\ \E[Y_{t^*_1-S} \mid G=g] \end{pmatrix} &= \bm{\theta}^{\text{pre}(t^*_1)} + {\mathbf{F}^{\text{pre}(t^*_1)}}' \mathbf{\Lambda}_g
    \end{align*}
    
    Then, because $S \geq (R+1)$ and by \Cref{ass:ife-rank}, it follows that $\mathbf{\Lambda}_1 = \mathbf{\Lambda}_g$.  Next, notice that
    \begin{align*}
        \tau^g_{\text{mean}}(S) &= \E[Y_{t^*_1} \mid G=1] - \E[Y_{t^*_1} \mid G=g] \\
        &= \ATT_{t^*_1} + \E[Y_{t^*_1}(0) \mid G=1] - \E[Y_{t^*_1}(0) \mid G=g] \\
        &= \ATT_{t^*_1} + \mathbf{\Lambda}_1'\mathbf{F}_{t^*_1} - \mathbf{\Lambda}_g'\mathbf{F}_{t^*_1} \\
        &= \ATT_{t^*_1}
    \end{align*}
    where the first equality holds by the definition of $\tau^g_{\text{mean}}(S)$, the second equality by adding and subtracting $\E[Y_{t^*_1}(0) \mid G=1]$, the third equality from \Cref{ass:ife-model}, and the last equality from the discussion above that showed that $g \in \mathcal{G}^*_{\text{mean}}(S)$ implied that $\mathbf{\Lambda}_1=\mathbf{\Lambda}_g$.

    Finally, if $g \in \mathcal{G}^*_{\text{dist}}(S)$, it also follows that $g \in \mathcal{G}^*_{\text{mean}}(S)$, which implies the result using the same argument as above.

    \bigskip

    \noindent \underline{\textit{(v) Latent Unconfoundedness:}}
    For $g \in \mathcal{G}^*_{\text{dist}}(S)$, notice that,
    \begin{align*}
        \tau^g_{\text{dist}}(S) &= \E[Y_{t^*_1} \mid G=1] - \E[Y_{t^*_1} \mid G=g] \\
        &= \ATT_{t^*_1} + \E[Y_{t^*_1}(0) \mid G=1] - \E[Y_{t^*_1}(0) \mid G=g] \\
        &= \ATT_{t^*_1} + \E[h_{t^*_1}(\xi) \mid G=1] - \E[h_{t^*_1}(\xi) \mid G=g] \\
        &= \ATT_{t^*_1}
    \end{align*}
    where the first equality comes from the definition of $\tau^g_{\text{dist}}(S)$, the second equality holds by adding and subtracting $\E[Y_{t^*_1}(0) \mid G=1]$, the third equality holds by \Cref{ass:latent-unconfoundedness}, %
    and the last equality also holds by \Cref{ass:latent-unconfoundedness-injectivity}---the last two terms are equal to each other because injectivity implies that groups having the same distribution of observed outcomes $(Y_{t^*_1-1}, \ldots,Y_{t^*_1-S})$ also have the same distribution of unobserved heterogeneity $\xi$.  
\end{proof}

\bigskip

\begin{proof}[\textbf{Proof of \Cref{prop:linear-trends-and-dynamic-panel}}]
    We organize the proof by each row in the table in \Cref{prop:linear-trends-and-dynamic-panel}.

    \bigskip

    \noindent \underline{\textit{Linear Trends:}}
    To start with, notice that the linear trends model for untreated potential outcomes in \Cref{eqn:linear-trends} is a special case of the interactive fixed effects model in \Cref{ass:ife-model} with $F_t = t$ for all $t$.  Moreover, for $g \in \mathcal{G}^*_{\text{mean}}(1)$ and $g \in \mathcal{G}^*_{\text{dist}}(1)$, notice that the same counterexample and proof as in \Cref{thm:1}(iv) applies here as it used involved $F_t=t$.

    Next, for $g \in \mathcal{G}^*_{\text{mean}}(S)$, let $\mathbf{Y}(0)^{\text{pre}(t^*_1,S)} = \big(Y_{t^*_1-S}(0), \dots, Y_{t^*_1-1}(0)\big)'$, $\bm{\theta}^{\text{pre}(t^*_1,S)} = \big(\theta_{t^*_1-S}, \dots, \theta_{t^*_1-1}\big)'$ and notice that
    \begin{align*}
        \E\big[\mathbf{Y}(0)^{\text{pre}(t^*_1,S)} \mid G=1\big] &=
        \bm{\theta}^{\text{pre}(t^*_1,S)} +
        \begin{pmatrix}
            1 & t^*_1-1 \\
            \vdots & \vdots \\
            1 & t^*_1 - S
        \end{pmatrix}
        \begin{pmatrix}
            \E[\eta \mid G=1] \\
            \E[\lambda \mid G=1]
        \end{pmatrix}, \text{ and } \\[4pt]
        \E\big[ \mathbf{Y}(0)^{\text{pre}(t^*_1,S)} \mid G=g \big] &= \bm{\theta}^{\text{pre}(t^*_1,S)} +
        \begin{pmatrix}
            1 & t^*_1-1 \\
            \vdots & \vdots \\
            1 & t^*_1 - S
        \end{pmatrix}
        \begin{pmatrix}
            \E[\eta \mid G=g] \\
            \E[\lambda \mid G=g]
        \end{pmatrix}
    \end{align*}
    The matrix $\begin{pmatrix}
            1 & t^*_1-1 \\
            \vdots & \vdots \\
            1 & t^*_1 - S
        \end{pmatrix}$ has full rank by construction, which implies that, if $g \in \mathcal{G}^*_{\text{mean}}(S)$, then
    \begin{align} \label{eqn:linear-trends-intermediate-step}
        \begin{pmatrix}
            \E[\eta \mid G=1] \\
            \E[\lambda \mid G=1]
        \end{pmatrix} = \begin{pmatrix}
            \E[\eta \mid G=g] \\
            \E[\lambda \mid G=g]
        \end{pmatrix}.
    \end{align}
    Next, notice that
    \begin{align*}
        \tau^g_{\text{mean}}(S) &= \E[Y_{t^*_1} \mid G=1] - \E[Y_{t^*_1} \mid G=g] \\
        &= \ATT_{t^*_1} + \E[Y_{t^*_1}(0) \mid G=1] - \E[Y_{t^*_1}(0) \mid G=g] \\
        &= \ATT_{t^*_1} + \Big( \theta_t + \E[\eta \mid G=1] + \E[\lambda \mid G=1] t^*_1 \Big) - \Big( \theta_t + \E[\eta \mid G=g] + \E[\lambda \mid G=g] t^*_1 \Big) \\
        &= \ATT_{t^*_1}
    \end{align*}
    where the first equality holds by the definition of $\tau^g_{\text{mean}}(S)$, the second equality holds by adding and subtracting $\E[Y_{t^*_1}(0) \mid G=1]$, the third equality holds from \Cref{eqn:linear-trends}, and the last equality holds by \Cref{eqn:linear-trends-intermediate-step}.

    Finally, if $g \in \mathcal{G}^*_{\text{dist}}(S)$, it implies that $g \in \mathcal{G}^*_{\text{mean}}(S)$, and, therefore, the result follows from the discussion above.

    \bigskip

    \noindent \underline{\textit{Dynamic Panel:}}
    For $g \in \mathcal{G}^*_{\text{mean}}(1)$, consider the following counterexample.  Suppose that $\theta_{t^*_1}=\theta_{t^*_1-1}=\theta_{t^*_1-2}=0$, $Y_{t^*_1-2}(0) \mid \eta, G=1 \sim \N(0,1)$, $Y_{t^*_1-2}(0) \mid \eta, G=g \sim \N(2,1)$, $\eta \mid G=1 \sim \N(1,1)$, $\eta \mid G=g \sim \N(0,1)$, $e_{t^*_1-1} \mid \eta, Y_{t^*_1-2}(0), G \sim \N(0,1)$, $e_{t^*_1} \mid \eta, Y_{t^*_1-1}(0), G \sim \N(0,1)$, and $\rho=0.5$.  Notice that $\E[\eta \mid G=1] = 1$, $\E[\eta \mid G=g] = 0$, $\E[Y_{t^*_1-2}(0) \mid G=1] = 0$, $\E[Y_{t^*_1-2}(0) \mid G=g] = 2$, and $\rho = 0.5$, which suffices for $g$ to be in $\mathcal{G}^*_{\text{mean}}(1)$.  Then, notice that
    \begin{align*}
        \tau^g_{\text{mean}}(1) &= \E[Y_{t^*_1} \mid G=1] - \E[Y_{t^*_1} \mid G=g] \\
        &= \ATT_{t^*_1} + \E[Y_{t^*_1}(0) \mid G=1] - \E[Y_{t^*_1}(0) \mid G=g] \\
        &= \ATT_{t^*_1} + \Big( \theta_t + \E[\eta \mid G=1] + \rho \E[Y_{t^*_1-1}(0) \mid G=1] \Big) - \Big( \theta_t + \E[\eta \mid G=g] + \rho \E[Y_{t^*_1-1}(0) \mid G=1] \Big) \\
        &= \ATT_{t^*_1} + \Big( 1 + 0.5\times 1\Big) - \Big(0 + 0.5 \times 1\Big) \\
        &\neq \ATT_{t^*_1}
    \end{align*}

    For $g \in \mathcal{G}^*_{\text{dist}}(1)$, we can use the same counterexample.  Notice that $Y_{t^*_1-1}(0) \mid G=1 \sim \N(1, 2.25) \overset{d}{=} Y_{t^*_1-1}(0) \mid G=g$, which implies that $g \in \mathcal{G}^*_{\text{dist}}(1)$ in the counterexample, and the same argument as above implies that $\tau^g_{\text{dist}}(1) \neq \ATT_{t^*_1}$.

    For $g \in \mathcal{G}^*_{\text{mean}}(S)$, we have that
    \begin{align} \label{eqn:dynamic-panel-intermediate-equation}
        \begin{pmatrix}
            \E[Y_{t^*_1-1}(0) \mid G=1] \\
            \E[Y_{t^*_1-2}(0) \mid G=1]
        \end{pmatrix}
        =
        \begin{pmatrix}
            \E[Y_{t^*_1-1}(0) \mid G=g] \\
            \E[Y_{t^*_1-2}(0) \mid G=g]
        \end{pmatrix}
    \end{align}
    Moreover, notice that
    \begin{align*}
        \E[Y_{t^*_1-1}(0) \mid G=1] &= \theta_{t^*_1-1} + \E[\eta \mid G=1] + \rho \E[Y_{t^*_1-2}(0) \mid G=1]
    \end{align*}
    which implies that
    \begin{align} \label{eqn:dynamic-panel-same-mean-eta}
        \E[\eta \mid G=1] &= \E[Y_{t^*_1-1}(0) \mid G=1] - \theta_{t^*_1-1} - \rho \E[Y_{t^*_1-2}(0) \mid G=1] \nonumber \\
        &= \E[Y_{t^*_1-1}(0) \mid G=g] - \theta_{t^*_1-1} - \rho \E[Y_{t^*_1-2}(0) \mid G=g] \nonumber \\
        &= \E[\eta \mid G=g]
    \end{align}
    where the first equality holds by \Cref{eqn:dynamic-panel-model}, the second equality holds by \Cref{eqn:dynamic-panel-intermediate-equation}, and the last equality holds by using \Cref{eqn:dynamic-panel-model} again.  Then,
    \begin{align*}
        \tau^g_{\text{mean}}(S) &= \E[Y_{t^*_1} \mid G=1] - \E[Y_{t^*_1} \mid G=g] \\
        &= \ATT_{t^*_1} + \E[Y_{t^*_1}(0) \mid G=1] - \E[Y_{t^*_1}(0) \mid G=g] \\
        &= \ATT_{t^*_1} + \Big( \theta_t + \E[\eta \mid G=1] + \rho \E[Y_{t^*_1-1}(0) \mid G=1] \Big) - \Big( \theta_t + \E[\eta \mid G=g] + \rho \E[Y_{t^*_1-1}(0) \mid G=1] \Big) \\
        &= \ATT_{t^*_1}
    \end{align*}
    where the first equality holds by the definition of $\tau^g_{\text{mean}}(S)$, the second equality holds by adding and subtracting $\E[Y_{t^*_1}(0) \mid G=1]$, the third equality holds by \Cref{eqn:dynamic-panel-model}, and the last equality holds by \Cref{eqn:dynamic-panel-intermediate-equation,eqn:dynamic-panel-same-mean-eta}.

    Finally, if $g \in \mathcal{G}^*(S)$, it implies that $g \in \mathcal{G}^*_{\text{mean}}(S)$, and, therefore, the result follows from the discussion above.
\end{proof}

\subsection{Time Homogeneity Proofs}

\begin{proof}[\textbf{Proof of \Cref{prop:time-homogeneity}}]
    We organize the proof by each row in the table in \Cref{prop:time-homogeneity}.

    \bigskip

    \noindent \underline{\textit{(i) Difference-in-Differences:}}
    Notice that, for any group $g \in \bar{\mathcal{G}}$,
    \begin{align*}
        \tau^{\text{TH}}_t &= \tau^{\text{TH}}_t - \E[Y_t(0) - Y_{t^*_1-1}(0) \mid G=g] \\
        &= \tau^{\text{TH}}_t - \E[Y_t(0) - Y_{t^*_1-1}(0) \mid G=1] \\
        &= \E[Y_t(1) - Y_t(0) \mid G=1] \\
        &= \ATT_t
    \end{align*}
    where the first equality holds because $t \in \mathcal{T}_{\text{mean}}$, the second equality holds by \Cref{ass:pt}, the third equality holds by adding and subtracting $\E[Y_{t^*_1-1}(0) \mid G=1]$ and by the definition of $\tau^{\text{TH}}_t$, and the last equality holds by the definition of $\ATT_t$.

    \bigskip

    \noindent \underline{\textit{(ii) Change-in-Changes:}}
    
    Consider the following counterexample.  For $g = 1, \ldots, |\mathcal{G}|$, suppose that $Y_{t^*_1-1}(0)\mid G=g \sim \N(g, 1)$.  We suppose that, for all groups, $Q_{Y_t(0)|G=g} \circ F_{Y_{t^*_1-1}(0)|G=g}(y) = y + \varepsilon\, q(y)$, where $\varepsilon$ is small and $q$ is a weighted sum of normal density functions placed at the group means:
    \begin{align*}
        q(y) = c_1 \phi(y - 1) + \cdots c_{|\mathcal{G}|} \phi(y - |\mathcal{G}|).
    \end{align*}
    where we set $c_1=1$, but allow $c_g, g=2,\ldots,|\mathcal{G}|$ to be constructed in order to allow for mean time homogeneity.  In particular, in order for mean time homogeneity to hold for all comparison groups, it must be the case that $\int q(y)\, f_g(y)\, dy = 0$ for all $g \in \bar{\mathcal{G}}$, where $f_g(y) = \phi(y-g)$ denotes the density of $Y_{t^*_1-1}(0) \mid G=g$.  Specifically, let
    \begin{align*}
        \mathbf{A} = \begin{bmatrix}
            \int \phi(y-g)\phi(y-1)\, dy & \int \phi(y-g) \phi(y-2)\, dy & \cdots & \int \phi(y-g)\phi(y-|\mathcal{G}|)\, dy
        \end{bmatrix}_{g \in \bar{\mathcal{G}}}
    \end{align*}
    which is a $|\mathcal{\bar{G}}| \times |\mathcal{G}|$ matrix, so that mean time homogeneity amounts to $\mathbf{A c} = 0$ where $\mathbf{c} = (c_1, \ldots, c_{|\mathcal{G}|})'$.  Since $|\mathcal{\bar{G}}| < |\mathcal{G}|$.  This is a system of $|\bar{\mathcal{G}}|$ linear equations with $|\bar{\mathcal{G}}|$ unknowns, which we can re-write as $\mathbf{A}_{-1} \mathbf{c}_{-1} = -\mathbf{A}_1$ where $\mathbf{A}_{-1}$ is the matrix $\mathbf{A}$ with the first column removed, $\mathbf{c}_{-1} = (c_2, \ldots, c_{|\mathcal{G}|})'$, and $\mathbf{A}_1$ is the first column of $\mathbf{A}$.  Then, if we set $\mathbf{c}_{-1} = - (\mathbf{A}_{-1})^{-1} \mathbf{A}_1$, then we have found a $q$ that satisfies mean time homogeneity for all comparison groups.

    Next, suppose that $f_1$ is not a linear combination of $f_g$'s, $g=2,\ldots,|\mathcal{G}|$.  Notice that $\E[Y_{t^*_1}(0) - Y_{t^*_1-1}(0) \mid G=1] = \varepsilon \int q(y)\, f_1(y)\, dy = \varepsilon (1 + c_2 \int \phi(y-1)\phi(y-2)\, dy + \cdots + c_{|\mathcal{G}|} \int \phi(y-1)\phi(y-|\mathcal{G}|)\, dy)$, which is generically nonzero since $c_1=1$ and the other terms are small (due to the small overlap between normal densities with different means).  Thus, $\tau^{\text{TH}}_t \neq \ATT_t$ even though mean time homogeneity holds for all comparison groups.

    To give a specific counterexample, consider $|\mathcal{G}|=3$ with two comparison groups $g \in \{2,3\}$). Then $f_1(y) = \phi(y-1)$, $f_2(y) = \phi(y-2)$, and $f_3(y) = \phi(y-3)$. The matrix $\mathbf{A}$ is $2 \times 3$:
    \begin{align*}
        \mathbf{A} = \begin{bmatrix}
            \int \phi(y-2)\phi(y-1)\, dy & \int \phi(y-2)\phi(y-2)\, dy & \int \phi(y-2)\phi(y-3)\, dy \\
            \int \phi(y-3)\phi(y-1)\, dy & \int \phi(y-3)\phi(y-2)\, dy & \int \phi(y-3)\phi(y-3)\, dy
        \end{bmatrix}.
    \end{align*}
    Using the fact that $\int \phi(y-j)\phi(y-k)\, dy = \frac{1}{\sqrt{4\pi}} \exp\left(-\frac{(j-k)^2}{4}\right)$, we have:
    \begin{align*}
        \mathbf{A} = \frac{1}{\sqrt{4\pi}} \begin{bmatrix}
            e^{-1/4} & 1 & e^{-1/4} \\
            e^{-1} & e^{-1/4} & 1
        \end{bmatrix}.
    \end{align*}
    With $c_1=1$, the system $\mathbf{A}_{-1} \mathbf{c}_{-1} = -\mathbf{A}_1$ becomes:
    \begin{align*}
        \frac{1}{\sqrt{4\pi}} \begin{bmatrix}
            1 & e^{-1/4} \\
            e^{-1/4} & 1
        \end{bmatrix} \begin{bmatrix} c_2 \\ c_3 \end{bmatrix} = -\frac{1}{\sqrt{4\pi}} \begin{bmatrix} e^{-1/4} \\ e^{-1} \end{bmatrix}.
    \end{align*}
    Solving yields $c_2 \approx -0.93$ and $c_3 \approx -0.15$. For the treated group:
    \begin{align*}
        \E[Y_{t^*_1}(0) - Y_{t^*_1-1}(0) \mid G=1] &= \varepsilon \int q(y)\, \phi(y-1)\, dy \\
        &= \varepsilon \left( \int \phi(y-1)^2\, dy + c_2 \int \phi(y-1)\phi(y-2)\, dy + c_3 \int \phi(y-1)\phi(y-3)\, dy \right) \\
        &= \frac{\varepsilon}{\sqrt{4\pi}} \left( 1 - 0.93 \cdot e^{-1/4} - 0.15 \cdot e^{-1} \right) \\
        &\approx \varepsilon \times 0.062 \neq 0.
    \end{align*}
    Thus, $\tau^{\text{TH}}_t \neq \ATT_t$ even though mean time homogeneity holds for both comparison groups.

    \bigskip

    \noindent \underline{\textit{(iii) Lagged Outcome Unconfoundedness:}}
    Suppose that
    \begin{align*}
        \E[Y_{t^*_1}(0) \mid Y_{t^*_1-1}(0), G=g] = \psi\big(Y_{t^*_1-1}(0)\big)'\beta
    \end{align*}
    for $\psi(y)$ a $J$-dimensional vector of basis functions. Further, let
    \begin{align*}
        \mathbf{\Psi} =  \begin{bmatrix} \E\big[\psi(Y_{t^*_1-1}(0)) \mid G=g\big]' \end{bmatrix}_{g \in \bar{\mathcal{G}}}
    \end{align*}
    which is a $|\bar{\mathcal{G}}| \times J$ matrix.  Suppose $\text{rank}(\mathbf{\Psi}) < J$ (e.g., the number of comparison groups could be less than the complexit of $\E[Y_{t^*_1}(0) \mid Y_{t^*_1-1}(0), G=g]$ or the comparison groups do not contain enough independent information relative to the complexist of $\E[Y_{t^*_1}(0) \mid Y_{t^*_1-1}(0), G=g]$), so that $\mathbf{\Psi}$ does not have full rank.  Notice that $t \in \mathcal{T}_{\text{mean}}$ implies that $\mathbf{\Psi} \beta = \begin{bmatrix} \E[Y_{t^*_1-1}(0) \mid G=g] \end{bmatrix}_{g \in \bar{\mathcal{G}}}$.  However, because $\mathbf{\Psi}$ does not have full rank, there are many solutions to this system of equations, and, in particular, there exists a solution $\beta^*$ such that $\E[Y_{t^*_1}(0) \mid G=1] = \E\big[ \psi(Y_{t^*_1-1}(0)) \mid G=1\big]'\beta \neq \E[Y_{t^*_1-1}(0) \mid G=1]$, implying $\tau^{\text{TH}}_t \neq \ATT_t$.

    Consider the following specific counterexample.  For some constants $\gamma$ and $k$, suppose that $Y_t(0) = Y_{t^*_1-1}(0) + \gamma\big(Y_{t^*_1-1}(0)^2 - k \big) + e$ where $e \mid Y_{t^*_1-1}(0),G=g \sim \N(0,\sigma_e^2)$.  Notice that this implies that $Y_t(0) \independent G \mid Y_{t^*_1-1}(0)$, so that lagged outcome unconfoundedness holds.  Further, suppose that $Y_{t^*_1-1}(0) \mid G=1 \sim \N(\mu_1,\sigma^2_1)$, and for $g \in \bar{\mathcal{G}}$, $Y_{t^*_1-1}(0) \mid G=g \sim \N(\mu_g, \sigma_g^2)$ and satisfying $\mu_g^2 + \sigma_g^2 = k$, but $\mu_1^2 + \sigma_g^2 \neq k$.  In this case,
    \begin{align*}
        \mathbf{\Psi} = \begin{bmatrix}
            1 & \mu_2 & \mu_2^2 + \sigma_2^2 \\
            1 & \mu_3 & \mu_3^2 + \sigma_3^2 \\
            \vdots & \vdots & \vdots \\
            1 & \mu_{|\mathcal{G}|} & \mu_{|\mathcal{G}|}^2 + \sigma_{|\mathcal{G}|}^2
        \end{bmatrix} = \begin{bmatrix}
            1 & \mu_2 & k \\
            1 & \mu_3 & k \\
            \vdots & \vdots & \vdots \\
            1 & \mu_{|\mathcal{G}|} & k
    \end{bmatrix}
    \end{align*}
    which has rank at most $2 < J=3$.  And, more specifically, it holds that
    \begin{align*}
        \tau^{\text{TH}}_t &= \E[Y_t(1) \mid G=1] - \E[Y_{t^*_1-1}(0) \mid G=1] \\
        &= \ATT_t + \E[Y_t(0) \mid G=1] - \E[Y_{t^*_1-1}(0) \mid G=1] \\
        &= \ATT_t + \E\big[ \E[Y_t(0) \mid Y_{t^*_1-1}(0), G=1] \bigm| G=1\big] - \E[Y_{t^*_1-1}(0) \mid G=1] \\
        &= \ATT_t + \E\big[ \E[Y_t(0) \mid Y_{t^*_1-1}(0), G=g] \bigm| G=1\big] - \E[Y_{t^*_1-1}(0) \mid G=1] \\
        &= \ATT_t + \E\Big[ Y_{t^*_1-1}(0) + \gamma\big(Y_{t^*_1-1}(0)^2 - k \big) \Bigm| G=1\Big] - \E[Y_{t^*_1-1}(0) \mid G=1] \\
        &= \ATT_t + \gamma(\mu_1^2 + \sigma_1^2 -k) \\
        &\neq \ATT_t
    \end{align*}
    where the first equality holds by the definition of $\tau^{\text{TH}}_t$, the second equality holds by adding and subtracting $\E[Y_t(0) \mid G=1]$, the third equality holds by the law of iterated expectations, the fourth equality holds by the lagged outcome unconfoundedness identification strategy, the fifth equality holds by the setting of the counterexample, and the last two equalities are immediate simplifications of the previous line.

    \bigskip

    \noindent \underline{\textit{(iv) Interactive Fixed Effects}:}
    Define
    \begin{align} \label{eqn:bar-lambda-ife-time-homogeneity}
        \bar{\mathbf{\Lambda}} = \begin{bmatrix} 1 & \E[\lambda'|G=g] \end{bmatrix}_{g \in \bar{\mathcal{G}}}
    \end{align}
    where we use the notation $\begin{bmatrix} & \cdot &  \end{bmatrix}_{g \in \bar{\mathcal{G}}}$ to indicate a matrix with one row for each $g \in \bar{\mathcal{G}}$; thus, $\bar{\mathbf{\Lambda}}$ is a $|\bar{\mathcal{G}}| \times (R+1)$ dimensional matrix, where $R$ continues to be defined as the number of factors in \Cref{eqn:ife}.  We assume that $\text{rank}(\bar{\mathbf{\Lambda}})=R+1$, which amounts to assuming that the groups are ``different enough'' from each other in terms of the mean of $\lambda$.  This also implies that the number of groups is larger than the number of interactive fixed effects.  That $t^*_1 \in \mathcal{T}_{\text{mean}}(1)$ implies that
    \begin{align*}
        \mathbf{0}_{|\bar{\mathcal{G}}|} &= \begin{bmatrix} \E[\Delta Y_{t^*_1}|G=g] \end{bmatrix}_{g \in \bar{\mathcal{G}}} \\
        &= \bar{\mathbf{\Lambda}} \begin{bmatrix} \Delta \theta_{t^*_1} \\ \Delta F_{t^*_1} \end{bmatrix}
    \end{align*}
    Since $\bar{\mathbf{\Lambda}}$ has full column rank, this implies that
    \begin{align} \label{eqn:ife-time-homogeneity-implication}
        \mathbf{0}_{R+1} = \begin{bmatrix} \Delta \theta_{t^*_1} \\ \Delta F_{t^*_1} \end{bmatrix}
    \end{align}
    Thus,
    \begin{align*}
        \ATT_{t^*_1} &= \E[Y_{t^*_1}(1) - Y_{t^*_1}(0) \mid G=1] \\
        &= \tau^{\text{TH}} - \E[Y_{t^*_1}(0) - Y_{t^*_1-1}(0) \mid G=1] \\
        &= \tau^{\text{TH}} + \Big( \Delta \theta_{t^*_1} + \E[\lambda' \mid G=1] \Delta F_{t^*_1} \Big) \\
        &= \tau^{\text{TH}}
    \end{align*}
    where the first equality holds by the definition of $\ATT_{t^*_1}$, the second equality holds by adding and subtracting $\E[Y_{t^*_1-1}(0) \mid G=1]$, the third equality holds from the interactive fixed effects model for untreated potential outcomes in \Cref{eqn:ife} and by \Cref{ass:ife-model}, and the last equality holds by \Cref{eqn:ife-time-homogeneity-implication}.

    \bigskip

    \noindent \underline{\textit{(v) Latent Unconfoundedness:}}
    Consider the following counterexample.  Suppose that $h_t(\eta_i) = \theta_t + \eta_i + \lambda_i'F_t + e_{it}$ where $R+1 > |\bar{\mathcal{G}}|$ (i.e., the number of latent factors plus one exceeds the number of comparison groups).  Then, mean stability for all comparison groups implies $\text{rank}(\bar{\mathbf{\Lambda}}) < R+1$ where $\bar{\mathbf{\Lambda}}$ is defined as in \Cref{eqn:bar-lambda-ife-time-homogeneity}.  Therefore, there exists a non-zero vector $\begin{bmatrix} \Delta \theta_t \\ \Delta F_t \end{bmatrix}'$ such that $\bar{\mathbf{\Lambda}} \begin{bmatrix} \Delta \theta_t \\ \Delta F_t \end{bmatrix} = \mathbf{0}_{|\bar{\mathcal{G}}|}$.  This implies that mean stability for all comparison groups does not force $\E[Y_{t^*_1}(0) - Y_{t^*_1-1}(0) \mid G=1] = 0$, and, hence, does not imply that $\tau^{\text{TH}}_{t^*_1} = \ATT_{t^*_1}$.

\end{proof}

\subsubsection{Additional Results for Change-in-Changes}

In line with the discussion in \Cref{sec:time-homogeneity}, define
\begin{align*}
    t_\text{dist}^{\text{TH}} &:=\max \big\{ s \in \{t^*_1, \ldots, \T\} \bigm| (Y_{s'}(0) | G=g) \overset{d}{=} (Y_{t^*_1-1}(0) | G=g) \ \text{for all } s'\leq s \text{ and } g \in \bar{\mathcal{G}}\big\}
\end{align*}
which is the last period after treatment adoption where the distribution of untreated potential outcomes is the same as in period $t^*_1-1$ for all groups in $\bar{\mathcal{G}}$.  Also, define
\begin{align*}
    \mathcal{T}_{\text{dist}} &= \big\{ t \in \{t^*_1, \ldots, \T\} \bigm| t \leq t_{\text{dist}}^{\text{TH}} \big\}
\end{align*}
which is the set of periods after treatment adoption where the distribution of untreated potential outcomes is the same as in period $t^*_1-1$ for all groups in $\bar{\mathcal{G}}$.  Then, we have the following result for $\tau^{\text{TH}}_t$ if the change-in-changes identification strategy and under ``distributional time homogeneity''.

\begin{proposition} \label{prop:cic-distributional-time-homogeneity}
    Suppose that the change-in-changes identification strategy holds.  Then, for any $t \in \mathcal{T}_{\text{dist}}$
    \begin{align*}
        \tau^{\text{TH}}_t = \ATT_t
    \end{align*}
\end{proposition}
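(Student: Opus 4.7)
The plan is to chain the change-in-changes equation with distributional time homogeneity to deduce that untreated potential outcomes for the treated group are also distributionally stable between $t^*_1 - 1$ and $t$, from which the result follows immediately. Pick any $g \in \bar{\mathcal{G}}$ (existence is guaranteed since otherwise $\mathcal{T}_{\text{dist}}$ could not be a meaningful object and CiC requires a comparison group). Fix $t \in \mathcal{T}_{\text{dist}}$ and $y$ in the support of $Y_{t^*_1-1}(0) \mid G=1$.

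First, I would invoke \Cref{ass:cic} applied at time $t$ (not just at $t^*_1$; the CiC logic extends period by period to any post-treatment period, which is the version the paper implicitly uses when it states $\tau^{\text{TH}}_t = \ATT_t$ under CiC). This yields
\begin{align*}
Q_{Y_t(0)|G=1}\bigl(\F_{Y_{t^*_1-1}(0)|G=1}(y)\bigr) = Q_{Y_t(0)|G=g}\bigl(\F_{Y_{t^*_1-1}(0)|G=g}(y)\bigr).
\end{align*}
Next, since $t \in \mathcal{T}_{\text{dist}}$ and $g \in \bar{\mathcal{G}}$, distributional time homogeneity gives $Y_t(0) \mid G=g \overset{d}{=} Y_{t^*_1-1}(0) \mid G=g$, so $Q_{Y_t(0)|G=g} = Q_{Y_{t^*_1-1}(0)|G=g}$. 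The right-hand side therefore collapses to $Q_{Y_{t^*_1-1}(0)|G=g}\bigl(\F_{Y_{t^*_1-1}(0)|G=g}(y)\bigr) = y$, using continuity of the outcome distribution from \Cref{ass:sampling} together with the CiC support condition.

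Consequently, $Q_{Y_t(0)|G=1}\bigl(\F_{Y_{t^*_1-1}(0)|G=1}(y)\bigr) = y$ on the support of $Y_{t^*_1-1}(0) \mid G=1$. By the probability integral transform and its inverse (as in the proof of \Cref{thm:1}(ii)), this identity forces $F_{Y_t(0)|G=1} = F_{Y_{t^*_1-1}(0)|G=1}$; that is, untreated potential outcomes for the treated group have the same distribution in period $t$ as in period $t^*_1-1$. In particular, $\E[Y_t(0) \mid G=1] = \E[Y_{t^*_1-1}(0) \mid G=1]$.

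Finally, I would close with the identity
\begin{align*}
\tau^{\text{TH}}_t = \E[Y_t \mid G=1] - \E[Y_{t^*_1-1} \mid G=1] = \E[Y_t(1) \mid G=1] - \E[Y_t(0) \mid G=1] = \ATT_t,
\end{align*}
where the first equality is the definition, the second substitutes observed outcomes for potential outcomes using \Cref{ass:groups,ass:no-anticipation} together with the equality of means just derived, and the third is the definition of $\ATT_t$. The only real subtlety, and the step that would need the most care in writing out, is justifying that CiC can be stated at period $t$ (not only $t^*_1$) and that the support condition CiC-2 transfers accordingly; everything else is a short chain of substitutions once distributional time homogeneity is plugged into the CiC identity.
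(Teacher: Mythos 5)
Your proposal is correct and follows essentially the same route as the paper's proof: apply the CiC quantile-matching condition at period $t$, use distributional time homogeneity to collapse $Q_{Y_t(0)|G=g} \circ \F_{Y_{t^*_1-1}(0)|G=g}$ to the identity, and then conclude via the probability integral transform. The only cosmetic difference is that you establish full distributional stability of $Y_t(0) \mid G=1$ before taking means, whereas the paper carries out the same substitutions directly inside the expectation $\E[Y_{t^*_1-1}(0) \mid G=1]$; both versions rest on the same (implicit) extension of \Cref{ass:cic} to period $t$ and the same use of the support condition.
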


\begin{proof}
    Notice that, for any group $g \in \bar{\mathcal{G}}$,
    \begin{align*}
        \tau^{\text{TH}}_t &= \E[Y_t(1) \mid G=1] - \E[Y_{t^*_1-1}(0) \mid G=1] \\
        &= \E[Y_t(1) \mid G=1] - \E\Big[Q_{Y_t(0)|G=g} \circ\F_{Y_{t^*_1-1}(0)|G=g}\big(Y_{t^*_1-1}(0)\big) \Bigm| G=1] \\
        &= \E[Y_t(1) \mid G=1] - \E\Big[Q_{Y_t(0)|G=1} \circ\F_{Y_{t^*_1-1}(0)|G=1}\big(Y_{t^*_1-1}(0)\big) \Bigm| G=1] \\
        &= \E[Y_t(1) \mid G=1] - \E\Big[Q_{Y_t(0)|G=1}(U) \Bigm| G=1] \\
        &= \E[Y_t(1) \mid G=1] - \E[Y_t(0) \mid G=1] \\
        &= \ATT_t
    \end{align*}
    where the first equality holds by the definition of $\tau^{\text{TH}}_t$, the second equality holds because $t \in \mathcal{T}_{\text{dist}}$ (i.e., the distribution of untreated potential outcomes is the same in period $t$ and period $t^*_1-1$ for group $g$), the third equality holds by the CiC identification strategy (\Cref{ass:cic} in particular), the fourth equality holds because $\F_{Y_{t^*_1-1}(0)|G=1}(Y_{t^*_1-1}(0))|G=1$ follows a $\text{Uniform}[0,1]$ distribution, the fifth equality holds by the probability inverse transform with $U \mid G=1 \sim \text{Uniform}[0,1]$, and the last equality holds by the definition of $\ATT_t$.
\end{proof}

\subsubsection{Additional Results for Lagged Outcome Unconfoundedness}

The next result shows that under lagged outcome unconfoundedness and with a flexible parametric model for $\E[Y_{t^*_1}(0) \mid Y_{t^*_1-1}(0), G=g]$, if time there are enough comparison groups that exhibit enough independent variation, then time homogeneity among the comparison groups implies that time homogeneity also holds for the treated group.

\begin{proposition} \label{prop:lou-mean-time-homogeneity} Suppose that $\E[Y_{t^*_1}(0) \mid Y_{t^*_1-1}(0), G=g] = \psi\big(Y_{t^*_1-1}(0)\big)'\beta$ where $\psi(y)$ is a $J$-dimensional vector of basis functions that includes the identity function, i.e., $\psi_j(y) = y$ for some $j$ (this requirement is met by typical polynomial, spline, or other flexible basis functions). Further, let $\mathbf{\Psi}$ be the $|\bar{\mathcal{G}}| \times J$matrix with rows $\E[\psi(Y_{t^*_1-1}(0)) \mid G=g]'$ for $g \in \bar{\mathcal{G}}$, and suppose that $\mathbf{\Psi}$ has full column rank $J$.  If $\text{rank}(\mathbf{\Psi}) = J$, then time homogeneity for the comparison groups implies that time mean time homogeneity also applies for the treated group, i.e., $\E[Y_{t^*_1}(0) \mid G=1] = \E[Y_{t^*_1-1}(0) \mid G=1]$, and, hence, $\tau^{\text{TH}}_{t^*_1} = \ATT_{t^*_1}$.
\end{proposition}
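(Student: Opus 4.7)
The plan is to show that the combined conditions force the coefficient vector $\beta$ to collapse onto the ``identity'' component of the basis, which then transfers mean time homogeneity from the comparison groups to the treated group. First I would use the lagged outcome unconfoundedness assumption (\Cref{ass:lou}), which implies that $\E[Y_{t^*_1}(0)\mid Y_{t^*_1-1}(0), G=g]$ does not depend on $g$, so the same $\beta$ appears in the parametric specification for all groups (treated and comparison alike). Taking iterated expectations then gives, for every $g \in \bar{\mathcal{G}}$,
\begin{align*}
    \E[Y_{t^*_1}(0)\mid G=g] = \E\big[\psi(Y_{t^*_1-1}(0))\mid G=g\big]'\beta.
\end{align*}

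Next, I would exploit mean time homogeneity on the comparison groups, i.e.\ $\E[Y_{t^*_1}(0)\mid G=g] = \E[Y_{t^*_1-1}(0)\mid G=g]$ for all $g \in \bar{\mathcal{G}}$. Letting $e_j$ denote the standard basis vector that picks out the identity component of $\psi$ (so that $\psi(y)'e_j = y$), I can rewrite $\E[Y_{t^*_1-1}(0)\mid G=g] = \E[\psi(Y_{t^*_1-1}(0))\mid G=g]'e_j$. Combining the two displays yields, for every $g \in \bar{\mathcal{G}}$,
\begin{align*}
    \E[\psi(Y_{t^*_1-1}(0))\mid G=g]'(\beta - e_j) = 0,
\end{align*}
which stacked across comparison groups is exactly $\mathbf{\Psi}(\beta - e_j) = \mathbf{0}$. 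The full column rank assumption $\text{rank}(\mathbf{\Psi}) = J$ then forces $\beta = e_j$, so that $\psi(Y_{t^*_1-1}(0))'\beta = Y_{t^*_1-1}(0)$ almost surely.

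Finally, I would apply this back to the treated group: since $\beta = e_j$, iterated expectations gives $\E[Y_{t^*_1}(0)\mid G=1] = \E[\psi(Y_{t^*_1-1}(0))\mid G=1]'e_j = \E[Y_{t^*_1-1}(0)\mid G=1]$, which is mean time homogeneity for $G=1$. The conclusion then follows by the usual decomposition
\begin{align*}
    \tau^{\text{TH}}_{t^*_1} = \E[Y_{t^*_1}(1) - Y_{t^*_1-1}(0)\mid G=1] = \ATT_{t^*_1} + \E[Y_{t^*_1}(0) - Y_{t^*_1-1}(0)\mid G=1] = \ATT_{t^*_1}.
\end{align*}
There is no real technical obstacle here; the argument is essentially a linear-algebra identification step. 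The only subtle point worth flagging is the role of the ``identity is in $\psi$'' requirement: without it, mean time homogeneity on the comparison groups could only pin down a linear combination of components of $\beta$ rather than force $\psi(y)'\beta = y$, so that hypothesis is what makes the rank argument deliver the desired conclusion.
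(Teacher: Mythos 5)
Your proof is correct and follows essentially the same route as the paper's: both arguments observe that mean time homogeneity across comparison groups and the presence of the identity function in the basis imply that $\beta$ and the coordinate vector selecting the identity component both solve the linear system defined by $\mathbf{\Psi}$, so full column rank forces them to coincide, after which iterated expectations transfers the conclusion to the treated group. The only cosmetic difference is that you phrase the rank step as $\mathbf{\Psi}(\beta - e_j) = \mathbf{0}$ implying $\beta = e_j$ (trivial null space), while the paper phrases it as uniqueness of the solution to $\mathbf{\Psi}\beta = \big[\E[Y_{t^*_1-1}(0) \mid G=g]\big]_{g \in \bar{\mathcal{G}}}$; these are the same linear-algebra fact, and your explicit appeal to lagged outcome unconfoundedness to justify using the same $\beta$ for the treated group is a point the paper leaves implicit.
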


\begin{proof}
    Without loss of generality, assume that the first element of $\psi(y)$ is the identity function, i.e., $\psi_1(y) = y$.  Next, notice that time homogeneity implies that
    \begin{align*}
        \mathbf{\Psi} \beta &= \begin{bmatrix} \E[Y_{t^*_1-1}(0) \mid G=g] \end{bmatrix}_{g \in \bar{\mathcal{G}}}
    \end{align*}
    Since $\mathbf{\Psi}$ has full column rank, this system has a unique solution for $\beta$.  Moreover, notice that
    \begin{align*}
        \mathbf{\Psi} \begin{pmatrix} 1 \\ 0 \\ \vdots \\ 0 \end{pmatrix} &= \begin{bmatrix} \E[Y_{t^*_1-1}(0) \mid G=g] & \E[\psi_2(Y_{t^*_1-1}(0)) \mid G=g] & \cdots & \E[\psi_J(Y_{t^*_1-1}(0)) \mid G=g] \end{bmatrix}_{g \in \bar{\mathcal{G}}} \begin{pmatrix} 1 \\ 0 \\ \vdots \\ 0 \end{pmatrix} \\
        &= \begin{bmatrix} \E[Y_{t^*_1-1}(0) \mid G=g] \end{bmatrix}_{g \in \bar{\mathcal{G}}}
    \end{align*}
    This implies that the unique solution to the system is $\beta = \begin{pmatrix} 1 & 0 & \cdots & 0 \end{pmatrix}'$, and, hence,
    \begin{align*}
        \E[Y_{t^*_1}(0) \mid Y_{t^*_1-1}(0), G=g] &= \psi\big(Y_{t^*_1-1}(0)\big)'\beta = Y_{t^*_1-1}(0)
    \end{align*}
    for all $g \in \bar{\mathcal{G}}$.  Therefore,
    \begin{align*}
        \E[Y_{t^*_1}(0) \mid G=1] &= \E\big[ \E[Y_{t^*_1}(0) \mid Y_{t^*_1-1}(0), G=1] \mid G=1 \big] \\
        &= \E\big[ Y_{t^*_1-1}(0) \mid G=1 \big]
    \end{align*}
    which implies that $\tau^{\text{TH}}_{t^*_1} = \ATT_{t^*_1}$.
\end{proof}

The takeaway is that (i) in general, mean stability for multiple comparison groups does not guarantee mean stability for the treated group under lagged outcome unconfoundedness, but (ii) under a flexible parametric model for the transition function $\E[Y_{t^*_1}(0)\mid Y_{t^*_1-1}(0)]$ can restore mean stability for the treated group, and more comparison groups allow for more flexibility in $\E[Y_{t^*_1}(0)\mid Y_{t^*_1-1}(0)]$ while still delivering mean stability for the treated group.

\subsection{Estimation and Inference Proofs}

\begin{lemma}[Distance convergence rate] \label{lem:dim-error-rate}
    Under \Cref{ass:dist-if},
    \begin{align*}
        \max_{g\in\mathcal{G}} \big|\widehat d_g - d_g\big| = O_p\!\big(n_{\mathrm{min}}^{-1/2}\big).
    \end{align*}
    where $n_{\mathrm{min}}:=\min_{g\in\mathcal{G}} n_g$.
\end{lemma}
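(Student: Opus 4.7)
The plan is to turn the assumed asymptotic linearity into a pointwise rate for each group and then take a maximum over the finite set $\mathcal{G}$. Concretely, fix $g \in \mathcal{G}$ and invoke \Cref{ass:dist-if} to write
\begin{align*}
\sqrt{n_g}\,(\widehat{\d}_g - \d_g) = \frac{1}{\sqrt{n_g}} \sum_{i=1}^n \indicator{G_i = g}\,\psi_{\d,g}(W_i) + o_p(1).
\end{align*}
The summands $\indicator{G_i=g}\psi_{\d,g}(W_i)$ are i.i.d.\ across $i$ with mean $\E[\indicator{G=g}\psi_{\d,g}(W)] = p_g \cdot \E[\psi_{\d,g}(W)\mid G=g] = 0$ and variance $p_g\,\Var(\psi_{\d,g}(W)\mid G=g) < \infty$ by \Cref{ass:est-regularity}(ii) and \Cref{ass:dist-if}.

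Next I would apply the standard CLT (or just Chebyshev) to the normalization $n^{-1/2}\sum_{i=1}^n \indicator{G_i=g}\psi_{\d,g}(W_i) = O_p(1)$, together with $n_g/n \xrightarrow{p} p_g > 0$, to conclude that the leading term on the right is $O_p(1)$, and hence $\widehat{\d}_g - \d_g = O_p(n_g^{-1/2}) = O_p(n_{\min}^{-1/2})$, since $n_g \geq n_{\min}$.

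Finally, since $\mathcal{G}$ is a finite set (the groups are discrete and finitely many under \Cref{ass:sampling}--\Cref{ass:single-treated}), the uniform bound follows from the trivial inequality
\begin{align*}
\max_{g \in \mathcal{G}} |\widehat{\d}_g - \d_g| \leq \sum_{g \in \mathcal{G}} |\widehat{\d}_g - \d_g|,
\end{align*}
and a finite sum of $O_p(n_{\min}^{-1/2})$ terms is $O_p(n_{\min}^{-1/2})$. There is no real obstacle here: the assumption already delivers the pointwise $\sqrt{n_g}$-rate, and finiteness of $\mathcal{G}$ upgrades it to a uniform rate at the slowest group, namely $n_{\min}^{-1/2}$. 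The only mild point to be careful about is distinguishing $n_g$ (the random count of units in group $g$) from $n\,p_g$; using $n_g/n \xrightarrow{p} p_g > 0$ handles this without affecting the stochastic order.
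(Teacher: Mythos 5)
Your proof is correct and follows the same route as the paper: the paper's own proof simply states that the result is an immediate consequence of \Cref{ass:dist-if} together with the finiteness of $\mathcal{G}$, which is exactly the argument you give. You have merely filled in the details the paper leaves implicit (the $O_p(1)$ bound on the linear term, the passage from $n_g$ to $n_{\min}$, and the max-over-a-finite-set step), all of which are handled correctly.
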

\begin{proof}[\textbf{Proof of \Cref{lem:dim-error-rate}}]
    This result is an immediate implication of \Cref{ass:dist-if} because $\mathcal{G}$ is a finite set.
\end{proof}

\bigskip

\begin{lemma}[No false inclusions] \label{lem:no-false-inclusions} Under \Cref{ass:sampling,ass:groups,ass:no-anticipation,ass:single-treated,ass:est-regularity,ass:est-group-separation,ass:est-bandwidth,ass:est-kernel,ass:dist-if}, if $g \notin \mathcal{G}^*$, then $g \notin \widehat{\mathcal{G}^*}$ as $n_{\mathrm{min}} \rightarrow \infty$. That is, for all $g \notin \mathcal{G}^*$, $\P\big(\K(\widehat{\d}_g/h_n) =0\big) \rightarrow 1$ as $n_{\mathrm{min}} \rightarrow \infty$, which implies $\K\big(\widehat{\d}_g/h_n\big) \overset{p}{\to} 0$ as $n_{\mathrm{min}} \rightarrow \infty$.
\end{lemma}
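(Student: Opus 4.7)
The plan is to combine the group separation assumption, the rate of convergence for the estimated distances, and the compact support of the kernel to show that any group $g \notin \mathcal{G}^*$ eventually has $\widehat{\d}_g/h_n$ falling outside the support of $\K$ with probability approaching one.

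First, I would fix an arbitrary $g \notin \mathcal{G}^*$. By Assumption \ref{ass:est-group-separation}, $\d_g \geq \d_{\min} > 0$. By Lemma \ref{lem:dim-error-rate} together with Assumption \ref{ass:dist-if}, $\widehat{\d}_g - \d_g = O_p(n_{\min}^{-1/2})$, so by the reverse triangle inequality $\widehat{\d}_g \geq \d_g - |\widehat{\d}_g - \d_g| \geq \d_{\min} - O_p(n_{\min}^{-1/2})$. Hence, for any $\epsilon \in (0, \d_{\min})$, $\P\bigl(\widehat{\d}_g \geq \d_{\min} - \epsilon\bigr) \to 1$ as $n_{\min} \to \infty$.

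Next, I would use the bandwidth condition from Assumption \ref{ass:est-bandwidth}, which gives $h_n \downarrow 0$. In particular, for all $n$ sufficiently large, $h_n < (\d_{\min} - \epsilon)$, so on the event $\{\widehat{\d}_g \geq \d_{\min} - \epsilon\}$ we have $\widehat{\d}_g / h_n > 1$. By Assumption \ref{ass:est-kernel}(ii), the kernel $\K$ has compact support on $[-1,1]$, so $\K(u) = 0$ whenever $|u| > 1$. Combining these facts,
\begin{align*}
\P\bigl(\K(\widehat{\d}_g/h_n) = 0\bigr) \geq \P\bigl(\widehat{\d}_g/h_n > 1\bigr) \geq \P\bigl(\widehat{\d}_g \geq \d_{\min} - \epsilon\bigr) - \indicator{h_n \geq \d_{\min} - \epsilon} \to 1,
\end{align*}
where the last term vanishes for $n$ large enough. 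This gives the first claim, and convergence in probability to zero of $\K(\widehat{\d}_g/h_n)$ follows immediately since $\K(\widehat{\d}_g/h_n) \geq 0$ is zero on an event of probability tending to one.

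I do not expect any substantive obstacle here: the result is essentially an exercise in combining a fixed lower bound ($\d_{\min}$), a vanishing upper bound on the estimation error ($O_p(n_{\min}^{-1/2})$), a vanishing bandwidth, and the compact support of the kernel. The only care needed is to make the ``eventually'' arguments uniform in $g$ over the finite set $\mathcal{G} \setminus \mathcal{G}^*$, which is automatic because $\mathcal{G}$ is finite, so a union bound across such $g$ preserves the conclusion.
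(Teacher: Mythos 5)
Your proposal is correct and follows essentially the same route as the paper's proof: lower-bound $\widehat{\d}_g$ by $\d_{\min}-\epsilon$ with probability tending to one via the group-separation assumption and consistency of the distance estimator, then use $h_n \downarrow 0$ to force $\widehat{\d}_g/h_n > 1$ so the compactly supported kernel vanishes. The only cosmetic differences are that you invoke the $O_p(n_{\min}^{-1/2})$ rate where simple consistency ($o_p(1)$) suffices, and you absorb the ``for $n$ large enough'' step into an indicator correction term rather than stating it directly, neither of which changes the substance.
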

\begin{proof}[\textbf{Proof of Lemma \ref{lem:no-false-inclusions}}]

    Suppose that $g \notin \mathcal{G}^*$. By \Cref{ass:est-group-separation}, $\d_g \geq \d_{\mathrm{min}} > 0$.     Take any $\epsilon \in (0,\d_{\mathrm{min}})$. %
    Notice that:
    \begin{align}
        \P(\widehat{\d}_g \geq \d_{\mathrm{min}} - \epsilon) &= 1 - \P(\widehat{\d}_g < \d_{\mathrm{min}} - \epsilon) \notag \\
        &\geq 1 - \P(\widehat{\d}_g < \d_g - \epsilon) \notag \\
        &= 1 - \P( \big| \widehat{\d}_g - \d_g \big| > \epsilon) \notag \\
        & \to 1, \; \mathrm{as} \; n_{\mathrm{min}} \rightarrow \infty \label{eqn:SA.2.1}
    \end{align}
    where the first line holds by taking the complement; the second line holds since $\d_g \geq \d_{\mathrm{min}}$; the third line holds by rearranging terms and taking the absolute value; and the last line holds by \Cref{ass:dist-if}, which implies that $\big| \widehat{\d}_g - \d_g \big| = o_p(1)$.

    Next, take the ratio $\widehat{\d}_g / h_n$. On the event $\{\widehat{\d}_g \geq \d_{\mathrm{min}} - \epsilon\}$, which occurs with probability approaching one as $n_{\min} \to \infty$ by \Cref{eqn:SA.2.1}, we have that
    \begin{align}
        \frac{\widehat{\d}_g}{h_n} \geq \frac{\d_{\mathrm{min}} - \epsilon}{h_n} > 1 \label{eqn:lem2a}
    \end{align}
    where the second inequality holds for sufficiently large $n$ since $h_n \to 0$ and $\d_{\mathrm{min}}$ and $\epsilon$ are fixed with $\d_{\mathrm{min}} > \epsilon > 0$.  Next, notice that %

    \begin{align}
        \P\Big(\K\big(\widehat{\d}_g/h_n\big) = 0\Big) &\geq \P\big( \widehat{\d}_g/h_n > 1 \big) \notag \\
        &\geq \P\big( \widehat{\d}_g/h_n \geq (\d_{\mathrm{min}} - \epsilon)/h_n \big) \notag \\
        &= \P\big( \widehat{\d}_g \geq \d_{\mathrm{min}} - \epsilon \big) \notag \\
        & \to 1, \; \mathrm{as} \; n_{\mathrm{min}} \rightarrow \infty \label{eqn:SA.2.3}
    \end{align}
    where the first line holds since $u > 1 \implies \K(u)=0$ and because $\widehat{\d}_g/h_n > 1$ by construction, the second line holds by \Cref{eqn:lem2a} for $n$ large enough, the third line holds immediately by canceling the $h_n$ terms, and the fourth line holds by \Cref{eqn:SA.2.1} above. Thus, for any $g \notin \mathcal{G}^*$, $\P\big(\K(\widehat{d}_g/h_n\big) =0) \rightarrow 1$ as $n_{\mathrm{min}} \rightarrow \infty$.

\end{proof}

\begin{lemma}[Exclusion of non-close comparison groups from $\widehat{\mathcal{G}}^*$] \label{lem:exclusion-of-non-close-groups} Under \Cref{ass:sampling,ass:groups,ass:no-anticipation,ass:single-treated,ass:est-regularity,ass:est-group-separation,ass:est-bandwidth,ass:est-kernel,ass:dist-if}, the probability that any $g \notin \mathcal{G}^*$ is incorrectly included in the estimated set of close comparison groups, $\widehat{\mathcal{G}}^*$, converges to zero as $n_{\mathrm{min}} \rightarrow \infty$. That is, $\P\big( \widehat{\mathcal{G}}^* \subseteq \mathcal{G}^* \big) \to 1$ as $n_{\mathrm{min}} \rightarrow \infty$.
\end{lemma}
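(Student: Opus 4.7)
The plan is to reduce this lemma to a finite union bound over Lemma \ref{lem:no-false-inclusions}. By the definition of the estimated set of close comparison groups, $\widehat{\mathcal{G}}^* = \big\{ g \in \mathcal{G} : \K\big(\widehat{\d}_g/h_n\big) > 0 \big\}$, so the failure event $\big\{\widehat{\mathcal{G}}^* \not\subseteq \mathcal{G}^*\big\}$ is the event that at least one $g \in \mathcal{G} \setminus \mathcal{G}^*$ receives positive kernel weight, i.e., $\K\big(\widehat{\d}_g/h_n\big) > 0$.

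First, I would write
\begin{align*}
    \P\big( \widehat{\mathcal{G}}^* \not\subseteq \mathcal{G}^* \big) = \P\Big( \bigcup_{g \notin \mathcal{G}^*} \big\{ \K(\widehat{\d}_g/h_n) > 0 \big\} \Big) \leq \sum_{g \notin \mathcal{G}^*} \P\big( \K(\widehat{\d}_g/h_n) > 0 \big),
\end{align*}
where the inequality is Boole's inequality. Because $\mathcal{G}$ is a finite set (the support of the group variable $G_i$ under \Cref{ass:sampling,ass:groups}), this is a finite sum.

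Next, I would invoke \Cref{lem:no-false-inclusions}, which asserts that for each fixed $g \notin \mathcal{G}^*$, $\P\big(\K(\widehat{\d}_g/h_n) = 0\big) \to 1$ as $n_{\mathrm{min}} \to \infty$, equivalently $\P\big(\K(\widehat{\d}_g/h_n) > 0\big) \to 0$. Since there are finitely many such $g$, each summand on the right-hand side above converges to zero, and hence the sum converges to zero. Taking complements gives $\P\big(\widehat{\mathcal{G}}^* \subseteq \mathcal{G}^*\big) \to 1$, which is the claim.

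I do not expect any serious obstacle here: the work has already been done in \Cref{lem:no-false-inclusions}, and the present lemma is essentially a statement that the pointwise (in $g$) no-false-inclusion property lifts to a uniform statement across all non-close groups. The only mild subtlety is ensuring that $|\mathcal{G}|$ is finite so that the union bound is non-vacuous, which is immediate from the setup of the paper.
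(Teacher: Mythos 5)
Your proposal is correct and follows essentially the same route as the paper's own proof: both apply Boole's inequality over the finitely many groups $g \notin \mathcal{G}^*$, invoke \Cref{lem:no-false-inclusions} for each such group to send each summand to zero, and then take complements to conclude $\P\big(\widehat{\mathcal{G}}^* \subseteq \mathcal{G}^*\big) \to 1$. The only cosmetic difference is that the paper phrases the final step via De Morgan's law on the intersection event, whereas you work directly with the failure event $\{\widehat{\mathcal{G}}^* \not\subseteq \mathcal{G}^*\}$; the arguments are identical in substance.
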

\begin{proof}[\textbf{Proof of \Cref{lem:exclusion-of-non-close-groups}}]
    Consider the probability that there exists at least some group $g \notin \mathcal{G}^*$ that receives positive weight:
    \begin{align}
        \P \left( \bigcup_{g \notin \mathcal{G}^*} \K( \widehat{\d}_g/h_n) > 0 \right) &\leq \sum_{g \notin \mathcal{G}^*} \P\big(\K(\widehat{\d}_g/h_n) > 0\big) \notag \\
        &= \sum_{g \notin \mathcal{G}^*} \Big(1 - \P\big(\K(\widehat{\d}_g/h_n) = 0\big)\Big) \notag \\
        & \to \sum_{g \notin \mathcal{G}^*} (1 - 1) \; \mathrm{as} \; n_{\mathrm{min}} \rightarrow \infty \notag \\
        &= 0 \label{eqn:SA.2.4}
    \end{align}
    where the first line holds by Boole's inequality; the second line holds by taking the complement and that $\K(\cdot)$ is non-negative; the third line holds by \Cref{eqn:SA.2.3}; and the fourth line holds trivially. That is, for all groups $g \notin \mathcal{G}^*$, the probability that non-credible groups receive zero weight converges to 1:
    \begin{align*}
        \P \left( \bigcap_{g \notin \mathcal{G}^*} \K(\widehat{\d}_g/h_n) = 0 \right) &= 1 - \P \left( \bigcup_{g \notin \mathcal{G}^*} \K(\widehat{\d}_g/h_n) > 0 \right) \to 1 \; \mathrm{as} \; n_{\mathrm{min}} \rightarrow \infty
    \end{align*}
    where the equality holds by De Morgan's law, and the convergence holds by \Cref{eqn:SA.2.4}. Hence, $ \P\big( \{ g \in \mathcal{G}^* : \K(\widehat{\d}_g / h_n) >0 \} \subseteq \mathcal{G}^*\big) = \P(\widehat{\mathcal{G}}^* \subseteq \mathcal{G}^*) \to 1$, as $n \to \infty$.%
    \end{proof}

\bigskip

\bigskip

\begin{lemma}[No false exclusions] \label{lem:no-false-exclusions}
Under \Cref{ass:sampling,ass:groups,ass:no-anticipation,ass:single-treated,ass:est-regularity,ass:est-group-separation,ass:est-bandwidth,ass:est-kernel,ass:dist-if},
for all $g \in \mathcal{G}^*$, $\K(\widehat{\d}_g/h_n) \overset{p}{\to} \K(0)$ as $n_{\mathrm{min}} \rightarrow \infty$.
Moreover, $\P\big(\K(\widehat{\d}_g/h_n) > 0\big) \to 1$ as $n_{\mathrm{min}} \to \infty$.
\end{lemma}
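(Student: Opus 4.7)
The plan is to leverage the fact that, for any $g \in \mathcal{G}^*$, the population distance $\d_g$ equals zero (by \Cref{ass:est-group-separation}), so $\widehat{\d}_g$ converges to $0$ fast enough relative to the bandwidth $h_n$ that the rescaled argument $\widehat{\d}_g/h_n$ vanishes in probability. Continuity and positivity of $\K$ at zero then deliver both conclusions.

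First, I would fix $g \in \mathcal{G}^*$ and apply \Cref{lem:dim-error-rate} together with $\d_g=0$ to obtain $\widehat{\d}_g = O_p(n_{\mathrm{min}}^{-1/2})$. Next, I would combine this with the bandwidth condition in \Cref{ass:est-bandwidth}. Writing
\begin{align*}
\frac{1}{\sqrt{n_{\mathrm{min}}}\, h_n} \;=\; h_n \cdot \frac{1}{\sqrt{n_{\mathrm{min}}}\, h_n^2},
\end{align*}
the right-hand side tends to $0$ because $h_n \downarrow 0$ and $\sqrt{n_{\mathrm{min}}}\, h_n^2 \to \infty$ by \Cref{ass:est-bandwidth}. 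Therefore
\begin{align*}
\widehat{\d}_g/h_n \;=\; O_p\!\big(n_{\mathrm{min}}^{-1/2}/h_n\big) \;=\; o_p(1).
\end{align*}

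For the first conclusion, I would then apply the continuous mapping theorem using \Cref{ass:est-kernel}(iv), which provides continuity of $\K$ in a neighborhood of $0$; this yields $\K\big(\widehat{\d}_g/h_n\big) \overset{p}{\to} \K(0)$. For the second conclusion, I would pick any $\epsilon \in (0, \K(0))$, which is possible by \Cref{ass:est-kernel}(iii), and note that
\begin{align*}
\P\big(\K(\widehat{\d}_g/h_n) > 0\big) \;\geq\; \P\big(\K(\widehat{\d}_g/h_n) > \K(0) - \epsilon \big) \;\geq\; \P\big(\big|\K(\widehat{\d}_g/h_n) - \K(0)\big| < \epsilon\big) \;\to\; 1,
\end{align*}
where the final convergence follows from the convergence in probability just established.

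The argument is mostly mechanical; the only non-trivial step is verifying that the bandwidth condition in \Cref{ass:est-bandwidth} is strong enough to push $\widehat{\d}_g/h_n$ to zero. This is where the ``undersmoothing'' flavor of $\sqrt{n_{\mathrm{min}}}\, h_n^2 \to \infty$ is used, though one could weaken it to the consistency-only condition $\sqrt{n_{\mathrm{min}}}\, h_n \to \infty$ for this lemma alone, as noted in the text following \Cref{ass:est-bandwidth}. Everything else is direct application of continuity and the positivity $\K(0) > 0$.
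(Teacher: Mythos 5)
Your proposal is correct and follows essentially the same route as the paper's own proof: use $\d_g=0$ and the $O_p(n_{\mathrm{min}}^{-1/2})$ rate to show $\widehat{\d}_g/h_n = o_p(1)$, apply the continuous mapping theorem with continuity of $\K$ at zero, and then use $\K(0)>0$ to obtain the positive-weight conclusion. Your explicit algebra showing that $1/(\sqrt{n_{\mathrm{min}}}\,h_n) = h_n \cdot \big(\sqrt{n_{\mathrm{min}}}\,h_n^2\big)^{-1} \to 0$ is a slightly more careful rendering of a step the paper merely asserts, but it is not a different argument.
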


\begin{proof}[\textbf{Proof of Lemma \ref{lem:no-false-exclusions}}]
For any $g \in \mathcal{G}^*$, by definition, $d_g = 0$. From \Cref{ass:dist-if},
this implies $\widehat{d}_g = O_p(n_{\mathrm{min}}^{-1/2})$. For bandwidth $h_n > 0$:
\begin{align*}
    \frac{\widehat{d}_g}{h_n} = \frac{O_p(n_{\mathrm{min}}^{-1/2})}{h_n}
    = O_p\left(\frac{1}{\sqrt{n_{\mathrm{min}}} h_n}\right) \overset{p}{\to} 0
\end{align*}
where convergence in probability follows from \Cref{ass:est-bandwidth}, which ensures
$\sqrt{n_{\mathrm{min}}} h_n \to \infty$.

Since $\K$ is continuous at 0 by \Cref{ass:est-kernel}, the continuous mapping theorem implies:
\begin{align*}
    \K\big(\widehat{d}_g/h_n\big) \overset{p}{\to} \K(0)
\end{align*}
which completes the first part of the proof.  Next, for any $\epsilon \in \big(0, \K(0)\big)$, where $\K(0) > 0$ by \Cref{ass:est-kernel}, it follows that
\begin{align*}
    \P\Big(\K(\widehat{d}_g/h_n) > 0\Big) \geq \P\big(\K(\widehat{d}_g/h_n) > \epsilon\big) \to 1 \text{ as } n_{\mathrm{min}} \to \infty.
\end{align*}
\end{proof}

\bigskip

\begin{lemma}[Inclusion of all close comparison groups in $\widehat{\mathcal{G}}^*$] \label{lem:inclusion-of-all-close-groups} Under \Cref{ass:sampling,ass:groups,ass:no-anticipation,ass:single-treated,ass:est-regularity,ass:est-group-separation,ass:est-bandwidth,ass:est-kernel,ass:dist-if},
the probability that any group $g \in \mathcal{G}^*$ receives positive weight converges to one as $n_{\text{min}} \rightarrow \infty$.  Thus, $\P\big( \widehat{\mathcal{G}}^* \supseteq \mathcal{G}^* \big) \to 1$ as $n_{\mathrm{min}} \rightarrow \infty$.
\end{lemma}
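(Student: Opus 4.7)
The plan is to combine \Cref{lem:no-false-exclusions}, which establishes the pointwise statement that $\P\big(\K(\widehat{\d}_g/h_n) > 0\big) \to 1$ for each individual $g \in \mathcal{G}^*$, with a finite union bound to upgrade this to a simultaneous statement over all elements of $\mathcal{G}^*$.

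First, I would rewrite the target event in a more tractable form. By construction of the estimator, $\widehat{\mathcal{G}}^* = \{g \in \mathcal{G} : \K(\widehat{\d}_g/h_n) > 0\}$, so the containment $\widehat{\mathcal{G}}^* \supseteq \mathcal{G}^*$ is equivalent to the intersection event $\bigcap_{g \in \mathcal{G}^*} \{\K(\widehat{\d}_g/h_n) > 0\}$. Taking complements and applying Boole's inequality yields
\begin{align*}
\P\big( \widehat{\mathcal{G}}^* \not\supseteq \mathcal{G}^* \big) = \P\!\left( \bigcup_{g \in \mathcal{G}^*} \{\K(\widehat{\d}_g/h_n) = 0\} \right) \leq \sum_{g \in \mathcal{G}^*} \P\big(\K(\widehat{\d}_g/h_n) = 0\big).
\end{align*}
I would then invoke \Cref{lem:no-false-exclusions} term-by-term: for each $g \in \mathcal{G}^*$, that lemma gives $\P\big(\K(\widehat{\d}_g/h_n) = 0\big) \to 0$ as $n_{\mathrm{min}} \to \infty$. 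Because $\mathcal{G}$ (and therefore $\mathcal{G}^*$) is finite, the sum of finitely many vanishing probabilities vanishes, so the right-hand side above converges to zero. Equivalently, $\P\big(\widehat{\mathcal{G}}^* \supseteq \mathcal{G}^*\big) \to 1$.

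There is no real obstacle here: the substantive work has already been done in establishing \Cref{lem:no-false-exclusions} via the bandwidth condition $\sqrt{n_{\mathrm{min}}}\, h_n \to \infty$ from \Cref{ass:est-bandwidth} together with the continuity and strict positivity of $\K$ at the origin from \Cref{ass:est-kernel}. The finiteness of $\mathcal{G}$, implicit in the setup (groups index a coarse partition of the population), is what lets the union bound preserve convergence to zero and promote the pointwise statement into a uniform-over-$\mathcal{G}^*$ statement. Combined with \Cref{lem:exclusion-of-non-close-groups}, this lemma then delivers $\P\big(\widehat{\mathcal{G}}^* = \mathcal{G}^*\big) \to 1$, completing the model-selection-consistency conclusion of \Cref{prop:model-selection-consistency}.
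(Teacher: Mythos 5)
Your proposal is correct and takes essentially the same route as the paper's own proof: both arguments express the containment $\widehat{\mathcal{G}}^* \supseteq \mathcal{G}^*$ as the intersection $\bigcap_{g \in \mathcal{G}^*} \{\K(\widehat{\d}_g/h_n) > 0\}$, pass to the complementary union, apply Boole's inequality over the finite set $\mathcal{G}^*$, and invoke \Cref{lem:no-false-exclusions} term by term. The only cosmetic difference is that you bound the probability of the bad event from above by a vanishing sum, while the paper bounds the probability of the good event from below by a quantity converging to one---the same computation written in complementary form.
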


\begin{proof}[\textbf{Proof of \Cref{lem:inclusion-of-all-close-groups}}]
    Consider the probability that all groups in $\mathcal{G}^*$ get positive weight
    \begin{align*}
        \P \left( \bigcap_{g \in \mathcal{G}^*} \K( \widehat{\d}_g/h_n) > 0 \right) &= 1 - \P \left( \bigcup_{g \in \mathcal{G}^*} \K( \widehat{\d}_g/h_n) = 0 \right) \\
        &\geq 1 - \sum_{g \in \mathcal{G}^*} \P\big(\K(\widehat{\d}_g/h_n\big) = 0) \\
        & \to 1 - \sum_{g \in \mathcal{G}^*} 0 \; \mathrm{as} \; n_{\mathrm{min}} \rightarrow \infty \\
        &= 1
    \end{align*}
    where the first line holds by De Morgan's law---taking the complement and since the kernel function is non-negative; %
    the second line holds by Boole's inequality; and the third line holds by \Cref{lem:no-false-exclusions}. This result states that the probability that all groups that belong in $\mathcal{G}^*$ are included in the estimated $\widehat{\mathcal{G}}^*$ converges to one as $n_{\mathrm{min}} \to \infty$, i.e., that $\P(\widehat{\C} \supseteq \C) \to 1$, as $n_{\text{min}} \to \infty$.
\end{proof}

\bigskip

\bigskip

\begin{proof}[\textbf{Proof of \Cref{prop:model-selection-consistency}}]
    The result holds immediately by \Cref{lem:exclusion-of-non-close-groups,lem:inclusion-of-all-close-groups}.
\end{proof}

\bigskip

\begin{proposition}\label{prop:consistent-estimator} Under \Cref{ass:sampling,ass:groups,ass:no-anticipation,ass:single-treated,ass:est-regularity,ass:est-group-separation,ass:est-bandwidth,ass:est-kernel,ass:dist-if}, our estimator of $\ATT_{t^*_1}$ is consistent, i.e.,
    \begin{align*}
        \widehat{\ATT}_{t^*_1}(h_n) \xrightarrow{p} \ATT_{t^*_1}
    \end{align*}
\end{proposition}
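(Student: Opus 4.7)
The plan is to combine the model-selection consistency result (\Cref{prop:model-selection-consistency}) with standard laws of large numbers, the continuous mapping theorem, and Slutsky's theorem, and then invoke \Cref{thm:1} to translate the probability limit of the estimator into $\ATT_{t^*_1}$. Since the estimator is essentially a sample analog of the estimand $\tau$ in \Cref{eqn:estimand} with a particular data-dependent weighting, the main content is showing that the data-dependent weights $\widehat{w}_g(h_n)$ converge in probability to a fixed weighting over the population set $\mathcal{G}^*$, and that the cell means $\bar{Y}_{g,t^*_1}$ converge to their conditional expectations.

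First I would work on the event $\mathcal{A}_n := \{\widehat{\mathcal{G}}^* = \mathcal{G}^*\}$, which by \Cref{prop:model-selection-consistency} satisfies $\P(\mathcal{A}_n)\to 1$. On $\mathcal{A}_n$, only groups $g \in \mathcal{G}^*$ contribute to the denominator of $\widehat{w}_g(h_n)$, so I can restrict the sum in the definition of the weights to $g \in \mathcal{G}^*$ without affecting the estimator asymptotically. For each such $g$, \Cref{lem:no-false-exclusions} gives $\K(\widehat{\d}_g/h_n) \xrightarrow{p} \K(0)$, while the standard LLN applied to i.i.d.\ indicators from \Cref{ass:sampling} yields $\widehat{p}_g \xrightarrow{p} p_g$, which is strictly positive by \Cref{ass:est-regularity}(ii). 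Combining these via the continuous mapping theorem and Slutsky's theorem,
\begin{align*}
    \widehat{w}_g(h_n) \xrightarrow{p} \frac{\K(0)\,p_g}{\K(0)\sum_{g' \in \mathcal{G}^*} p_{g'}} = \frac{p_g}{p_{\mathcal{G}^*}} \quad \text{for } g \in \mathcal{G}^*,
\end{align*}
and $\widehat{w}_g(h_n) \xrightarrow{p} 0$ for $g \notin \mathcal{G}^*$ (where the latter is a direct consequence of \Cref{lem:no-false-inclusions}, which gives $\K(\widehat{\d}_g/h_n) \xrightarrow{p} 0$).

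Next, I would argue that $\bar{Y}_{g,t^*_1} \xrightarrow{p} \E[Y_{t^*_1} \mid G=g]$ for every $g \in \mathcal{G}$ with $p_g > 0$. This follows from applying the LLN to the i.i.d.\ sample restricted to units with $G_i=g$: the second-moment condition \Cref{ass:est-regularity}(iii) is more than enough, and the cell sample size $n_g$ diverges in probability by LLN together with $p_g > 0$. Plugging into the estimator and using Slutsky's theorem,
\begin{align*}
    \widehat{\ATT}_{t^*_1}(h_n) \xrightarrow{p} \E[Y_{t^*_1} \mid G=1] - \sum_{g \in \mathcal{G}^*} \frac{p_g}{p_{\mathcal{G}^*}} \E[Y_{t^*_1} \mid G=g] = \E[Y_{t^*_1} \mid G=1] - \E[Y_{t^*_1} \mid G \in \mathcal{G}^*],
\end{align*}
which is precisely the estimand $\tau$ in \Cref{eqn:estimand} with weights $w(g)=\P(G=g\mid G \in \mathcal{G}^*)$. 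Under any of the three identification strategies invoked in \Cref{thm:1} (DiD, CiC, LOU), this limit equals $\ATT_{t^*_1}$, completing the proof.

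The main obstacle is a careful handling of the ratio-form weights together with the random model-selection step: the denominator of $\widehat{w}_g(h_n)$ depends on all of the kernel terms simultaneously, and even though each individual term converges, one has to rule out pathologies in which the denominator is small for a non-vanishing probability. The cleanest resolution is to work on $\mathcal{A}_n$ so that the denominator reduces to $\sum_{g' \in \mathcal{G}^*}\K(\widehat{\d}_{g'}/h_n)\widehat{p}_{g'}$, which is bounded below in probability by $\tfrac{1}{2}\K(0)p_{\mathcal{G}^*} > 0$ for large $n$ by \Cref{lem:no-false-exclusions} and the LLN. Everything else is a routine Slutsky/continuous-mapping argument combined with the already-established identification result.
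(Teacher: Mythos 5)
Your proposal is correct and follows essentially the same route as the paper's proof: both converge the kernel-weighted terms $\widehat{p}_g\,\K(\widehat{\d}_g/h_n)$ to $p_g\K(0)$ for $g \in \mathcal{G}^*$ and to $0$ otherwise (via \Cref{lem:no-false-inclusions,lem:no-false-exclusions}), apply the LLN to $\widehat{p}_g$ and $\bar{Y}_{g,t^*_1}$, and conclude by Slutsky that the estimator converges to $\E[Y_{t^*_1} \mid G=1] - \sum_{g \in \mathcal{G}^*} (p_g/p_{\mathcal{G}^*})\,\E[Y_{t^*_1} \mid G=g]$. Your two refinements---conditioning on the event $\{\widehat{\mathcal{G}}^* = \mathcal{G}^*\}$ to bound the denominator away from zero, and explicitly invoking \Cref{thm:1} to justify that the probability limit equals $\ATT_{t^*_1}$ (a step the paper's proof asserts without comment)---are sound and, if anything, make the argument slightly more complete than the published version.
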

\begin{proof}[\textbf{Proof of Proposition \ref{prop:consistent-estimator}}]
    Notice that, for each $g \in \mathcal{G}$,
    \begin{align}
        \widehat{p}_g \cdot \K\big(\widehat{d}_g/h_n\big) \overset{p}{\to}
            \begin{cases}
                p_g \cdot \K(0), & g \in \mathcal{G}^*, \\
                0, & g \notin \mathcal{G}^*,
            \end{cases} \label{eqn:SA.4.1}
    \end{align}
    and further that
    \begin{align}
        \widehat{p}_g \cdot \K\big(\widehat{d}_g/h_n\big) \cdot \bar{Y}_{g,t^*_1} \overset{p}{\to}
            \begin{cases}
                p_g \cdot \K(0) \cdot \E[Y_{t^*_1} \mid G=g], & g \in \mathcal{G}^*, \\
                0, & g \notin \mathcal{G}^*,
            \end{cases} \label{eqn:SA.4.2}
    \end{align}
    which holds by \Cref{lem:no-false-inclusions,lem:no-false-exclusions} and Slutsky's theorem.  This implies that
    \begin{align*}
        \widehat{\ATT}_{t^*_1}(h_n) &= \bar{Y}_{1,t^*_1} - \sum_{g \in \mathcal{G}} \frac{ \widehat{p}_g \cdot \K(\widehat{\d}_g/h_n) }{ \sum_{g' \in \mathcal{G}} \widehat{p}_{g'} \cdot \K(\widehat{\d}_{g'}/h_n) } \cdot \bar{Y}_{g,t^*_1} \\
        &\overset{p}{\to} \E[Y_{t^*_1} \mid G=1] - \sum_{g \in \mathcal{G}^*} \left( \frac{p_g \cdot K(0)}{ \sum_{g' \in \mathcal{G}^*} p_{g'} \cdot K(0)} \right) \cdot \E[Y_{t^*_1} \mid G=g] \\
        &= \E[Y_{t^*_1} \mid G=1] - \sum_{g \in \mathcal{G}^*} \left( \frac{p_g}{ \sum_{g' \in \mathcal{G}^*} p_{g'}} \right) \cdot \E[Y_{t^*_1} \mid G=g] \\
        &= \ATT_{t^*_1}
    \end{align*}
\end{proof}

\bigskip

\begin{lemma}[Kernel convergence] \label{lem:kernel-convergence}
    Under \Cref{ass:sampling,ass:groups,ass:no-anticipation,ass:single-treated,ass:est-regularity,ass:est-group-separation,ass:est-bandwidth,ass:est-kernel,ass:dist-if},
    \begin{align*}
    \sqrt{n_{\mathrm{min}}} \sum_{g \in \mathcal{G}} \widehat{p}_g \cdot \left( \K(\widehat{d}_g / h_n) - \K(0) \indicator{g \in \G} \right) \cdot \bar{Y}_{g,t^*_1} \xrightarrow{p} 0.
    \end{align*}
\end{lemma}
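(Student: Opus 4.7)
The plan is to split the outer sum into two pieces according to whether $g \in \mathcal{G}^*$ or $g \notin \mathcal{G}^*$, and to show that each of the finitely many summands is $o_p(1)$ after multiplication by $\sqrt{n_{\min}}$. Because $|\mathcal{G}| < \infty$, summing finitely many $o_p(1)$ terms preserves $o_p(1)$, so it suffices to handle each $g$ separately.

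For $g \notin \mathcal{G}^*$, the indicator vanishes, leaving $\sqrt{n_{\min}}\, \widehat{p}_g\, \K(\widehat{d}_g/h_n)\, \bar{Y}_{g,t^*_1}$. By \Cref{lem:no-false-inclusions}, $\P\bigl(\K(\widehat{d}_g/h_n) = 0\bigr) \to 1$, so this entire random quantity is identically zero with probability approaching one, and therefore $o_p(1)$ regardless of the $\sqrt{n_{\min}}$ factor. This takes care of every non-close group.

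For $g \in \mathcal{G}^*$ we have $d_g = 0$, so \Cref{ass:dist-if} gives $\widehat{d}_g = O_p(n_g^{-1/2}) = O_p(n_{\min}^{-1/2})$. Combined with \Cref{ass:est-bandwidth} (which implies $\sqrt{n_{\min}}\, h_n \to \infty$ and hence $\sqrt{n_{\min}}\, h_n^2 \to \infty$ a fortiori), this yields $\widehat{d}_g / h_n = o_p(1)$, so with probability approaching one $\widehat{d}_g/h_n$ lies inside the neighborhood of zero where \Cref{ass:est-kernel}(iv) provides $C^2$ smoothness of $\K$. A second-order Taylor expansion then gives
\begin{align*}
    \K(\widehat{d}_g/h_n) - \K(0) = \K'(0) \cdot (\widehat{d}_g/h_n) + \tfrac{1}{2} \K''(\xi_n) \cdot (\widehat{d}_g/h_n)^2
\end{align*}
for some $\xi_n$ between $0$ and $\widehat{d}_g/h_n$. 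The crucial step is that the symmetry in \Cref{ass:est-kernel}(i) forces $\K'(0) = 0$, killing the first-order term; continuity of $\K''$ at $0$ then bounds the quadratic remainder by $C\,(\widehat{d}_g/h_n)^2$ on the good event. Multiplying by $\sqrt{n_{\min}}$ yields
\begin{align*}
    \sqrt{n_{\min}}\,\bigl|\K(\widehat{d}_g/h_n) - \K(0)\bigr| = O_p\!\left(\frac{\sqrt{n_{\min}}}{n_{\min}\, h_n^2}\right) = O_p\!\left(\frac{1}{\sqrt{n_{\min}}\, h_n^2}\right) = o_p(1)
\end{align*}
by \Cref{ass:est-bandwidth}. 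Since $\widehat{p}_g \in [0,1]$ and $\bar{Y}_{g,t^*_1} = O_p(1)$ by \Cref{ass:est-regularity}(iii) and the law of large numbers, Slutsky's theorem gives that each close-group contribution is $o_p(1)$, completing the proof.

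The main obstacle is controlling the Taylor remainder at the advertised rate. Without the evenness of $\K$, the first-order term would contribute $O_p(1/h_n)$, which does not vanish; and even after the symmetry-based cancellation, the surviving quadratic term only vanishes under the strengthened bandwidth rate $\sqrt{n_{\min}}\, h_n^2 \to \infty$ from \Cref{ass:est-bandwidth}. This lemma is precisely where the undersmoothing-style bandwidth condition earns its keep, bridging the plug-in estimator to its oracle counterpart used in the proof of \Cref{thm:asymptotic-normality}.
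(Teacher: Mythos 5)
Your proposal is correct and follows essentially the same route as the paper's own proof: the same split into close and non-close groups, the same appeal to the no-false-inclusions lemma to make the non-close contributions exactly zero with probability approaching one, and the same symmetry-based Taylor expansion (with $\K'(0)=0$) combined with the undersmoothing condition $\sqrt{n_{\mathrm{min}}}\,h_n^2 \to \infty$ for the close groups. If anything, your use of the Lagrange form of the remainder with $\K''(\xi_n)$ and the explicit observation that $\widehat{d}_g/h_n = o_p(1)$ places the argument inside the $C^2$ neighborhood is slightly more careful than the paper's $o_p\big((\widehat{d}_g/h_n)^2\big)$ shorthand; just note that your parenthetical ``and hence $\sqrt{n_{\min}}\,h_n^2 \to \infty$ a fortiori'' inverts the implication (the assumption states the $h_n^2$ condition directly, which in turn implies the $h_n$ condition since $h_n \to 0$), though this is harmless since the assumption gives you what you use.
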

\begin{proof}[\textbf{Proof of Lemma \ref{lem:kernel-convergence}}]
        Notice that we can decompose the sum into two parts:
        \begin{align*}
        & \sum_{g \in \mathcal{G}} \widehat{p}_g \cdot \left( \K(\widehat{\d}_g / h_n) - \K(0) \indicator{g \in \mathcal{G}^*} \right) \cdot \bar{Y}_{g,t^*_1} \\ 
        & \hspace{10mm} = \underbrace{ \sum_{g \in \G} \widehat{p}_g \cdot \left( \K(\widehat{\d}_g / h_n) - \K(0) \right) \cdot \bar{Y}_{g,t^*_1} }_{\mathrm{I}_n} + \underbrace{ \sum_{g \notin \G} \widehat{p}_g \cdot \left( \K(\widehat{\d}_g / h_n) - 0 \right) \cdot \bar{Y}_{g,t^*_1} }_{\mathrm{II}_n} 
        \end{align*}
        and we continue by showing both terms are $o_p(1)$.

        For term $\mathrm{I}_n$, by \Cref{ass:est-kernel}, $\K$ is a symmetric function that is twice continuously differentiable at 0, which implies that $\K'(0)=0$. By a Taylor expansion for each $g \in \mathcal{G}^*$, we have:
        \begin{align*}
            \K\left( \widehat{\d}_g / h_n \right) &= \K(0) + \K'(0) \cdot \left(\frac{ \widehat{\d}_g }{h_n} - 0 \right) + \frac{1}{2} \K''(0) \cdot \left( \frac{ \widehat{\d}_g }{h_n} - 0 \right)^2 + o_p \left( \left( \frac{ \widehat{\d}_g }{h_n} - 0 \right)^2 \right)
        \end{align*}
        which implies
        \begin{align}
            \K\left( \widehat{\d}_g / h_n \right) - \K(0) &= \K'(0) \cdot \left( \widehat{\d}_g / h_n \right) + \frac{1}{2} \K''(0) \cdot \left( \widehat{\d}_g / h_n \right)^2 + o_p \left( (\widehat{\d}_g / h_n) ^2 \right) \notag \\
            &=  \frac{1}{2} \K''(0) \cdot \left( \widehat{\d}_g / h_n \right)^2 + o_p \left( (\widehat{\d}_g / h_n) ^2 \right) \notag \\
            &= O_p\left( (\widehat{\d}_g / h_n)^2 \right) + o_p \left( (\widehat{\d}_g / h_n) ^2 \right) \notag \\
            &= O_p\left( (\widehat{\d}_g / h_n)^2 \right) \label{eqn:SA.7.1}
        \end{align}
        where the first equality holds immediately, the second equality holds since $\K'(0)=0$, the third and fourth equalities hold by properties of $O_p(\cdot)$.

        Recall that, by \Cref{lem:dim-error-rate}, $\widehat{\d}_g = O_p(n_{\mathrm{min}}^{-1/2})$; and, by \Cref{lem:no-false-exclusions}, that $\widehat{\d}_g / h_n = O_p \left( \frac{1}{ \sqrt{n_{\mathrm{min}}} h_n} \right)$. Hence, by properties of $O_p(\cdot)$, $ (\widehat{\d}_g / h_n)^2 = O_p \left( \frac{1}{ n_{\mathrm{min}} h_n^2} \right)$. Since $(1/2) \cdot K''(0)$ is some constant, from \Cref{eqn:SA.7.1}, we have:
        \begin{align}
            \K\left( \widehat{\d}_g / h_n \right) - \K(0) &= O_p \left( \frac{1}{ n_{\mathrm{min}} h_n^2} \right) \label{eqn:SA.7.2}
        \end{align}
        Next, $|\mathcal{G}| <\infty$ and $|\G| \neq 0$ by \Cref{ass:est-regularity}. For each $g \in  \mathcal{G}$, $\widehat{p}_g = O_p(1)$ and $\Bar{Y}_{g,t^*_1} = O_p(1)$ by the WLLN. Then, this implies:
        \begin{align}
            \sum_{g \in \G} \widehat{p}_g \cdot \Bar{Y}_{g,t^*_1} \cdot \left( \K(\widehat{\d}_g / h_n) - \K(0) \right) &= \sum_{g \in \G} O_p(1) \cdot O_p(1) \cdot O_p \left( \frac{1}{ n_{\mathrm{min}} h_n^2} \right) \notag \\
            &= \sum_{g \in \G} O_p \left( \frac{1}{ n_{\mathrm{min}} h_n^2} \right) \notag \\
            &= O_p \left( \frac{1}{ n_{\mathrm{min}} h_n^2} \right) \label{eqn:SA.7.3}
        \end{align}
        where the first equality holds by \Cref{eqn:SA.7.2}; and the second and third equalities hold by properties of $O_p(\cdot)$. Lastly, by multiplication of \Cref{eqn:SA.7.3} by $\sqrt{n_{min}}$, we have:
        \begin{align*}
            \sqrt{n_{min}} \cdot \sum_{g \in \G} \widehat{p}_g \cdot \Bar{Y}_{g,t^*_1} \cdot \left( \K(\widehat{d}_g / h_n) - \K(0) \right) &= O_p \left( \frac{1}{ \sqrt{n_{\mathrm{min}}} h_n^2} \right)
        \end{align*}
        by properties of $O_p(\cdot)$. Given \Cref{ass:est-bandwidth} which states $\sqrt{n_{\mathrm{min}}} h_n^2 \to \infty$, it holds that
        \begin{align}
            \sqrt{n_{min}} \sum_{g \in \G} \widehat{p}_g \cdot \Bar{Y}_{g,t^*_1} \cdot \left( K(\widehat{d}_g / h_n) - K(0) \right) &= o_p(1) \label{eqn:SA.7.4}
        \end{align}
        as desired. 
        For term $\mathrm{II}_n$, we consider only groups $g \notin \G$. %
        \Cref{lem:no-false-inclusions} states that for each group $g \notin \G$, $\K(\widehat{\d}_g/h_n) = 0$ with probability tending to one. 
        Since the number of groups is finite, we have that the sum
        \begin{align*}
            \sum_{g \notin \G} \widehat{p}_g \cdot \Bar{Y}_{g,t^*_1} \cdot \K(\widehat{\d}_g / h_n) = 0 
        \end{align*}
        with probability tending to one. This implies: 
        \begin{align}
            \sqrt{n_{\mathrm{min}}} \sum_{g \notin \G} \widehat{p}_g \cdot \Bar{Y}_{g,t^*_1} \cdot \K(\widehat{\d}_g / h_n) = o_p(1) \label{eqn:SA.7.5}
        \end{align}
        as desired.

        Finally, the result follows from \Cref{eqn:SA.7.4} and \Cref{eqn:SA.7.5}:
        \begin{align*}
            \sqrt{n_{\mathrm{min}}} \sum_{g \in \mathcal{G}} \widehat{p}_g \cdot \Bar{Y}_{g,t^*_1} \cdot \left( \K(\widehat{\d}_g / h_n) - \K(0) \indicator{g \in \G} \right) &= o_p( 1 ).
        \end{align*}
\end{proof}

\bigskip

\begin{lemma}[Asymptotic equivalence to oracle estimator] \label{lem:asymptotic-equivalence-oracle}
    Under \Cref{ass:sampling,ass:groups,ass:no-anticipation,ass:single-treated,ass:est-regularity,ass:est-group-separation,ass:est-bandwidth,ass:est-kernel,ass:dist-if},
    \begin{align*}
        \sqrt{n_{\mathrm{min}}} \Big( \widehat{\ATT}_{t^*_1} - \oracle{\ATT}_{t^*_1} \Big) \xrightarrow{p} 0.
    \end{align*}
\end{lemma}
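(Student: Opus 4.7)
The plan is to express the difference between the two estimators as a ratio with a vanishing numerator and a bounded-away-from-zero denominator, then invoke the Kernel Convergence Lemma (\Cref{lem:kernel-convergence}). Observing that the $\bar{Y}_{1,t^*_1}$ terms cancel, I would introduce the shorthand
\begin{align*}
\hat{A} &:= \sum_{g \in \mathcal{G}} \K(\hat{\d}_g/h_n)\,\hat{p}_g\,\bar{Y}_{g,t^*_1}, & \hat{B} &:= \sum_{g \in \mathcal{G}} \K(\hat{\d}_g/h_n)\,\hat{p}_g, \\
\tilde{A} &:= \K(0) \sum_{g \in \mathcal{G}^*} \hat{p}_g\,\bar{Y}_{g,t^*_1}, & \tilde{B} &:= \K(0) \sum_{g \in \mathcal{G}^*} \hat{p}_g,
\end{align*}
so that $\widehat{\ATT}_{t^*_1}(h_n) - \oracle{\ATT}_{t^*_1} = \tilde{A}/\tilde{B} - \hat{A}/\hat{B}$.

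Next, by standard algebra,
\begin{align*}
\tilde{A}/\tilde{B} - \hat{A}/\hat{B} = \frac{\tilde{A}(\hat{B} - \tilde{B}) - (\hat{A} - \tilde{A})\tilde{B}}{\hat{B}\,\tilde{B}}.
\end{align*}
\Cref{lem:kernel-convergence} directly gives $\sqrt{n_{\mathrm{min}}}(\hat{A} - \tilde{A}) = o_p(1)$. The same argument, with $\bar{Y}_{g,t^*_1}$ replaced by the constant sequence $1$ (which is trivially $O_p(1)$, so the Taylor expansion step in the proof of \Cref{lem:kernel-convergence} applies verbatim), yields $\sqrt{n_{\mathrm{min}}}(\hat{B} - \tilde{B}) = o_p(1)$. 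By \Cref{ass:est-regularity}(iii) and the WLLN, $\tilde{A}$ and $\tilde{B}$ are both $O_p(1)$. Moreover, by \Cref{ass:est-regularity}(i)-(ii), $\tilde{B} \xrightarrow{p} \K(0)\sum_{g \in \mathcal{G}^*} p_g > 0$ (using that $\mathcal{G}^*$ is non-empty and $\K(0) > 0$), and by Slutsky combined with the above, $\hat{B}$ has the same positive probability limit. Hence $\hat{B}\,\tilde{B}$ is bounded away from zero in probability.

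Finally, multiplying through by $\sqrt{n_{\mathrm{min}}}$ and applying Slutsky's theorem, the numerator becomes $O_p(1) \cdot o_p(1) - o_p(1) \cdot O_p(1) = o_p(1)$ while the denominator is bounded away from zero, delivering $\sqrt{n_{\mathrm{min}}}(\widehat{\ATT}_{t^*_1} - \oracle{\ATT}_{t^*_1}) \xrightarrow{p} 0$. The only non-routine step is recognizing that \Cref{lem:kernel-convergence} effectively bundles two results (one for $\hat{A}$ and one for $\hat{B}$), since its proof only uses $\bar{Y}_{g,t^*_1} = O_p(1)$; this is the potential pitfall, but it is easily resolved by re-applying the same lemma with unit weights.
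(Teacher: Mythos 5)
Your proof is correct and follows essentially the same route as the paper's: the difference-of-ratios decomposition you write via $\hat{A},\hat{B},\tilde{A},\tilde{B}$ is algebraically the same add-and-subtract the paper performs (its Term I is the $\hat{A}-\tilde{A}$ piece, its Term II is the $\hat{B}-\tilde{B}$ piece), both rest on \Cref{lem:kernel-convergence} applied once with the $\bar{Y}_{g,t^*_1}$ weights and once with unit weights, and both close the argument with the denominator converging to the positive constant $\K(0)\sum_{g\in\mathcal{G}^*}p_g$. Your explicit remark that the unit-weight version of \Cref{lem:kernel-convergence} follows verbatim because its proof only uses $\bar{Y}_{g,t^*_1}=O_p(1)$ is exactly the step the paper uses implicitly when it cites that lemma inside Term II.
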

\begin{proof}[\textbf{Proof of Lemma \ref{lem:asymptotic-equivalence-oracle}}]
    Notice that
    \begin{align*}
        & \sqrt{n_{\mathrm{min}}} \Big( \widehat{\ATT}_{t^*_1}(h_n) - \oracle{\ATT}_{t^*_1}\Big) \\
        &\hspace{50pt}= - \sqrt{n_{\mathrm{min}}} \sum_{g \in \mathcal{G}} \frac{\widehat{p}_g \cdot \K\big(\widehat{\d}_g / h_n \big)}{ \sum_{g' \in \mathcal{G}}\widehat{p}_{g'} \cdot \K\big(\widehat{\d}_{g'} / h_n\big)} \cdot \bar{Y}_{g,t^*_1} + \sqrt{n_{\mathrm{min}}} \sum_{g \in \mathcal{G}^*} \frac{\widehat{p}_g}{ \sum_{g' \in \mathcal{G}^*}\widehat{p}_{g'} } \cdot \bar{Y}_{g,t^*_1} \\
        &\hspace{50pt} = -\underbrace{\sqrt{n_{\mathrm{min}}}\left( \sum_{g \in \mathcal{G}} \frac{\widehat{p}_g \cdot \Big(\K\big(\widehat{\d}_g / h_n \big) - \K(0) \indicator{g \in \mathcal{G}^*}\Big)}{ \sum_{g' \in \mathcal{G}}\widehat{p}_{g'} \cdot \K\big(\widehat{\d}_{g'} / h_n\big)} \cdot \bar{Y}_{g,t^*_1}\right)}_{\text{I}} \\
        & \hspace{62pt} + \underbrace{\sqrt{n_{\mathrm{min}}}\left( \sum_{g \in \mathcal{G}} \frac{\widehat{p}_g \cdot \K(0) \indicator{g \in \mathcal{G}^*} }{ \sum_{g' \in \mathcal{G}}\widehat{p}_{g'} \cdot \K(0) \indicator{g' \in \mathcal{G}^*}} \cdot \bar{Y}_{g,t^*_1} - \sum_{g \in \mathcal{G}} \frac{ \widehat{p}_g \cdot \K(0) \indicator{g \in \mathcal{G}^*} }{ \sum_{g' \in \mathcal{G}}\widehat{p}_{g'} \cdot \K\big(\widehat{\d}_{g'} / h_n\big)} \cdot \bar{Y}_{g,t^*_1} \right)}_{\text{II}}
    \end{align*}
    where the first equality holds by the definitions of $\widehat{\ATT}_{t^*_1}(h_n)$ and $\oracle{\ATT}_{t^*_1}$; and the second equality holds by adding and subtracting $\sqrt{n_{\mathrm{min}}} \displaystyle \sum_{g \in \mathcal{G}} \frac{\widehat{p}_g \cdot \K(0) \indicator{g \in \mathcal{G}^*} }{ \sum_{g' \in \mathcal{G}}\widehat{p}_{g'} \cdot \K\big(\widehat{\d}_{g'} / h_n\big)} \cdot \bar{Y}_{g,t^*_1}$.
    Term I converges to 0 by \Cref{lem:kernel-convergence} and the continuous mapping theorem. For Term II, notice that
    \begin{align*}
        \text{II} &= \sqrt{n_{\mathrm{min}}} \sum_{g \in \mathcal{G}} \frac{\widehat{p}_g \cdot \K(0) \cdot \indicator{g \in \mathcal{G}^*}}{\sum_{g' \in \mathcal{G}}\widehat{p}_{g'} \cdot \K(0) \indicator{g' \in \mathcal{G}^*}} \cdot \bar{Y}_{g,t^*_1} \cdot \left( 1 - \frac{\sum_{g' \in \mathcal{G}}\widehat{p}_{g'} \cdot \K(0) \indicator{g' \in \mathcal{G}^*}}{\sum_{g' \in \mathcal{G}}\widehat{p}_{g'} \cdot \K\big(\widehat{\d}_{g'} / h_n\big)} \right) \\
        &= \sqrt{n_{\mathrm{min}}} \, \oracle{\ATT}_{t^*_1} \cdot \left( \frac{\sum_{g \in \mathcal{G}}\widehat{p}_g \cdot \Big(\K\big(\widehat{\d}_g / h_n\big) - \K(0) \indicator{g \in \mathcal{G}^*}\Big)}{\sum_{g' \in \mathcal{G}}\widehat{p}_{g'} \cdot \K\big(\widehat{\d}_{g'} / h_n\big)} \right) \xrightarrow{p} 0
    \end{align*}
    where the first equality holds by rearranging terms; the second equality holds by the definition of $\oracle{\ATT}_{t^*_1}$; and the convergence to 0 holds by \Cref{lem:kernel-convergence}, the continuous mapping theorem, and the fact that the denominator converges to a positive constant, as shown in the proof of  \Cref{prop:consistent-estimator}.

\end{proof}

\begin{lemma}[Asymptotic normality of oracle estimator] \label{lem:asymptotic-normality-oracle}
    Under \Cref{ass:sampling,ass:groups,ass:no-anticipation,ass:single-treated,ass:est-regularity,ass:est-group-separation,ass:est-bandwidth,ass:est-kernel,ass:dist-if},
    \begin{align*}
        \sqrt{n}\left( \oracle{\ATT}_{t^*_1} - \ATT_{t^*_1}\right) &= \frac{1}{\sqrt{n}} \sum_{i=1}^n \psi(W_i) + o_p(1) \\
        & \xrightarrow{d} \N(0,V)
    \end{align*}
    where
    \begin{align*}
    \psi(W) &= \frac{\indicator{G = 1}}{p_1} \big(Y_{t^*_1} - \E[Y_{t^*_1} \mid G=1] \big) - \frac{\indicator{G \in \mathcal{G}^*}}{p_{\mathcal{G}^*}} \big(Y_{t^*_1} - \E[Y_{t^*_1} \mid G \in \mathcal{G}^*] \big) \\[4pt]
    \text{and } V &= \E\big[ \psi(W)^2 \big] = \frac{\Var(Y_{t^*_1} \mid G=1)}{p_1} + \frac{\Var(Y_{t^*_1} \mid G \in \mathcal{G}^*)}{p_{\mathcal{G}^*}}
    \end{align*}
\end{lemma}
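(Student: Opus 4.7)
The plan is to first rewrite the oracle estimator as a simple difference of two sample means, then linearize each via standard ratio-of-means arguments, and finally apply the CLT.

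First, I would observe the algebraic simplification $\widehat p_g \cdot \bar Y_{g,t^*_1} = \frac{1}{n}\sum_{i=1}^n \indicator{G_i=g}\, Y_{it^*_1}$, so that summing over $g \in \mathcal{G}^*$ gives $\sum_{g \in \mathcal{G}^*} \widehat p_g\, \bar Y_{g,t^*_1} = \frac{1}{n}\sum_{i=1}^n \indicator{G_i \in \mathcal{G}^*}\, Y_{it^*_1}$, and $\sum_{g \in \mathcal{G}^*} \widehat p_g = \widehat p_{\mathcal{G}^*} := \frac{1}{n}\sum_{i=1}^n \indicator{G_i \in \mathcal{G}^*}$. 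Hence the oracle estimator collapses to $\oracle{\ATT}_{t^*_1} = \bar Y_{1,t^*_1} - \bar Y_{\mathcal{G}^*,t^*_1}$, where $\bar Y_{\mathcal{G}^*,t^*_1}$ is the ordinary sample mean of $Y_{t^*_1}$ over units with $G \in \mathcal{G}^*$. On the population side, since $\mathcal{G}^* \neq \emptyset$ and $\tau_g = \ATT_{t^*_1}$ for every $g \in \mathcal{G}^*$ under any of the identification strategies covered by \Cref{thm:1}, iterated expectations gives $\E[Y_{t^*_1} \mid G \in \mathcal{G}^*] = \E[Y_{t^*_1} \mid G=1] - \ATT_{t^*_1}$, so that $\ATT_{t^*_1} = \E[Y_{t^*_1}\mid G=1] - \E[Y_{t^*_1}\mid G\in\mathcal{G}^*]$.

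Next, I would linearize each of the two sample means. Writing $\bar Y_{1,t^*_1} = \bigl(\frac{1}{n}\sum_i \indicator{G_i=1} Y_{it^*_1}\bigr)\big/\widehat p_1$ and applying the standard delta-method expansion for a ratio (using that $\widehat p_1 \xrightarrow{p} p_1 > 0$ by \Cref{ass:est-regularity}(ii) and \Cref{ass:sampling}), I get
\begin{align*}
\sqrt n\bigl(\bar Y_{1,t^*_1} - \E[Y_{t^*_1}\mid G=1]\bigr) = \frac{1}{\sqrt n}\sum_{i=1}^n \frac{\indicator{G_i=1}}{p_1}\bigl(Y_{it^*_1} - \E[Y_{t^*_1}\mid G=1]\bigr) + o_p(1).
\end{align*}
An identical argument applied to $\bar Y_{\mathcal{G}^*,t^*_1}$, using $\widehat p_{\mathcal{G}^*} \xrightarrow{p} p_{\mathcal{G}^*} > 0$, yields the analogous influence function with $G=1$ replaced by $G \in \mathcal{G}^*$. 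Subtracting the two representations gives the stated $\psi(W)$, and the $L^2$ remainder conditions needed to make the linearization rigorous follow from the second-moment condition $\E\|\mathbf Y\|^2 < \infty$ in \Cref{ass:est-regularity}(iii).

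Finally, because $\psi(W)$ has mean zero and finite variance, the classical Lindeberg--L\'evy CLT delivers $\frac{1}{\sqrt n}\sum_i \psi(W_i) \xrightarrow{d} \N(0,V)$. For the variance formula, the key observation is that $\indicator{G=1}\cdot\indicator{G\in\mathcal{G}^*}=0$ identically since $\mathcal{G}^*\subseteq\bar{\mathcal{G}}$ and $1\notin\bar{\mathcal{G}}$, so the cross-term in $\E[\psi(W)^2]$ vanishes; computing each square by iterated expectations gives $V = \Var(Y_{t^*_1}\mid G=1)/p_1 + \Var(Y_{t^*_1}\mid G\in\mathcal{G}^*)/p_{\mathcal{G}^*}$. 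There is no real obstacle here; the main thing to keep clean is the bookkeeping in the ratio-linearization and the identification of $\E[Y_{t^*_1}\mid G\in\mathcal{G}^*]$ with $\E[Y_{t^*_1}(0)\mid G=1]$ via \Cref{thm:1}.
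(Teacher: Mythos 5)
Your proposal is correct and follows essentially the same route as the paper's proof: collapse the oracle estimator to a difference of two group sample means, linearize the random denominators ($\widehat p_1$ and $\widehat p_{\mathcal{G}^*}$) to obtain the influence-function representation, and conclude by the CLT. The only differences are expository---you make explicit the algebraic collapse of the weighted form, the reliance on \Cref{thm:1} for $\ATT_{t^*_1} = \E[Y_{t^*_1}\mid G=1] - \E[Y_{t^*_1}\mid G\in\mathcal{G}^*]$, and the vanishing cross-term in the variance---all of which the paper leaves implicit.
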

\begin{proof}
    Recall that
    \begin{align*}
        \oracle{\ATT}_{t^*_1} &= \frac{1}{n_1} \sum_{i=1}^n \indicator{G_i = 1} Y_{it^*_1} - \frac{1}{n_{\mathcal{G}^*}} \sum_{i=1}^n \indicator{G_i \in \mathcal{G}^*} Y_{it^*_1}
    \end{align*}
    where $n_{\mathcal{G}^*}$ is the number of units in any group in $\mathcal{G}^*$, and this implies that
    \begin{align*}
        & \sqrt{n} \left( \oracle{\ATT}_{t^*_1} - \ATT_{t^*_1}\right) \\
        & \hspace{50pt}= \frac{n}{n_1} \cdot \frac{1}{\sqrt{n}} \sum_{i=1}^n \indicator{G_i = 1} \big(Y_{it^*_1} - \E[Y_{t^*_1} \mid G=1] \big) \\
        & \hspace{62pt} - \frac{n}{n_{\mathcal{G}^*}} \cdot \frac{1}{\sqrt{n}} \sum_{i=1}^n \indicator{G_i \in \mathcal{G}^*} \big(Y_{it^*_1} - \E[Y_{t^*_1} \mid G \in \mathcal{G}^*] \big) \\
        & \hspace{50pt}= \frac{1}{\sqrt{n}} \sum_{i=1}^n \left( \frac{\indicator{G_i = 1}}{p_1} \big(Y_{it^*_1} - \E[Y_{t^*_1} \mid G=1] \big) - \frac{\indicator{G_i \in \mathcal{G}^*}}{p_{\mathcal{G}^*}} \big(Y_{it^*_1} - \E[Y_{t^*_1} \mid G \in \mathcal{G}^*] \big) \right) + o_p(1) \\
    \end{align*}
    where the first equality holds by the definition of $\ATT_{t^*_1}$ and by multiplying and dividing by $\sqrt{n}$, and the second equality holds by the law of large numbers, continuous mapping theorem, and combining terms. The result then follows by applying the central limit theorem.
\end{proof}

\bigskip

\begin{proof}[\textbf{Proof of \Cref{thm:asymptotic-normality}}]
    The result follows immediately from \Cref{lem:asymptotic-equivalence-oracle} and \Cref{lem:asymptotic-normality-oracle}.
\end{proof}

\end{document}